\title{{\bf Log Diameter Rounds MST Verification and Sensitivity in MPC}}
\date{}
\author{
\begin{tabular}[t]{c c c c}
\textbf{Sam Coy}\thanks{Department of Computer Science, University of Warwick, UK. E-mail: S.Coy@warwick.ac.uk. Research supported in part by the Centre for Discrete Mathematics and its Applications (DIMAP) and 
by an EPSRC studentship.} & 
\textbf{Artur Czumaj}\thanks{Department of Computer Science and Centre for Discrete Mathematics and its Applications (DIMAP), University of Warwick, UK. E-mail: A.Czumaj@warwick.ac.uk. Research supported in part by the Centre for Discrete Mathematics and its Applications (DIMAP), by EPSRC award EP/V01305X/1, by a Weizmann-UK Making Connections Grant, by an IBM Award.} &
\textbf{Gopinath Mishra}\thanks{Department of Computer Science, National University of Singapore, Singapore. E-mail:  Gopinath@nus.edu.sg. The work was partly done when the author was a Postdoc at the University of Warwick and the research was supported in part by the Centre for Discrete Mathematics and its Applications (DIMAP) and by EPSRC award EP/V01305X/1.}  &
\textbf{Anish Mukherjee}\thanks{Department of Computer Science, University of Warwick, UK. E-mail: Anish.Mukherjee@warwick.ac.uk. Research supported in part by the Centre for Discrete Mathematics and its Applications (DIMAP), by EPSRC award EP/V01305X/1.} \\
{\small University of } & {\small University of } & {\small National University of } &   {\small University of } \\
{\small Warwick} & {\small Warwick}& {\small Singapore} & {\small Warwick}
\end{tabular}
}
\begin{document}
\maketitle

\begin{abstract}
We consider two natural variants of the problem of minimum spanning tree (\MST) of a graph in the parallel setting: \emph{\MST verification} (verifying if {a given tree is an \MST}) and the \emph{sensitivity analysis of an \MST} (finding the lowest cost replacement edge for each edge of the \MST). These two problems have been studied extensively for sequential algorithms and for parallel algorithms in the \PRAM model of computation. In this paper, we extend the study to the standard model of \emph{Massive Parallel Computation} (\MPC).

It is known that for graphs of diameter $D$, the connectivity problem can be solved in $O(\log D + \log\log n)$ rounds on an \MPC with \emph{low local memory} (each machine can store only $O(n^{\delta})$ words for an arbitrary constant $\delta > 0$) and with \emph{linear global memory}, that is, with \emph{optimal utilization}. {However, for the related task of finding an \MST,  we need $\Omega(\log D_{\MST})$ rounds, where $D_{\MST}$ denotes the diameter of the minimum spanning tree}. The state of the art upper bound for \MST is $O(\log n)$ rounds; the result follows by simulating existing \PRAM algorithms. While this bound may be optimal for general graphs, {the benchmark of connectivity and lower bound for \MST suggest the target bound of $O(\log D_\MST)$ rounds, or possibly $O(\log D_\MST + \log\log n)$ rounds}. As for now, we do not know if this bound is achievable for the \MST problem on an \MPC with low local memory and linear global memory. In this paper, we show that two natural variants of the \MST problem: \MST verification and sensitivity analysis of an \MST, can be completed in $O(\log D_T)$ rounds on an \MPC with low local memory and with linear global memory, that is, with optimal utilization; here $D_T$ is the diameter of the input ``candidate \MST'' $T$. The algorithms asymptotically match our lower bound, conditioned on the 1-vs-2-cycle conjecture.
\end{abstract}

\section{Introduction}
Over the last decade, the heightened need to solve problems on ever-increasing volumes of data has motivated the successful development of massively parallel computation frameworks, such as MapReduce, Hadoop, Dryad, and Spark, see, e.g., \cite{DG08, White12, ZCFSS10}.
In turn, the prevalence of these frameworks has led to a huge interest in the design of efficient parallel algorithms.

The \emph{Massively Parallel Computing (\MPC)} model is a theoretical model which abstracts these massively parallel computation frameworks.
The model was first introduced by Karloff, Suri, and Vassilvitskii \cite{KSV10}, and is now the standard theoretical model for these systems (see also, e.g., \cite{ASSWZ18,GSZ11,dp_trees_mpc}).
An \MPC is a parallel computing system with \machines \emph{machines}.
Each of these machines has \lspace words of \emph{local memory}; in total the system has $\machines \cdot \lspace = \gspace$ \emph{global memory}.
The input to an \MPC algorithm is arbitrarily distributed among the machines.
Computation in the \MPC model takes place in synchronous \emph{rounds}.
In each round, each machine does some computation on the data in its local memory.
Then, machines exchange messages: each machine may send as many $1$-word messages as it likes to any other machine, as long as in total, each machine sends and receives at most \lspace words in any given round.
The complexity of an \MPC algorithm is measured in the number of rounds.

For graph problems, our input is an edge-weighted graph $G = (V, E)$ with $n$ vertices, $m$ edges, and weighting function $w$: we assume that the weight of any edge can be expressed with $O(1)$ words. In this paper, we focus on the design of \emph{low-space} (fully scalable) \MPC algorithms, where the local memory of each machine is $\lspace = O(n^\delta)$ for some (arbitrary) constant $\delta \in (0, 1)$. Our focus is on designing \emph{algorithms with optimal utilization}, that is, those achieving linear global memory: for graph problems, this means that $\gspace = \Theta(m+n)$. The goal is to design algorithms that for a given \MPC with for local memory $\lspace = O(n^\delta)$ and with linear global memory \gspace minimizes its complexity (the number of rounds performed). Of course, optimally, we would like to achieve a constant number of rounds, but for many problems, the best we could hope for is a sub-logarithmic number of rounds.

The low-space \MPC regime, which we consider in this paper, is the most challenging of the local memory regimes usually considered in the \MPC model.
Nevertheless, it is a more powerful model than the \PRAM model: a step of \PRAM can be simulated in $O(1)$ rounds of the \MPC model 
and sorting or computing the prefix-sum of $N$ values can be done in $O(1)$ rounds on an \MPC with local memory $\lspace = O(N^\delta)$, see, e.g., \cite{GSZ11}.

In this paper, we study the complexity of important variants of the fundamental graph problem of \emph{Minimum Spanning Tree (\MST)}. The \MST problem has been extensively studied for many decades. After many years of extensive research, we know the problem can be solved by a randomized algorithm in $O(m)$-time \cite{KKT95}, though we still do not know the exact deterministic complexity of the \MST problem (see, e.g., \cite{Chazelle00,PR02}). Similar efforts have been also taking place for parallel algorithms, largely focusing on \PRAM algorithms, where $O(\log n)$-time algorithms can be achieved with almost optimal total work. Given the importance of the \MST problem, there has been also extensive research studying its variations, most prominently the natural variants of \emph{\MST verification} and \emph{\MST sensitivity} problems, see, e.g., \cite{DT97,DRT92,King97,KPRS97,Komlos84,Pettie15}.

\begin{definition}[\textbf{\MST Verification}]
\label{def:mst-verification}
Given a graph $G=(V, E)$ with edges weighted by a function $w: E \rightarrow \mathbb{N}$, and a tree $T \subseteq E$, the goal is to determine whether $T$ is an \MST of $G$.
\end{definition}

\begin{definition}[\textbf{\MST Sensitivity}]
\label{def:mst_sensitivity}
The \MST sensitivity problem is for a given graph $G=(V, E)$ and an \MST $T \subseteq E$, to calculate a \emph{sensitivity} value $\sens(e)$ for all $e \in E$. We define $\sens(e)$ as:
\begin{itemize}
\item for $e \in T$: the amount by which $w(e)$ would have to be increased (with all other edges remaining the same weight) for $e$ to no longer be part of an \MST of $G$;
\item for $e \in E \setminus T$: the amount by which $w(e)$ would have to be decreased (with all other edges remaining the same weight) for $e$ to be part of an \MST of $G$.
\end{itemize}
\end{definition}

The \MST problem has been studied also in the \MPC model, extending the earlier research for the sequential model and the \PRAM.
For super-linear local space ($\lspace = O(n^{1+\delta})$) and linear local space ($\lspace = O(n)$), 
constant-round algorithms have been developed \cite{LMSV11,JN18,Nowicki21} with optimal utilization. However, for the setting of low-space \MPC, which is the main focus of our paper, the situation is fundamentally different. Let $\dmst$ denote the smallest diameter of a minimum spanning tree of $G$. We do not expect a constant-rounds \MPC algorithm to exist and while there is an $O(\log \dmst)$ round algorithm (optimal according to a conditional lower bound) if we are allowed polynomial extra global space \cite{ASSWZ18,CoyC23,behnezhad_connectivity}, the best result known with optimal global memory and utilization runs in $O(\log n)$ rounds (e.g., by a simulation of \PRAM algorithms). This is quite a large gap, especially for low-diameter network topologies which are common in real-world settings, and whether this gap can be closed is a major open problem in the area of \MPC algorithms. In this paper, similarly as it has been done before for sequential and \PRAM models, we study two natural variants of the minimum spanning tree (\MST) problem in the \MPC model: the \MST verification and \MST sensitivity problems.

\subsection{Our Results}

We demonstrate that while we do not know whether the \MST problem can be solved in an optimal number of rounds on an \MPC in the low-memory regime and with optimal utilization, the \MST verification and \MST sensitivity problems can be solved optimally.\footnote{Throughout this paper, by \emph{high probability} we mean the result is correct with probability at least $1-1/\poly(n)$.}

Our first main result is an algorithm for the \MST verification.

\showresult{\Cref{thm:mst_verification}}{\textbf{\MST Verification Algorithm}}{
Let $\delta$ be an arbitrarily small positive constant. Let $G = (V, E)$ be an edge-weighted input graph and let $T \subseteq E$ be a tree. Suppose the diameter of $T$ is $\dt$.

One can decide with high probability whether $T$ is an \MST of $G$ in $O(\log \dt)$ rounds on an \MPC with local memory $\lspace = O(n^\delta)$ and optimal global memory $\gspace = O(m+n)$.
}

We extend our verification algorithm to the more challenging problem of \MST sensitivity, still obtaining a $O(\log \dt)$ round algorithm that uses optimal global memory:

\showresult{\Cref{thm:mst_sensitivity}}{\textbf{\MST Sensitivity Algorithm}}{
Let $\delta$ be an arbitrarily small positive constant. Let $G = (V, E)$ be an edge-weighted input graph and let $T \subseteq E$ be an \MST of $G$ with diameter $\dt$.

The problem of \MST sensitivity can be solved in $O(\log \dt)$ rounds (with high probability) on an \MPC with local memory $\lspace = O(n^\delta)$ and optimal global memory $\gspace = O(m+n)$.
}

Finally, we complement \Cref{thm:mst_verification} by showing that this result is optimal: conditioned on the widely believed 1-vs-2-cycle conjecture, $\Omega(\log \dt)$ rounds are required for \MST verification if our input $T$ is an arbitrary set of $n-1$ edges of $G$.

\showresult{\Cref{thm:verification_lb}}{\textbf{\MST Verification Lower Bound}}{
Let $\delta < 1$ be an arbitrary positive constant. Let $G = (V, E)$ be an edge-weighted graph of diameter $D$.
Let $T \subseteq E$ be a set of $n-1$ edges of $G$ and let $\dt$ be the diameter of $T$.

Deciding whether $T$ is a minimum spanning forest of $G$ requires $\Omega(\log \dt)$ rounds, on an \MPC with local memory $\lspace = O(n^\delta)$, unless the 1-vs-2-cycle conjecture is false.
}

We remark that our algorithms can also deal with the problems above even when the input graph $G$ is not connected, in which case our goal is to study the variants of the \emph{minimum spanning forest (MSF)} problem. All our algorithms can be easily extended to this more general setting since it is known that finding the connected components in a forest can be solved optimally in the sublinear-memory \MPC model; see also \Cref{rem:extend_to_forests}.

%
%
\subsection{Technical Challenges and Contributions}
To date, the problem of finding an \MST in low-space \MPC has been resistant to efficient algorithms with optimal global memory. Our results on the related problems of \MST verification and sensitivity overcome several interesting technical challenges.
Take \MST verification for example: a naive approach to the problem would involve collecting for each vertex $u$, a path from $u$ to the root of $T$.
Then, we can use prefix-sum in $O(1)$ rounds to, for each ancestor $v$ of $u$, find the largest edge on a path from $u$ to $v$ in $T$.
Then for all non-tree edges $\{u, v\}$, where $v$ is an ancestor of $u$ we can check (using sorting) whether the weight of the non-tree edge is smaller than the largest edge on the path from $u$ to $v$ in $T$.
Doing this for all edges in parallel is the same as checking whether $T$ is an \MST.

The problem with this approach, however, is that it requires $O(\dt)$ memory per vertex, and therefore $O(m + n\cdot \dt)$ global memory; in the worst case, $O(n^2)$ global memory, which is extremely undesirable, particularly for sparse graphs. 
To get around this problem, we use an approach of \emph{hierarchical clustering} (\Cref{sec:clusters}).
Initially, each vertex of the graph is a cluster; for $O(\log \dt)$ iterations, we contract an independent set of edges in the tree to form larger and larger clusters.
Each iteration reduces the number of clusters by a constant factor (by a result of \cite{behnezhad_connectivity,CoyC23}); after $O(\log \dt)$ rounds, the number of clusters has been reduced by a factor of $\poly(\dt)$. 
Then, we have enough global memory per cluster to execute our more naive approach, after which we can (if necessary) ``un-contract'' the clusters and obtain a result for the original graph.

This approach comes with some challenges, however.
In particular, we have to be very careful when performing cluster contraction to preserve enough information to solve the problem.
For example, for \MST verification, for each cluster $c$, we must preserve the highest weight of an edge on all paths from vertices below $c$ to vertices above it.
For \MST sensitivity (\Cref{sec:sensitivity}) and all-edges \LCA (\Cref{sec:all_edges_lca}) it is significantly more challenging to preserve the necessary information.
We note that our hierarchical clustering is different from a recent clustering employed by \cite{dp_trees_mpc} in a paper on dynamic programming in trees (in particular, our clusters are all of height $2$, relative to previous clusters).
We believe that our approach, using clusters to ``compress'' an input tree to allow for greater space-flexibility, could have further applications for efficient algorithms involving trees in low-space \MPC.

\subsection{Related Work}
The \MST problem is a fundamental textbook graph optimization problem that has been extensively studied for many decades (see \cite{GH85} for the historical background of the research before 1985). We know the problem can be solved sequentially by a randomized algorithm in $O(m)$-time \cite{KKT95}, but its deterministic complexity still has some small gap (see though \cite{PR02}), with the fastest algorithm to date \cite{Chazelle00} achieves the running time of $m\alpha(m,n)$, where $\alpha$ is a certain natural inverse of Ackermann's function. This research has been extended to the \PRAM model, where, e.g., Cole \etal \cite{CKT96} gave an optimal $O(\log n)$-time randomized algorithm for the \MST problem.

The problem of verifying if a given spanning tree in an edge-weighted graph is an \MST is very closely related to the \MST problem, but unlike the latter, we understand its sequential and \PRAM complexity very well. Following an early work of Komlos \cite{Komlos84}, Dixon \etal \cite{DRT92} and King \cite{King97} developed deterministic sequential linear-time algorithms for the problem. In parallel setting, Dixon and Tarjan \cite{DT97} gave an optimal $O(\log n)$-time CREW algorithm which was later extended to the EREW \PRAM model \cite{KPRS97}.

For the \MST sensitivity problem, the classical algorithm of Tarjan \cite{Tarjan82} solves it sequentially in $O(m \alpha(m,n))$ time. Dixon \etal \cite{DRT92} designed two \MST sensitivity algorithms, one running in expected linear time and another which is deterministic and provably optimal, but whose complexity is  bounded by $O(m \alpha(m,n))$. Pettie \cite{Pettie15} improved this bound and designed a new \MST sensitivity analysis algorithm running deterministically in $O(m \log\alpha(m,n))$ time.

In the last few years, there has been a series of work extending the study of the \MST problem to the \MPC model.
For the setting of ``super-linear'' \MPC (where each machine has $\lspace = O(n^{1+\delta})$ local memory), Lattanzi \etal \cite{LMSV11} gave a $O(1)$ round algorithm for \MST.

For the setting of linear \MPC, where each machine has $\lspace = O(n)$ memory, a line of work culminated also in $O(1)$ round algorithms for the problem, first randomized \cite{JN18} and then deterministic \cite{Nowicki21}.
Once again, these algorithms relied heavily on the use of connectivity as a subroutine, and these papers also gave randomized and deterministic connectivity algorithms.

For low-space \MPC, which is the focus of this paper, results for \MST and related problems have been limited.
A long and celebrated line of work on connectivity culminated in a $O(\log D + \log \log n)$ round deterministic algorithm for the problem \cite{ASSWZ18, behnezhad_connectivity, CoyC23}, and there is a (conditional) $\Omega(\log D)$ round lower bound \cite{behnezhad_connectivity, CoyC23}. {A recent discovery reveals an optimal $O(\log D)$ rounds algorithm when the input graph is a forest \cite{BLMOU23}.} For \MST, while there is an algorithm which takes $O(\log \dmst)$ rounds \cite{ASSWZ18, CoyC23}, it requires polynomially extra ($\gspace = O(m^{1 + \gamma})$ for any positive constant $\gamma$) global memory, and hence it does not achieve the optimal utilization.
It recursively partitions the input instance into $O(m^{\gamma/2})$ sub-instances of \MST, each of which also has (proportionally) polynomial extra global memory.
At each level of the recursion, computing the connected components of each sub-instance is required: with extra global memory, this takes $O(\log \dmst)$ rounds.
Since the algorithm uses $O(1)$ levels of recursion and each level requires a series of connectivity algorithms with extra global memory to be run in parallel (taking $O(\log \dmst)$ rounds), the algorithm takes $O(\log \dmst)$ rounds in total.
The algorithm can be implemented to work with optimal global memory, but then the number of rounds degrades to $\Omega(\log n)$, since the recursion can branch by at most a constant factor. For global space $\gspace = O(m^{1 + \gamma})$ for $\gamma = o(1)$ and $\gamma = \Omega(\frac{1}{\log n})$, one can find \MST in $O(\frac1{\gamma} \min\{\log\dmst \log(\frac1{\gamma}), \log n\})$ rounds on an \MPC with local memory $\lspace = O(n^\delta)$ (see \cite[Theorem I.7]{ASSWZ18}).
In particular, if one wants to focus on the regime with global memory at most $(n+m) \polylog(n)$ (i.e., $\gamma = O(\frac{\log\log n}{\log n})$), then the number of rounds is $\log n \cdot \min\{\log\dmst, \frac{\log n}{\log\log n}\}$.
Observe that except when the global memory is large, we do not know of any \MPC algorithm (with sublinear local memory) that solves \MST exactly in $o(\log n)$ rounds, even for graphs with small diameter. The approaches used in \cite{ASSWZ18} and other related works \cite{andoni-biconnectivity, BLMOU23, behnezhad_connectivity, CoyC23} that lead to $O(\log D + \log\log n)$-rounds \MPC connectivity algorithms seem unsuitable to obtain even $O(\log n)$-rounds algorithms for \MST. (Still, notice that an optimal $O(\log n)$-rounds \MPC algorithm can be obtained by simulating \PRAM algorithms, e.g., from \cite{CKT96}.)

Similarly to the connectivity conditional lower bound of $\Omega(\log D)$
, there is also a conditional $\Omega(\log \dmst)$ round lower bound for finding an \MST \cite{CoyC23}, which, for the sake of completeness, we sketch in \Cref{app:mst_lower_bound}.

\color{black}
For low-diameter graphs the gap between $O(\log \dmst)$ and $O(\log n)$ is potentially huge, and it is not clear how to even obtain (say) an $O(\dmst)$ round algorithm for the problem. Whether the $\Omega(\log n)$ round barrier can be broken with optimal global memory is a major open question in the area.
\section{Preliminaries and Tools}
In this section, we present some notation and preliminaries, along with two important technical tools: a hierarchical clustering approach, which transforms a tree $T$ into a set of $\frac{n}{\poly \dt}$ clusters; as well as a procedure for finding the lowest common ancestor in the tree of the endpoints of each non-tree edge.

In a rooted tree, let $p(v)$ denote the \emph{parent} of $v$; for the root of the tree $r$, $p(r) = r$. Further, let $p_i(v)$ denote the $i$th ancestor of $v$, i.e. $p_1(v) = p(v)$ and $p_i(v) = p(p_{i-1}(v))$.

We introduce a  notion of a non-tree edge \emph{covering} a tree edge:

\begin{definition}
\label{def:covering}
    We say that a non-tree edge $e$ \emph{covers} a tree edge $t$ if $t$ is one of the edges on the path in $T$ between the endpoints of $e$.
\end{definition}

Observe that the problem of \MST verification is the same as asking: is any tree edge $t$ covered by an edge $e$ with lower weight?

We remark that, without loss of generality, we can assume that our input tree $T$ is a rooted spanning tree:
\begin{remark}
\label{rem:input_is_spanning_tree}
  Our \MST verification  and \MST sensitivity algorithms can assume that $T$ is a rooted spanning tree of $G$.
\end{remark}
\begin{proof}
    We can check whether the given tree $T$ is a spanning tree or not by just counting whether the number of vertices in $T$ is $n$. We can  identify the root of the tree in $O(\log \dt)$ rounds (deterministically, and using optimal global memory) by using tree rooting algorithm of \cite{BLMOU23}.
\end{proof}

Furthermore, in the following, we remark that both algorithms can assume knowledge of $D_T$, the diameter of tree $T$.
\begin{remark}
\label{rem:val-D-T}
    Our \MST verification  and \MST sensitivity algorithms  can  assume the value of $D_T$.
\end{remark}
\begin{proof}
 Recall the connectivity algorithm on forest by \cite{BLMOU23}. The algorithm takes $O(\log D)$ rounds, where $D$ is the diameter of the forest (i.e, the diameter of any component of the forest). Moreover, the algorithm can determine a value $\widehat{D}$ such that $D \leq \widehat{D}\leq 2D$. So, by applying the algorithm of \cite{BLMOU23} on input $T$, we get a 2-approximation to $D_T$ (which is good enough for our \MST verification algorithm  and \MST sensitivity algorithm).    
\end{proof}
 Finally, we remark that all of our algorithms given here can be extended to forests:
\begin{remark}
\label{rem:extend_to_forests}
  Our \MST verification  and \MST sensitivity algorithms can be easily extended to the case where $G$ is disconnected and the input tree $T$ is a forest.
\end{remark}
\begin{proof}
{First, we solve connectivity on forest $T$ by using the algorithm in \cite{BLMOU23}}. The algorithm takes $O(\log D_T)$ rounds. Then, we can partition the edges of $G$ by connected component (all vertices know their connected component since they know which tree in $T$ they are in).
    Then, since they are independent, it suffices to run in parallel our \MST verification or sensitivity algorithm for each component.
\end{proof}

\subsection{Clusters}
\label{sec:clusters}

\begin{figure}[t]
\label{fig:clusters}
    \begin{subfigure}[t]{0.4\textwidth}
        \centering
        \includegraphics[width=0.8\textwidth,page=1]{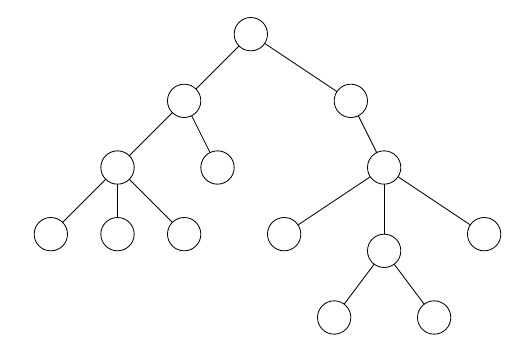}
        \caption{An input tree $T$. Each vertex is its own cluster.}
    \end{subfigure}
    \hfill
    \begin{subfigure}[t]{0.4\textwidth}
        \centering
        \includegraphics[width=0.8\textwidth,page=2]{figures/clusters}
        \caption{$T$ after one cluster contraction step, giving a set of clusters $C_1$. \textcolor{orange}{Orange} polygons denote the \textcolor{orange}{(level-$1$) clusters} that formed.}
    \end{subfigure}
    \vskip\baselineskip
    \begin{subfigure}[t]{0.4\textwidth}
        \centering
        \includegraphics[width=0.8\textwidth,page=3]{figures/clusters}
        \caption{The cluster-tree after one contraction step, which we denote as $T_{C_1}$.}
    \end{subfigure}
    \hfill
    \begin{subfigure}[t]{0.4\textwidth}
        \centering
        \includegraphics[width=0.8\textwidth,page=4]{figures/clusters}
        \caption{A second cluster contraction step, forming some \textcolor{blue}{level-$2$ clusters}. There are four clusters after this step.}
    \end{subfigure}
    \caption{A depiction of our hierarchical clustering.}
\end{figure}

We now introduce our notion of \emph{clusters}, which we use to ``compress'' our input tree $T$ so that we can run algorithms that require extra global memory per vertex.
We give a depiction of our clustering approach in \Cref{fig:clusters}, and they are formally defined as follows:

\begin{definition}[Clusters]
\label{def:clusters}
    A \emph{cluster} is a set of vertices whose induced subgraph in $T$ is connected.
    The \emph{leader} of a cluster is the root of the subtree induced by the vertices in it.
    By $\ell(v)$ we denote the leader of the cluster containing $v$ (and so $\ell(u) = \ell(v)$ iff $u$ and $v$ are in the same cluster).

    We denote the set of clusters at some point in time as $C$.
    Initially, each vertex forms a singleton cluster: i.e.,~$\ell(v) = v$.
    We maintain as an invariant that the union of the clusters at any time is $V$; i.e.,~the vertices are partitioned into clusters.

    Two clusters $c_1$ and $c_2$ are connected if some $u \in c_1$ and $v \in c_2$ are connected by an edge in $T$.
    Note that there can be at most one such edge, for each pair of clusters.
    
    If $T$ is rooted, the graph on clusters forms a rooted \emph{cluster-tree} $T_C$, with ancestor/descendant relationships being defined based on the cluster leaders.

    We say that a vertex $v$ is a \emph{leaf} of a cluster $c$ if it is either a leaf of $T$, or the parent in $T$ of a root of a different cluster.
\end{definition}

Next, we introduce \emph{cluster contraction}, the process by which clusters are created.

\begin{definition}[Cluster Contraction, Sub-Clusters]
\label{def:cluster_contraction}
    Let $c \in C$ be a cluster and let $c_1 \dots c_k$ be some non-empty subset of its child clusters in $T_C$.
    
    When we \emph{contract} $c$, we make a new cluster $c'$ comprised of all vertices in $c, c_1, \dots c_k$.
    The leader of $c'$ will be the leader of $c$.
    We call $c, c_1 \dots, c_k$ the \emph{sub-clusters} of $c'$: in particular we call $c$ the \emph{senior sub-cluster} and $c_1 \dots c_k$ the \emph{junior sub-clusters}.
\end{definition}

\begin{definition}[Contraction Step]
\label{def:contraction_step}
    During our algorithm, we will perform the process described in \Cref{def:cluster_contraction} in parallel for a subset of clusters, ensuring that for all $c_1, c_2, c_3 \in C$, we do \emph{not} simultaneously contract both $c_1$ into $c_2$ and $c_2$ into $c_3$.

    This parallel contraction of a subset of clusters is called a \emph{contraction step}.
\end{definition}

We recall that, as an immediate consequence of \cite{CoyC23} (derandomizing a result of \cite{behnezhad_connectivity}), given a rooted tree, we can contract a set of edges in $O(1)$ rounds to reduce the number of vertices by a constant fraction:

\begin{lemma}[{\cite[Theorem~3.3]{CoyC23}}]
\label{lem:reduce_constant_fraction}
    Given a rooted tree $T$ with $n$ vertices, there is a deterministic, $O(1)$-round \MPC algorithm which contracts a set of edges, reducing the number of vertices in $T$ to $0.99n$.
    
    The algorithm uses $\lspace = O(n^\delta)$ local memory (for some constant $\delta$), and $\gspace = O(m+n)$ global memory.
\end{lemma}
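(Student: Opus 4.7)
The plan is to exhibit a deterministic $O(1)$-round \MPC procedure that identifies a set $F$ of tree edges with $|F| \geq n/100$ such that the connected components of $(V,F)$ have diameter $O(1)$; the diameter requirement is crucial because computing each vertex's cluster leader (the root of its contracted component) in $O(1)$ rounds is only possible when components are locally small, as otherwise leader-assignment would itself require pointer jumping along long paths.

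I would begin with a structural dichotomy on $T$. Partition the vertices by the number of children they have in $T$: leaves $L$ (zero children), chains $C$ (one child), and branches $B$ (at least two children). A counting argument using $\sum_{v} (c(v)-1) = -1$ together with each branch contributing at least $1$ to this sum yields $|L| \geq |B|+1$, and combined with $|L|+|C|+|B|=n$ this forces either $|L| \geq n/4$ or $|C| \geq n/2$. In the leaf-heavy case, for every non-leaf vertex $v$ that has at least one leaf child, form a star-cluster containing $v$ together with all its leaf children and put all those edges into $F$; the leader assignment is implemented by each leaf sending a single message to its parent, every resulting cluster has diameter at most $2$, and $|F| = |L| \geq n/4 \geq n/100$.

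The chain-heavy case is the technical core: chain vertices form maximal paths in $T$, and we must choose $\Omega(|C|)$ chain edges for $F$ while keeping $(V,F)$'s components of $O(1)$ diameter, i.e., $F$ should be close to a matching along these paths. A randomised baseline assigns each vertex an independent uniform bit and selects the edge $(v,p(v))$ whenever $v$'s bit is $0$ and $p(v)$'s bit is $1$; any two selected edges would force a shared vertex to take both bit values, so $F$ is automatically a matching, each chain edge is selected with probability $1/4$, and $\mathbb{E}[|F|] \geq |C|/8 \geq n/16$. To derandomise, I would replace the independent bits by a pairwise-independent family described by an $O(\log n)$-bit seed (pairwise independence suffices for the expectation), enumerate its $\poly(n)$ seeds in parallel across machines, and select the best via a global $\max$-aggregation. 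The main obstacle, and the point where the argument of \cite{CoyC23} is really needed, is making this derandomisation compatible with optimal global memory $O(m+n)$: naively storing one evaluated assignment per seed is super-linear, so one instead exploits that the quality of a seed decomposes as a sum of local edge-contributions and applies a method-of-conditional-expectations style aggregation that locates a good seed in $O(1)$ rounds without materialising all candidate assignments simultaneously.
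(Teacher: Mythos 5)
The paper does not prove this lemma at all: it is imported verbatim as \cite[Theorem~3.3]{CoyC23}, so there is no in-paper argument to compare against. Judged on its own terms, your reconstruction is sound in outline but takes a more roundabout route than the cited works, and it has two loose ends worth flagging. First, the leaf/chain dichotomy is unnecessary: the very bit-rule you describe (contract $\{v,p(v)\}$ when $v$ gets $0$ and $p(v)$ gets $1$), applied to \emph{all} tree edges, already yields valid clusters. The components are then not a matching but stars --- a bit-$1$ parent absorbing all of its bit-$0$ children --- which still have diameter at most $2$, still satisfy \Cref{def:contraction_step} (the star center is never itself contracted upward, since that would require its bit to be $0$), and contract an expected $(n-1)/4$ edges. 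This is essentially the random-mate contraction that \cite{behnezhad_connectivity} uses and \cite{CoyC23} derandomizes, and it makes your structural case analysis redundant. Second, and related: your justification ``any two selected edges would force a shared vertex to take both bit values'' is false as a blanket statement --- a parent with bit $1$ and two children with bit $0$ yields two selected edges sharing the parent. It only holds because you restrict to edges whose parent endpoint is a chain vertex (hence has a unique child); you should say this explicitly, since as written the sentence reads as a general fact about the selection rule. (The constant $|C|/8$ is also off --- with $|C|$ candidate edges each selected with probability $1/4$ you get $|C|/4$ --- but this is harmless for the $n/100$ target.)

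The genuinely hard part of the lemma is the one you only gesture at: carrying out the method-of-conditional-expectations search over the pairwise-independent seed in $O(1)$ rounds \emph{and} within $\gspace = O(m+n)$, since the naive block-by-block evaluation of $n^{\Theta(\delta)}$ seed prefixes per round inflates global memory to $n^{1+\Theta(\delta)}$. That difficulty is precisely the content of \cite[Theorem~3.3]{CoyC23}, so deferring it is defensible here, but you should be aware that your write-up reconstructs the easy (randomized) half of the statement and cites away the part that the lemma is actually invoked for.
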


We note that given a cluster-tree, \Cref{lem:reduce_constant_fraction} performs a contraction step as in \Cref{def:contraction_step}.

Next, we formalize the notion of a hierarchical clustering, a series of contraction steps which we perform in our \LCA, \MST verification, and \MST sensitivity algorithms.

\begin{definition}[Hierarchical Clustering]
\label{def:hierarchical_clustering}
    When we perform $k$ contraction steps according to \Cref{lem:reduce_constant_fraction}, we call clusters formed during the $i$-th $(i \in [k])$ step \emph{level-$i$ clusters}.

    We denote the set of all clusters present after contraction step $i$ as $C_i$ (note that not all of these clusters will necessarily be level-$i$ clusters). 
    The singleton clusters at the beginning of the algorithm are all level-$0$ clusters.

    We denote the cluster in $C_i$ which contains $v$ as $c_i(v)$; the leader of this cluster is denoted as $\ell_i(v)$.
\end{definition}

Finally, we observe that there are a linear number of clusters in total, and the total size of all cluster-trees is linear.

\begin{observation}[Size of Hierarchical Clustering]
\label{obs:hierarchical_clustering_size}
    In a hierarchical clustering of the graph as in \Cref{def:hierarchical_clustering}, there are at most $O(n)$ clusters, across all levels.

    Additionally, if each contraction step reduces the number of vertices by a constant fraction (as \Cref{lem:reduce_constant_fraction} does), then the total size of the cluster-trees across all contraction steps is $O(n)$.
\end{observation}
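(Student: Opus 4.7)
The plan is to derive both claims from the single fact, supplied by \Cref{lem:reduce_constant_fraction}, that each contraction step shrinks the cluster count by at least a factor of $0.99$. Write $n_i := |C_i|$, with $n_0 = n$ (the initial singleton clustering). Then by induction $n_i \le 0.99^i \cdot n$.

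For the second claim (total size of cluster-trees), the cluster-tree $T_{C_i}$ has exactly $|C_i| = n_i$ vertices and at most $n_i - 1$ edges, so its size is $O(n_i)$. Summing over all contraction steps gives
\begin{equation*}
\sum_{i \ge 0} |T_{C_i}| \;\le\; \sum_{i \ge 0} O(n_i) \;\le\; O(n)\sum_{i \ge 0} 0.99^{i} \;=\; O(n),
\end{equation*}
since the geometric series converges to a constant. Note this holds regardless of how many contraction steps we actually run, which is convenient.

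For the first claim (total number of clusters across all levels), I would count the distinct clusters ever created. The $n$ level-$0$ singleton clusters contribute $n$. At step $i$, each newly formed level-$i$ cluster is, by \Cref{def:cluster_contraction}, the merger of one senior sub-cluster together with $k \ge 1$ junior sub-clusters, hence it consumes at least two clusters from $C_{i-1}$ and produces one. Thus if $m_i$ denotes the number of new level-$i$ clusters, we have $n_i \le n_{i-1} - m_i$, i.e., $m_i \le n_{i-1} - n_i$. Summing telescopes:
\begin{equation*}
\sum_{i \ge 1} m_i \;\le\; \sum_{i \ge 1} (n_{i-1} - n_i) \;\le\; n_0 \;=\; n.
\end{equation*}
Adding the $n$ singleton clusters yields at most $2n = O(n)$ distinct clusters in total, as claimed.

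There is no real obstacle here; the only thing to be careful about is not double-counting clusters that persist unchanged across multiple $C_i$'s, which is exactly what the telescoping/geometric bookkeeping above handles. Both statements are essentially restatements of the geometric-decay guarantee of \Cref{lem:reduce_constant_fraction}, so the proof should be short.
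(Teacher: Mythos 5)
Your proof is correct. The paper states this as an observation and omits the proof entirely, and your argument -- the geometric series $\sum_i 0.99^i n = O(n)$ for the total size of the cluster-trees, plus the telescoping count (each new cluster consumes at least two clusters of the previous level, so at most $n$ new clusters are ever created) for the number of distinct clusters -- is exactly the intended justification, with the telescoping part even noting correctly that it does not need the constant-fraction assumption.
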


\subsection{All-edges LCA and Ancestor-Descendant Graphs}
\label{sec:all_edges_lca}

Both of our problems take an input graph $G$ as well as a tree $T$. We can assume that $G$ is connected.

We will find it useful for our verification and sensitivity algorithms to work with \emph{ancestor-descendant graphs}: that is, input instances for which all edges in $E \setminus T$ are between ancestors and descendants in $T$.
Formally, we define them as follows:

\begin{definition}[Ancestor-Descendant Graphs]
\label{def:ancestor-descendant-graph}
    Let $G=(V, E)$ be a graph and let $T \subseteq E$ be a rooted spanning tree of $G$.
    $G$ is said to be an \emph{ancestor-descendant graph with respect to $T$} if all edges in $E \setminus T$ are between vertices with an ancestor-descendant relationship in $T$.
\end{definition}

In order to transform our input into an ancestor-descendant graph, we split each non-tree edge $\{u, v\}$ into two edges $\{u, \LCA(u, v)\}$ and $\{v, \LCA(u, v)\}$ with the same weights, where $\LCA(u, v)$ is the lowest common ancestor of $u$ and $v$ in $T$.
We are not able to compute the \LCA of every pair of vertices, since this requires $\Theta(n^2)$ global memory.
Instead, we compute the \LCA for each pair of endpoints of an edge in $E \setminus T$, which requires $\Theta(m)$ memory.
This computation is the subject of the rest of this section.

First, we remark that we can compute the DFS numbering of a rooted tree using a result of Andoni, Stein, and Zhong:

\begin{theorem}[{\cite[Theorem~3]{andoni-biconnectivity}} and {\cite[Theorem~D.22]{arxiv18}}] 
\label{thm:dfs}
    Given a rooted tree $T$ with diameter $D$, there is an $O(\log D)$-round \MPC algorithm which finds the DFS sequence of $T$.
    The algorithm uses $\lspace = O(n^\delta)$ local memory (for some constant $\delta$), and $\gspace = O(n)$ global memory.
    {It succeeds with high probability.}
\end{theorem}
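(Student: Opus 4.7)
The plan is to reduce the problem to computing the Euler tour of $T$ and then performing list ranking on it. First, I would build the Euler tour as a circular linked list on the $2(n-1)$ directed copies of the tree edges: for each internal vertex $v$ with an arbitrary ordering of its children $w_1, \dots, w_k$, I would declare the successor of the down-edge $(v \to w_i)$ to be the down-edge from $w_i$ to its first child if $w_i$ is internal, and otherwise the up-edge $(w_i \to v)$; and declare the successor of the up-edge $(w_i \to v)$ to be the down-edge $(v \to w_{i+1})$ for $i < k$ and the up-edge $(v \to p(v))$ for $i = k$. All of these pointers can be set up in $O(1)$ MPC rounds using sorting on the edge-parent relation.

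The main step is to perform list ranking on this $2(n-1)$-element list in $O(\log D)$ rounds using $O(n)$ global memory. Plain pointer jumping takes $\Theta(\log n)$ rounds and is therefore too slow, so the tree structure must be exploited. I would apply a hierarchical contraction analogous to \Cref{def:hierarchical_clustering}: repeatedly invoke \Cref{lem:reduce_constant_fraction} (or an equivalent rake-and-compress primitive) so that each step shrinks the number of active clusters by a constant factor, while each cluster stores a succinct summary of the segment of the Euler tour lying entirely inside it (namely its length and its two boundary positions in the list). After $O(\log D)$ contraction steps the cluster-tree would collapse to a single vertex, at which point the ranks of the surviving boundary elements are trivial. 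I would then invert the hierarchy, using prefix sums within each cluster's stored segment to propagate ranks back down and recover the rank of every Euler-tour element.

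The hard part will be obtaining the $O(\log D)$ bound rather than $O(\log n)$: list ranking on an arbitrary list of length $\Theta(n)$ cannot be done much faster than $\Theta(\log n)$ in sublinear-memory \MPC, so the argument must exploit that the list comes from a tree of diameter $D$. The saving should come from the fact that both raking degree-one subtrees and compressing degree-two chains shrink the tree's diameter by a constant factor per step, so $O(\log D)$ steps suffice to reduce $T$ to a single vertex; meanwhile \Cref{obs:hierarchical_clustering_size} would ensure that the memory summed across all levels remains $O(n)$, which is the tight budget demanded by the statement.

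Finally, once every element of the Euler tour knows its rank, the DFS sequence is obtained by taking, for each vertex, the position of its first occurrence in the ranked list, which is a sort followed by a prefix-sum computation and therefore completes in $O(1)$ additional rounds.
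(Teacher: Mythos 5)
This theorem is not proved in the paper at all: it is imported verbatim from \cite{andoni-biconnectivity} (Theorem~3; Theorem~D.22 of the arXiv version), so there is no in-paper proof to match. Judged on its own merits, your sketch has a genuine gap at exactly the step you flag as ``the hard part.'' The claim that raking degree-one subtrees and compressing degree-two chains ``shrink the tree's diameter by a constant factor per step,'' so that $O(\log D)$ steps collapse $T$ to a single vertex, is false. A complete binary tree of depth $d$ has diameter $2d$ and no degree-two chains, so compress never fires and each rake removes only the bottom level; collapsing it takes $\Theta(d)=\Theta(D)=\Theta(\log n)$ steps, whereas your claimed bound would be $O(\log D)=O(\log\log n)$. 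The primitive you invoke, \Cref{lem:reduce_constant_fraction}, reduces the \emph{number of vertices} by a constant factor per step, so after $O(\log D)$ steps you are left with $n/\poly(D)$ clusters, not one; collapsing to a single vertex via that lemma costs $\Theta(\log n)$ rounds. Since the Euler tour has length $2(n-1)$ regardless of $D$, and list ranking a length-$\Theta(n)$ list is exactly the $\Omega(\log n)$-hard-looking obstacle you identify, your reduction does not beat $O(\log n)$ as written.

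There is a secondary problem with the bookkeeping: the portion of the Euler tour lying inside a cluster $c$ of the hierarchical clustering is generally \emph{not} a single contiguous segment with ``two boundary positions'' --- the tour enters and exits $c$ once per child cluster hanging below a leaf of $c$, so the restriction of the tour to $c$ is a union of many segments. The cited proof sidesteps list ranking entirely: it roots the tree, computes the subtree size of every vertex (a bottom-up aggregate computable in $O(\log D)$ rounds, as in \cite{dp_trees_mpc}), fixes an ordering of children, and then obtains each vertex's DFS position as a root-to-vertex prefix sum of the sizes of earlier siblings' subtrees, again evaluated in $O(\log D)$ rounds. If you want to salvage an Euler-tour route, you would need to express the tour ranks through exactly such subtree-size prefix sums rather than through generic pointer-jumping or rake-and-compress on the list.
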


We extend the DFS sequence to a DFS \emph{interval} labeling, which we define as follows:

\begin{definition}
    A DFS interval labeling of a rooted tree $T$ is, for each vertex $v \in T$, an interval $I(v) = [v_{low}, v_{high}]$, where $v_{low}$ is the smallest DFS number in the subtree of $T$ rooted at $v$, and $v_{high}$ is the largest DFS number in the subtree of $T$ rooted at $v$.

    Observe that iff $u$ is an ancestor of $v$ in $T$, then $u_{low} \leq v_{low} \leq v_{high} \leq u_{high}$. In other words, $I(v) \subset I(u)$. If $u$ and $v$ are not in an ancestor-descendant relation in $T$, then $I(u) \cap I(v) = \emptyset$.
\end{definition}

We observe that this interval labeling can be computed using recently developed dynamic programming techniques for the \MPC model:

\begin{lemma}
\label{lem:dfs_interval_labeling}
    Given a rooted tree $T$ with diameter $\dt$, there is an $O(\log \dt)$-round \MPC algorithm which produces a DFS interval labeling of $T$.
    The algorithm uses $\lspace = O(n^\delta)$ local memory (for some constant $\delta$), and $\gspace = O(m+n)$ global memory. It succeeds with high probability.
\end{lemma}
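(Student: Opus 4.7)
The plan is to reduce DFS interval labeling to two ingredients already available: a DFS numbering of $T$, and subtree aggregation of a single-word statistic. First, I would invoke \Cref{thm:dfs} to compute, in $O(\log D_T)$ rounds and $O(n)$ global memory, a DFS sequence of $T$; let $d(v)$ denote the position of $v$ in this sequence. Fixing the convention that the traversal is pre-order, $d(v)$ is itself the smallest DFS number in the subtree rooted at $v$, so I may set $v_{low} := d(v)$ immediately, with no further computation.

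It then remains to compute $v_{high}$ for every $v$. Observe that either $v_{high} = d(v) + |\mathrm{subtree}(v)| - 1$, so it suffices to compute subtree sizes, or, equivalently, $v_{high} = \max_{u \in \mathrm{subtree}(v)} d(u)$, a max-aggregation over descendants. Either formulation is a vanilla bottom-up subtree dynamic program with a constant-size state per vertex and an $O(1)$-word combine operator, and I would resolve it by appealing to the dynamic-programming-on-trees framework for \MPC developed in \cite{dp_trees_mpc}, which solves such aggregations on a rooted tree of diameter $D_T$ in $O(\log D_T)$ rounds using $\lspace = O(n^\delta)$ local memory and $O(n)$ global memory, with high probability.

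The main obstacle is memory management: I must ensure that running \Cref{thm:dfs} and the DP-on-trees framework in sequence, while holding the $O(m)$ words of non-tree edges in global memory, does not exceed the $O(m+n)$ budget. This is easy to arrange because both subroutines touch only the $n-1$ edges of $T$ and need only a constant number of words of state per vertex, so the non-tree edges can sit passively until I emit $I(v) = [d(v), v_{high}]$ for each $v$. A union bound over the two randomized subroutines preserves the high-probability guarantee, and the total round count is $O(\log D_T) + O(\log D_T) = O(\log D_T)$, as claimed.
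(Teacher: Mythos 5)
Your proposal is correct and follows essentially the same route as the paper: obtain the DFS numbering via \Cref{thm:dfs}, take $v_{low}$ to be the DFS number, and compute $v_{high}$ as a subtree max-aggregation using the dynamic-programming-on-trees framework of \cite{dp_trees_mpc}. The only cosmetic difference is that the paper notes the subtree-aggregation step is deterministic, so the high-probability guarantee comes solely from the DFS numbering rather than from a union bound over two randomized subroutines.
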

\begin{proof}
    \Cref{thm:dfs} can be applied to produce a DFS sequence of $T$ in $O(\log \dt)$ rounds with global memory $O(m+n)$ and succeeds with high probability.
    By definition, $v_{low}$ is the DFS number of $v$, so it remains for each vertex $v$ to learn $v_{high}$.
    
    This is easily done as a consequence of arguments in \cite{dp_trees_mpc}: the authors mention computing the maximum of input labels in each subtree as a computation which their approach allows.
    Their computation requires $O(\log \dt)$ rounds, deterministically; overall our algorithm succeeds with high probability (since the DFS numbering succeeds only with high probability).
\end{proof}

The goal of the remainder of this section is to prove the following claim:

\begin{theorem}
\label{thm:all_edges_lca}
    Given a graph $G$, and a rooted tree $T$ with diameter $\dt$, there is an $O(\log \dt)$-round \MPC algorithm to compute, for each edge $\{u, v\} \in E \setminus T$ in parallel, $\LCA(u, v)$.
    
    The algorithm uses $\lspace = O(n^\delta)$ local memory (for some constant $\delta$), and $\gspace = O(m+n)$ global memory. It succeeds with high probability.
\end{theorem}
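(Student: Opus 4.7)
The plan is to combine the DFS interval labeling of \Cref{lem:dfs_interval_labeling} with the hierarchical clustering of \Cref{sec:clusters}. The key observation is that once every vertex $v$ knows its interval $I(v) = [v_{low}, v_{high}]$, the LCA of $u, v$ is precisely the vertex $w$ whose interval is the inclusion-smallest interval containing $I(u) \cup I(v)$; in particular, the test ``$x$ is an ancestor of $y$'' reduces to $I(y) \subseteq I(x)$, which is the primitive we will use throughout.

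First, I would invoke \Cref{lem:dfs_interval_labeling} to compute all intervals in $O(\log D_T)$ rounds w.h.p.\ with $O(m+n)$ global memory. Then, running $k = \Theta(\log D_T)$ iterations of \Cref{lem:reduce_constant_fraction}, I would build a hierarchical clustering with $|C_k| \leq n / D_T^{c}$ for a sufficiently large constant $c \geq 2$. During these contractions I would also record, for each newly formed cluster, its senior/junior sub-cluster decomposition from \Cref{def:cluster_contraction}, including the specific vertex inside the senior sub-cluster that serves as the parent in $T$ of each junior sub-cluster's leader; by \Cref{obs:hierarchical_clustering_size} this bookkeeping uses $O(n)$ global memory and $O(\log D_T)$ rounds in total.

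Since $|C_k|$ is polynomially smaller than $n$, I can afford $\Omega(D_T)$ memory per level-$k$ cluster. Thus for each $c \in C_k$ I would gather at $\ell(c)$ the full chain of ancestor clusters of $c$ in $T_{C_k}$ together with their leaders' intervals; each chain has length at most $D_T$, giving total memory $O(|C_k|\cdot D_T) = O(n/D_T^{c-1}) \leq O(n)$, assembled in $O(\log D_T)$ rounds by propagating parent-cluster pointers upwards level by level. Then for each non-tree edge $\{u,v\}$ in parallel, I would find $c^\star = \LCA_{T_{C_k}}(c(u), c(v))$ as the deepest cluster on $c(u)$'s ancestor chain whose leader's interval contains $I(v)$. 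The true $\LCA_T(u,v)$ must lie inside $c^\star$, so the remaining task is to pin it down there: walk down through the $O(\log D_T)$ levels of $c^\star$'s nested sub-cluster decomposition, at each level using the recorded metadata either to recurse into the common sub-cluster (if the two current representatives of $u, v$ agree) or to replace them by their entry points into the senior sub-cluster (if they diverge), eventually terminating at a single vertex of $T$, which is the LCA.

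The hardest part will be this intra-cluster descent: a level-$k$ cluster $c^\star$ may contain $\Omega(D_T)$ vertices and therefore cannot be materialized at one machine. We sidestep this by never expanding $c^\star$ explicitly; the descent instead traverses $c^\star$'s construction history, which is already distributed across machines by the clustering phase, carrying only $O(1)$ words of ``current representative'' state per query. With these ingredients the algorithm runs in $O(\log D_T)$ rounds using $O(m+n)$ global memory and succeeds with high probability, matching the claim.
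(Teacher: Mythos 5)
Your proposal is correct and follows essentially the same route as the paper: DFS interval labeling, hierarchical clustering down to $n/\poly(\dt)$ clusters, locating the cluster containing the LCA via interval-containment tests against the ancestor chain in the cluster-tree, and then descending through the $O(\log \dt)$ levels of nested sub-cluster decompositions to isolate the exact LCA vertex (this is the paper's $\mathsf{UndoClustering}$ step, which likewise routes $O(1)$ words of state per query with sorting and prefix-sum). The only real deviation is in one subroutine: where you materialize each cluster's full ancestor chain in $T_C$ (affordable since $|C_k|\cdot \dt = O(n)$) and scan it for the deepest ancestor of both endpoints, the paper instead builds doubling pointers $p_1, p_2, p_4, \dots$ in $T_C$ and performs a binary-search-style ascent; both variants meet the same $O(\log \dt)$-round and $O(m+n)$-memory budgets.
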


{It seems that the above theorem follows from \cite[Section 4.3]{andoni-biconnectivity}. However, we give the proof explicitly here to provide context for the following sections. Though our proof of \Cref{thm:all_edges_lca} and the proof of \cite{{andoni-biconnectivity}} are fundamentally similar, the details are different.} 

\begin{algorithm}[H]
\caption{$\mathsf{FindLCAClusters}(C)$: Finds, for each edge in $E \setminus T$, the cluster containing the \LCA of their endpoints.}
\label{alg:assign_candidate_lcas}
\begin{algorithmic}[1]
    \For{each edge $\{u, v\} \in E \setminus T$ in parallel}
        \If{$I(c(u)) \cap I(c(v)) \not= \emptyset$} $\LCA_C(u, v) = I(c(u)) \cup I(c(v))$
        \ignore{
        \If{$I(c(u)) \cap I(c(v)) \not= \emptyset$}
            \State $\LCA_C(u, v) = I(c(u)) \cup I(c(v))$
        }
        \Else
        \State set $\chi \gets c(u)$
        \Comment $\chi$ is the ``candidate'' \LCA cluster
        \For{$i \in (\dt, \dt/2, \dots, 4, 2, 1)$}
            \ignore{
            \If{$I(p_i(\chi)) \cap I(p_i(c(v))) = \emptyset$}
                \State $\chi \gets p_i(\chi)$
            \EndIf
            }
            \IfThen{$I(p_i(\chi)) \cap I(p_i(c(v))) = \emptyset$}{$\chi \gets p_i(\chi)$}
        \EndFor

        \State $\LCA_C(u, v) = p(\chi)$
    \EndIf
    \EndFor
    
\State\Return $\LCA_C$
\end{algorithmic}
\end{algorithm}

\begin{algorithm}[H]
\caption{$\mathsf{UndoClustering}(C, \LCA_C)$}
\label{alg:undo_contractions}
\begin{algorithmic}[1]
    \For{$i \in (O(\log \dt)\dots 2, 1)$}
        \For{each level-$i$ cluster $c \in C$ in parallel}
            \State let $c_1\dots c_k$ be the junior sub-clusters of $c$; let $c_p$ be the senior sub-cluster of $c$
            \For{each edge $\{u, v\} \in E \setminus T$ in parallel with $\LCA_C(u, v) = c$}
                \ignore{
                \If{there exists a $c_i$ such that $D(u) \in I(\ell(c_i))$ and $D(v) \in I(\ell(c_i))$}
                    \State $\LCA_C(u, v) \gets c_i$
                \Else
                    \State $\LCA_C(u, v) \gets c_p$
                \EndIf
                }
                \IfThenEElse
                    {there is a $c_i$ such that $D(u) \in I(\ell(c_i))$ and $D(v) \in I(\ell(c_i))$}
                    {$\LCA_C(u, v) \gets c_i$}
                    {$\LCA_C(u, v) \gets c_p$}
            \EndFor
            \State remove $c$ from $C$ and add $c_1\dots c_k$ and $c_p$ to $C$
        \EndFor
    \EndFor
    \ForDo{each edge $\{u, v\} \in E \setminus T$}{$\LCA(u, v) \gets \ell(\LCA_C(u, v))$}
    \ignore{
    \For{each edge $\{u, v\} \in E \setminus T$}
        \State $\LCA(u, v) \gets \ell(\LCA_C(u, v))$
    \EndFor
    }

\State\Return \LCA
\end{algorithmic}
\end{algorithm}

We start by showing that we can create, for each vertex $u$ in a tree in parallel, edges to the parent of $u$, the $2$-parent of $u$, the $4$-parent of $u$, and so on (as in lines 3--5 of \Cref{alg:lca}):

\begin{lemma}
\label{lem:pointer_jumping}
    Let $T$ be a tree with $n$ vertices and diameter $\dt$.
    In $O(\log \dt)$ rounds, we can create edges from each vertex $v$ to $p_1(v), p_2(v), p_4(v), p_8(v)\dots p_n(v)$.
    This requires $\gspace = O(n \cdot \log \dt)$ global memory.
\end{lemma}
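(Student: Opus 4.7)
The plan is to use the classical pointer-jumping technique and show it can be implemented efficiently in low-space \MPC.

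I would initialize by letting each vertex $v$ store $p_1(v)$, which is directly given by $T$. Then, for $i = 1, 2, \dots, \lceil \log \dt \rceil$, I would perform one doubling step: given that every vertex $v$ already knows $p_{2^{i-1}}(v)$, compute $p_{2^i}(v) = p_{2^{i-1}}(p_{2^{i-1}}(v))$. Note that $\lceil \log \dt \rceil$ steps suffice, because once $2^i > \dt$ we have $p_{2^i}(v) = \text{root}$ for every $v$, and all such higher-index edges can be represented implicitly (giving the global memory bound of $O(n\log \dt)$).

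The nontrivial part is implementing a single doubling step in $O(1)$ \MPC rounds with local memory $O(n^\delta)$. I would realize it as a sort-and-join lookup. For each vertex $v$, create a query record $(p_{2^{i-1}}(v), v)$, and for each vertex $u$, create an answer record $(u, p_{2^{i-1}}(u))$. Sort all $2n$ records in $O(1)$ rounds (using \cite{GSZ11}). After sorting by the first coordinate, for each $u$, the unique answer record and all query records targeting $u$ form one contiguous block; a standard segmented broadcast/scan (again $O(1)$ rounds via prefix-sum) disseminates the value $p_{2^{i-1}}(u)$ to every query in the block. A final sort by the query vertex $v$ routes the answer home, at which point $v$ stores its newly computed $p_{2^i}(v)$ pointer.

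The main obstacle is the possibility that a single vertex $u$ is the $2^{i-1}$-th ancestor of $\Theta(n)$ other vertices, which rules out a direct per-vertex broadcast under the $O(n^\delta)$ local-memory constraint. The sorting-plus-segmented-scan construction above sidesteps this issue, since no machine ever has to hold more than $O(n^\delta)$ records at once, and the answer propagates through the sorted block rather than being emitted from a single source. Summing up: each iteration is $O(1)$ rounds, there are $O(\log \dt)$ iterations, total global memory is $O(n \log \dt)$ for storing the ancestor pointers plus $O(n)$ auxiliary records per iteration, which fits the claimed bound.
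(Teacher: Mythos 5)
Your proposal is correct and follows essentially the same route as the paper: pointer doubling over $O(\log \dt)$ iterations, where each doubling step is realized by pairing a "query" record keyed by the known ancestor with an "answer" record at that ancestor, sorting so they become contiguous, and using a segmented broadcast/prefix-sum to resolve the lookup in $O(1)$ rounds. Your explicit treatment of the case where one vertex is the $2^{i-1}$-th ancestor of many vertices is exactly the point the paper handles (somewhat more tersely) via its sorted blocks of forward and reversed tuples.
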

\begin{proof}
    Each vertex $v$ starts with a single edge $(v, p(v))$. Machines start by appending a $1$ to these tuples (our example becomes $(v, p(v), 1)$), and creating a copy of the edge with the direction reversed and with a $-1$ appended (in our example, $(p(v), v, -1)$.
    
    Then all edges are sorted lexicographically across machines in $O(1)$ rounds; this leaves, for each vertex $v$, the parent and children of $v$ on consecutive machines.
    Machines can now locally create edges between the parent of $v$ and the children of $v$, and append $2$ to the resulting tuple.
    
    The tuples ending with $-1$ can then be set aside and the process can be repeated for the edges at distance $2$. This process can be repeated $O(\log \dt)$ times (each iteration using edges of distance $i$ to create edges of distance $2i$), and clearly this creates edges from $v$ to $p_1(v)$, $p_2(v)$, $p_4(v)$, and so on; and requires $O(\log \dt)$ rounds and global memory per vertex.
\end{proof}

\begin{algorithm}
\caption{$\LCA(V, E, T)$: Computes the LCA in $T$ for all edges in $E \setminus T$ in parallel.}
\label{alg:lca}
\begin{algorithmic}[1]
    \State Apply \Cref{lem:dfs_interval_labeling} to find a DFS interval labeling of $T$. For each $v \in V$, let $I(v)$ be the interval corresponding to $v$.

    \State Find a hierarchical clustering using \Cref{def:hierarchical_clustering}; let $C$ be the resulting set of clusters.

    \For{$i \in (1, 2, 4, 8, \dots, \dt)$}
    \Comment{{\small Create ``auxiliary'' edges between each cluster in $C$ and their parent in $T_C$, $2$-parent in $T_C$, $4$-parent in $T_C$, and so on}}
        \ignore{
        \For{each cluster $c \in C$ in parallel}
            \State for each cluster $c'$ with $p_i(c') = c$, set $p_{2i}(c') \gets p_i(c)$
        \EndFor
        }
        \ForDo{each cluster $c \in C$ in parallel}
            {for each cluster $c'$ with $p_i(c') = c$, set $p_{2i}(c') \gets p_i(c)$}
    \EndFor

    \State $\LCA_C \gets \mathsf{FindLCAClusters}(C)$
    \Comment{{\small We find the highest $i$ such that $I(p_i(v)) \cap I(p_i(u)) = \emptyset$. This corresponds to the cluster which contains the \LCA of $u$ and $v$}}

    \State $\LCA \gets \mathsf{UndoClustering}(C, \LCA_C)$
    \Comment{{\small We reverse the contraction process and, for each edge in the original graph, we find the vertex which is the \LCA of its endpoints}}
\end{algorithmic}
\end{algorithm}

Next, we argue the correctness and implementation of \Cref{alg:assign_candidate_lcas}: the algorithm which finds, for each non-tree edge, the cluster in which the \LCA of their endpoints is contained.

\begin{lemma}
\label{lem:lca_candidate_lca}
    \Cref{alg:assign_candidate_lcas} can be implemented in $O(\log \dt)$ rounds of sublinear \MPC with $O(m+n)$ global memory.
    After its execution, the cluster containing the \LCA in $T_C$ of $c(u)$ and $c(v)$ for each non-tree edge $\{u, v\} \in E \setminus T$ is correctly computed.
\end{lemma}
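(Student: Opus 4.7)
The plan is to split on whether $I(c(u))$ and $I(c(v))$ intersect, argue correctness of the binary-lifting loop via an invariant on $\chi$, and then bound the implementation cost using the hierarchical clustering of \Cref{def:hierarchical_clustering}.

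For the easy branch, by the defining property of a DFS interval labeling, the intervals of any two vertices in $T_C$ are either disjoint or nested. Thus if $I(c(u)) \cap I(c(v)) \neq \emptyset$, one of $c(u), c(v)$ is an ancestor of the other in $T_C$, that ancestor is precisely $\LCA_{T_C}(c(u), c(v))$, and it has the larger interval, so the assignment $\LCA_C(u,v) = I(c(u)) \cup I(c(v))$ identifies the correct cluster.

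For the disjoint branch, let $L := \LCA_{T_C}(c(u), c(v))$ and let $k \geq 1$ be the distance from $c(u)$ to $L$ in $T_C$. The core invariant I would maintain is that $\chi$ is always on the $c(u) \to L$ path and strictly below $L$. The condition $I(p_i(\chi)) \cap I(p_i(c(v))) = \emptyset$ then detects whether the candidate jump of $\chi$ by $i$ preserves this: if the jump keeps $\chi$ strictly below $L$, then $p_i(\chi)$ and $p_i(c(v))$ sit in disjoint subtrees of a common ancestor of $L$, so by the disjoint/nested dichotomy their intervals are disjoint; if instead the jump would reach or pass $L$, then $p_i(\chi)$ becomes comparable to $p_i(c(v))$ in the ancestor order and the intervals intersect. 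By the standard binary-representation argument, scanning $i = \dt, \dt/2, \ldots, 1$ from largest to smallest advances $\chi$ by exactly $k-1$ steps in total, so $p(\chi) = L$, which is exactly what the algorithm returns.

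For the implementation, the $\mathsf{FindLCAClusters}$ loop runs $O(\log \dt)$ iterations, and in each iteration every non-tree edge needs only constant-size lookups of $p_i(\chi)$, $p_i(c(v))$, and their DFS intervals. The power-of-two ancestor pointers on $T_C$ are provided by \Cref{lem:pointer_jumping} in Line~3 of \Cref{alg:lca}, and the DFS intervals come from \Cref{lem:dfs_interval_labeling}, so each iteration reduces to $O(1)$ rounds of sorting and routing, for $O(\log \dt)$ rounds overall. For global memory, the hierarchical clustering uses $O(\log \dt)$ contraction steps each of which shrinks the cluster count by a constant factor (\Cref{lem:reduce_constant_fraction}), so $|C| = O(n / \poly(\dt))$; consequently the power-of-two ancestor pointers on $T_C$ occupy $O(|C| \cdot \log \dt) = O(n)$ words, and the per-edge binary-search state contributes $O(m)$, for a total of $O(m+n)$. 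The main technical obstacle I foresee is rigorously verifying the descendant-of-$L$ invariant at boundary regimes (e.g.\ when a jump would overshoot the root of $T_C$, or when $c(u)$ and $c(v)$ sit at very different depths); the use of $p_i(c(v))$ in the test rather than $c(v)$ itself is the key subtlety here, and correctness should follow by carefully inducting over the binary expansion of $k-1$ using the disjoint-subtree property.
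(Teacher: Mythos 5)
Your overall route is the same as the paper's: dispose of the nested-interval case directly, argue that the loop performs a greedy binary lifting which leaves $\chi$ at the child of the \LCA on $c(u)$'s side so that $p(\chi)$ is the answer, and implement each iteration with the precomputed power-of-two ancestor pointers plus $O(1)$ rounds of sorting and prefix-sum; your memory accounting ($O(|C|\cdot\log\dt)=O(n)$ for the pointers on $T_C$ after the hierarchical clustering, plus $O(m)$ per-edge state) is also the paper's. The gap is in the step you flagged and then resolved incorrectly: the claimed equivalence that $I(p_i(\chi)) \cap I(p_i(c(v))) = \emptyset$ holds exactly when the jump keeps $\chi$ strictly below $L := \LCA_{T_C}(c(u),c(v))$. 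The forward direction is false. If $p_i(c(v))$ has risen to depth $\leq \mathrm{depth}(L)$ (which happens as soon as $i \geq \mathrm{depth}(c(v)) - \mathrm{depth}(L)$), then $p_i(c(v))$ is a weak ancestor of $L$, hence of $c(u)$, hence comparable to $p_i(\chi)$ in the ancestor order, so the two intervals are nested and intersect even though $p_i(\chi)$ is still strictly below $L$. Concretely, if $c(v)$ is a child of $L$ while $c(u)$ lies at depth $\mathrm{depth}(L)+100$, the test fails for every $i \geq 1$, $\chi$ never advances, and the loop returns $p(c(u)) \neq L$. So under the literal test against $p_i(c(v))$, your conclusion that the scan advances $\chi$ by exactly $k-1$ steps does not hold, and no induction over the binary expansion of $k-1$ will recover it.

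The paper's own correctness argument implicitly reads the test as ``$p_i(\chi)$ is not an ancestor of $c(v)$,'' i.e., as $I(p_i(\chi)) \cap I(c(v)) = \emptyset$. Under that reading the condition is genuinely equivalent to ``$p_i(\chi)$ is strictly below $L$'' (the case that $p_i(\chi)$ is a descendant of $c(v)$ is excluded once line~2 has not fired, since $p_i(\chi)$ is a weak ancestor of $c(u)$), your invariant is maintained, and the standard greedy argument gives an advance of exactly $k-1$. You should either state and analyze that version of the test, or give a correct characterization of the one you actually run; as written, the middle of your argument does not establish correctness of either.
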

\begin{proof}
    Consider some non-tree edge $\{u,v\} \in E \setminus T$.

    First, note that line 2 ensure that if $c(u)$ and $c(v)$ have an ancestor-descendant relationship in $T_C$, then whichever is the ancestor is selected as $\LCA(u, v)$.
    Therefore, assume that this is not the case.

    It suffices to find the smallest $i$ such that the $i$-th ancestor of (w.l.o.g.) $c(u)$ is also an ancestor of $v$: we argue that lines 4--8 accomplish this.
    The approach taken in the for-loop in lines 5--7 moves the candidate $\chi$ for the \LCA up as much as possible such that $\chi$ is not an ancestor of $c(v)$: therefore at the end of the for-loop, the parent of $\chi$ is the \LCA in $T_C$ of $c(u)$ and $c(v)$.

    It remains to show that \Cref{alg:assign_candidate_lcas} can be implemented in \MPC with the specified memory and round requirements.
    First, note that as a precondition to this algorithm auxiliary edges from each cluster to their parent in $T_C$, $2$-parent in $T_C$, $4$-parent in $T_C$, etc.\ were established.
    Suppose (w.l.o.g) the information about the $i$-parent of $c(u)$ is stored as an $(u, i, p_i(u))$ triple.
    We store the relevant information about each non-tree edge as $(\chi, c(v), I(\chi), I(c(v))$ tuples.
    In each iteration of the for-loop we sort the $i$-parent triples along with the candidate tuples, so that the first element of each are stored on consecutive machines.
    We use prefix sum to ensure that a copy of each $i$-parent triple of $u$ is on each machine which has tuples where $\chi=c(u)$.
    With these conditions, machines can perform the necessary computations on line 6  locally.
    The computation takes $O(\log \dt)$ rounds, and the total memory is linear.
\end{proof}

Next, we argue the correctness and implementation of \Cref{alg:undo_contractions}: the algorithm which undoes the contraction steps and finds the \LCA{}s for all non-tree edges.
\begin{lemma}
\label{lem:lca_unwinding}
    \Cref{alg:undo_contractions} can be implemented in $O(\log \dt)$ rounds of sublinear \MPC with $O(m+n)$ global memory.
    After its execution, the \LCA of the endpoints of each non-tree edge $\{u, v\} \in E \setminus T$ is correctly computed.
\end{lemma}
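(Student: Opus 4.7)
The plan is to prove correctness by a downward induction on the cluster level, and then argue that each iteration of the outer loop can be executed in $O(1)$ MPC rounds using standard sorting and broadcast primitives, giving $O(\log \dt)$ rounds and $O(m+n)$ global memory in total.

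For correctness, I will maintain the following invariant: immediately after the iteration that processes level $i+1$, for every non-tree edge $\{u,v\} \in E \setminus T$, the variable $\LCA_C(u,v)$ equals the (unique) level-$i$ cluster that contains the true vertex $\LCA(u,v)$ in $T$. The base case follows from \Cref{lem:lca_candidate_lca}, which initializes $\LCA_C(u,v)$ to the top-level cluster containing $\LCA(u,v)$. For the inductive step, fix a level-$(i+1)$ cluster $c$ and let $c_p$ be its senior sub-cluster and $c_1,\dots,c_k$ its junior sub-clusters; by \Cref{def:cluster_contraction}, these partition $c$, with $c_p$ rooted at $\ell(c)$ and each junior $c_j$ rooted at some leaf-child $\ell(c_j)$ of $c_p$ in $T$. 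Set $x := \LCA(u,v)$. By the inductive hypothesis, $x \in c$, so $x$ belongs to exactly one of $c_p, c_1, \dots, c_k$. Using the DFS interval labeling, $x \in c_j$ if and only if $x$ is a descendant of $\ell(c_j)$, which holds iff $D(x) \in I(\ell(c_j))$. Since $u$ and $v$ are both descendants of $x$, we have $I(x) \subseteq I(\ell(c_j))$, and so $D(u), D(v) \in I(\ell(c_j))$. Conversely, if $D(u), D(v) \in I(\ell(c_j))$ for some $j$, both endpoints lie in the subtree of $\ell(c_j)$, hence so does $x$; combined with $x \in c$ and the observation that $c \cap \mathrm{subtree}(\ell(c_j)) = c_j$, we get $x \in c_j$. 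When no such $j$ exists, $x$ must lie in $c_p$ by elimination. This is exactly the update rule of \Cref{alg:undo_contractions}, so the invariant is preserved. After $O(\log \dt)$ iterations the level-$0$ (singleton) clusters are reached, and returning $\ell(\LCA_C(u,v))$ yields the desired vertex.

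For the implementation, the work in each iteration is a join: for every edge record keyed by its current cluster $\LCA_C(u,v) = c$, deliver the intervals $I(\ell(c_p)), I(\ell(c_1)),\dots,I(\ell(c_k))$ so that the containment test can be performed locally. I will do this with constant-round sorting and prefix-sum, standard in the low-space MPC model: sort the edge records and sub-cluster records jointly by parent-cluster key, then use prefix-sum to replicate each sub-cluster's $(\ell(c_j), I(\ell(c_j)))$ tuple to the machines holding the relevant edges. Because the intervals $I(\ell(c_1)),\dots,I(\ell(c_k))$ are pairwise disjoint, at most one containment test can succeed per edge, and the resulting assignment is unambiguous. This gives $O(1)$ rounds per level, and $O(\log \dt)$ rounds overall. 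The memory is bounded using \Cref{obs:hierarchical_clustering_size}: the total size of all cluster-trees across all levels is $O(n)$, so the sub-cluster records sum to $O(n)$; combined with the $m$ edge records, the global memory stays at $O(m+n)$ throughout.

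The main technical obstacle is load balancing: a single cluster $c$ may be the current $\LCA_C$ of many edges while its sub-cluster list is small, so we cannot afford to store sub-cluster information redundantly with each edge. This is precisely what the sort-and-broadcast via prefix-sum handles: by computing how many edges need each sub-cluster record, we replicate the small amount of per-cluster information exactly as many times as needed and distribute it across machines, without exceeding the $O(n^\delta)$ local memory on any machine or inflating global memory beyond $O(m+n)$.
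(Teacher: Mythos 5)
Your proof is correct and follows essentially the same route as the paper's: peel back the contraction steps in reverse, use the DFS intervals of the junior sub-cluster leaders to decide which sub-cluster inherits the candidate, default to the senior sub-cluster by elimination, and implement each level with a constant number of sorts and prefix-sums, bounding memory via \Cref{obs:hierarchical_clustering_size}. The only thing you elide is the base case: \Cref{lem:lca_candidate_lca} literally gives $\LCA_C(u,v) = \LCA_{T_C}(c(u),c(v))$, and one still needs the short bridging argument (which the paper spells out) that this cluster-level \LCA is in fact the cluster containing the vertex-level $\LCA(u,v)$ — it is true because cluster leaders are ancestors of their members and no cluster strictly between $\LCA_C(u,v)$ and $c(u),c(v)$ is a common ancestor, but it does not follow verbatim from the cited lemma.
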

\begin{proof}
    First, we argue that for each non-tree edge $\{u, v\} \in E \setminus T$, $\LCA_C(u, v)$---the \LCA of $c(u)$ and $c(v)$ in $T_C$, computed in \Cref{alg:assign_candidate_lcas}---contains the LCA of $u$ and $v$ in $T$, i.e, that $c(\LCA(u, v)) = \LCA_C(u, v)$.

    Note that for any vertex $u$, $\ell(u)$ (the root of the cluster containing $u$) is an ancestor of $u$ in $T$.
    If $\ell(u)$ is an ancestor of $\LCA(u, v)$ then $c(\LCA(u, v)) = c(u)$ and line 2 of \Cref{alg:assign_candidate_lcas} would return $\LCA_C(u, v) = c(\LCA(u, v))$ as required.
    Suppose otherwise that $\ell(u)$ and $\ell(v)$ are both descendants of $\LCA(u, v)$.
    Recall that \Cref{alg:assign_candidate_lcas} correctly computes the $\LCA$ in $T_C$ of $c(u)$ and $c(v)$ (\Cref{lem:lca_candidate_lca}).
    Therefore there is no $c'$ which is an ancestor of $c(u)$ and $c(v)$ as well as a descendant of $\LCA_C(u, v)$, and it follows that $c(\LCA(u, v)) = \LCA_C(u, v)$.

    We now show that \Cref{alg:undo_contractions} finds $\LCA(u, v)$.
    Each of cluster contraction steps is undone in reverse order: a cluster $c$ produces its senior sub-cluster ($c_p$), and the junior sub-clusters $c_1\dots c_k$.
    In each iteration, for each non-tree edge $\{u, v\} \in E \setminus T$, we check (line 5) whether any of the roots of the junior sub-clusters are a common ancestor of $u$ and $v$.
    If this is the case for some child cluster $c_i$ then by arguments as above $c_i$ must contain $\LCA(u, v)$, and so we set the candidate \LCA of $u$ and $v$ ($\LCA_C(u, v)$) to $c_i$ (line 5).
    If this is not the case, then by elimination $\LCA(u, v)$ must be in the senior sub-cluster, and so we set $\LCA_C(u, v)$ to $c_p$.
    One can see that this finds a vertex which is an ancestor of both $u$ and $v$ and that it finds the lowest such vertex (since the child clusters are checked first).
    After all of the cluster contraction steps have been undone, $C = V$. Then, there is only one possible value for $\LCA(u, v)$, and lines 11 set this value.

    Finally, each iteration of the for-loop in lines 1--10 can be implemented using a constant number of sorts and prefix-sums (sorting the interval labels of the roots of the junior sub-clusters along with the non-tree edge endpoints, and using prefix-sum to determine to which junior sub-cluster $u$ and $v$ belong), and therefore \Cref{alg:undo_contractions} takes $O(\log \dt)$ rounds.
    It is easy to see that the global memory used is linear.
\end{proof}

Finally, we show the main result of this section:

\begin{proof}[Proof of \Cref{thm:all_edges_lca}]
    Correctness, and the implementation of lines 8 and 9 follows immediately from \Cref{lem:lca_candidate_lca,lem:lca_unwinding}.
    Implementation of lines 3--5 follows from \Cref{lem:pointer_jumping}.
\end{proof}

By simple applications of sorting and prefix sum (and with only a constant factor increase in global memory), we can use \Cref{thm:all_edges_lca} to turn our input graph into an ancestor-descendant graph:

\begin{corollary}
\label{cor:break_up_edges}
    Given a graph $G$, and a rooted tree $T$ with diameter $\dt$, there is an $O(\log \dt)$-round \MPC algorithm which replaces each non-tree edge $\{u, v\} \in E \setminus T$ with two edges $\{u, \LCA(u, v)\}$ and $\{v, \LCA(u, v)\}$.

    The algorithm uses $\lspace = O(n^\delta)$ local memory (for some constant $\delta$), and $\gspace = O(m+n)$ global memory. {It succeeds with high probability.}
\end{corollary}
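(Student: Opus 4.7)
The plan is to use \Cref{thm:all_edges_lca} as a black box and then perform the edge-splitting as a purely local operation after a constant number of sorts. First, I would invoke \Cref{thm:all_edges_lca} on $G$ and $T$ to compute, for each non-tree edge $e = \{u,v\} \in E \setminus T$, the vertex $\LCA(u,v)$. This takes $O(\log \dt)$ rounds, uses $\lspace = O(n^\delta)$ local memory and $\gspace = O(m+n)$ global memory, and succeeds with high probability. The output can be represented, for each non-tree edge $e = \{u,v\}$ of weight $w(e)$, as a tuple $(u, v, w(e), \LCA(u,v))$ stored on some machine.

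Next I would perform the edge split. For each such tuple held on a machine, the machine locally emits two replacement tuples $(u, \LCA(u,v), w(e))$ and $(v, \LCA(u,v), w(e))$, discarding the original. This is a purely local operation in a single round; it at most doubles the number of non-tree edges and hence keeps the total edge count, and the global memory, in $O(m+n)$. If the downstream algorithm requires the replacement edges to be grouped or indexed in a particular way (for example, sorted by endpoint so that each vertex's incident list is contiguous across machines), one additional constant-round sort suffices, using the standard $O(1)$-round \MPC sorting primitive available when $\lspace = O(n^\delta)$.

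Correctness is immediate: each original non-tree edge $\{u,v\}$ lies on the cycle formed with the $u$-to-$v$ path in $T$, and $\LCA(u,v)$ is on that path, so the two new edges connect $u$ and $v$ respectively to an ancestor; thus every replacement edge is between an ancestor and a descendant in $T$, as required by \Cref{def:ancestor-descendant-graph}. The weight assignment $w$ is preserved on both halves so that any downstream verification or sensitivity computation can recover the role of the original edge.

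There is no serious obstacle here: the entire proof is just ``apply \Cref{thm:all_edges_lca}, then split locally.'' The only item worth flagging is the bookkeeping to ensure that the two halves of a split edge can still be associated with the original edge when needed (e.g., for \MST sensitivity, where we must report $\sens(e)$ for the original $e$); this is handled by tagging both replacement tuples with a common identifier for $e$, which costs only an additional $O(1)$ words per edge and therefore preserves the $O(m+n)$ global-memory bound.
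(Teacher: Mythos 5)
Your proposal is correct and matches the paper's approach: the paper derives \Cref{cor:break_up_edges} directly from \Cref{thm:all_edges_lca} via "simple applications of sorting and prefix sum," which is exactly your black-box invocation followed by a local split and a constant-round sort. The extra bookkeeping you flag (tagging both halves with a common identifier so sensitivity values can be reported for the original edge) is a reasonable and harmless addition consistent with \Cref{obs:ancestor_descendant_edges_suffice}.
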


Finally, we observe that the transformation of the input given by \Cref{cor:break_up_edges} does not affect the computation of verification or sensitivity:

\begin{observation}[\cite{Pettie05, DRT92}]
\label{obs:ancestor_descendant_edges_suffice}
    Let $G = (V, E)$ be an edge-weighted graph, and let $T \subseteq E$ be a candidate \MST.

    Replacing each non-tree edge $\{u, v\} \in E \setminus T$ with edges $\{u, \LCA(u, v)\}$ and $\{v, \LCA(u, v)\}$ (of the same weight) does not affect the result of \MST verification, or the sensitivity of tree edges.
    After the replacements, the sensitivity of non-tree edges is equal to the minimum sensitivity of the two replacements.
\end{observation}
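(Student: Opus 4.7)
The plan is to base everything on the standard cycle-property characterization: $T$ is an MST of $G$ iff every non-tree edge $e = \{u,v\}$ satisfies $w(e) \ge M(u,v)$, where $M(u,v)$ denotes the maximum weight of a tree edge on the unique $u$-to-$v$ path in $T$. The single structural fact I would invoke repeatedly is that, writing $a = \LCA(u,v)$, the $u$-to-$v$ path in $T$ is the disjoint concatenation of the $u$-to-$a$ path with the $a$-to-$v$ path, so both the set of tree edges covered and the maximum weight distribute over the two halves: $M(u,v) = \max\{M(u,a),\, M(a,v)\}$, and $\{u,v\}$ covers exactly those tree edges covered by $\{u,a\}$ or by $\{v,a\}$.

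For the first claim (verification is unaffected), I would apply the cycle property: since the two replacement edges $\{u,a\}$ and $\{v,a\}$ carry weight $w(e)$, the single inequality $w(e) \ge M(u,v)$ is equivalent, via the max decomposition above, to the pair of inequalities $w(e) \ge M(u,a)$ and $w(e) \ge M(a,v)$. Thus the verification test holds for $\{u,v\}$ iff it holds for both replacements, and summing over all non-tree edges we get that $T$ is an MST of $G$ iff $T$ is an MST of the modified graph.

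For the second claim (tree-edge sensitivities are unaffected), I would express $\sens(t) = \min\{w(f) : f \text{ covers } t\} - w(t)$ for each $t \in T$, and observe that the covering multiset (edges weighted by their weights) is preserved: every original edge $e = \{u,v\}$ contributes the weight $w(e)$ exactly when $t$ lies on the $u$-$v$ path, and after the replacement the same weight $w(e)$ is contributed by $\{u,a\}$ (if $t$ lies on the $u$-$a$ path) or $\{v,a\}$ (if $t$ lies on the $a$-$v$ path); since the paths partition the original one, $t$ gets exactly one contribution of weight $w(e)$ in both instances, so the minimum is unchanged.

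For the third claim (non-tree-edge sensitivities), I would use the dual characterization $\sens(e) = w(e) - M(u,v)$ for a non-tree edge $e = \{u,v\}$. Applied to the two replacements this gives $\sens(\{u,a\}) = w(e) - M(u,a)$ and $\sens(\{v,a\}) = w(e) - M(a,v)$, whose minimum is $w(e) - \max\{M(u,a), M(a,v)\} = w(e) - M(u,v) = \sens(e)$. The main, mild obstacle is just being precise about the sensitivity formulas at the boundary (when $a = u$ or $a = v$, so one replacement is a self-loop and the corresponding $M$ is the max over an empty path, which one can consistently take to be $-\infty$ so that the minimum is achieved by the non-degenerate replacement) and about potential ties in edge weights, but neither affects the max/min identities that drive the argument.
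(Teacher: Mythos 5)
Your argument is correct. Note that the paper does not prove this observation at all: it is stated as a known fact and attributed to \cite{Pettie05, DRT92}, so there is no in-paper proof to compare against, and your write-up supplies the justification the paper defers to the literature. All three claims reduce, as you say, to the single identity $M(u,v) = \max\{M(u,a),\, M(a,v)\}$ for $a = \LCA(u,v)$ together with the fact that the $u$-to-$a$ and $a$-to-$v$ subpaths partition the tree edges covered by $\{u,v\}$; the cycle property then gives verification, the preserved covering multiset gives tree-edge sensitivity (matching the paper's own \Cref{obs:sensitivity_is_min_covering}), and the max/min duality gives non-tree-edge sensitivity. Two boundary cases are worth recording if this were written out in full: (i) when $a \in \{u, v\}$ one replacement degenerates to a self-loop, which your $-\infty$ convention handles (algorithmically the paper would simply keep the original edge, which is already ancestor--descendant); and (ii) when $a = p(u)$ the replacement $\{u, a\}$ is parallel to a tree edge, but both the cycle-property test and the covering argument go through verbatim in the resulting multigraph. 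Neither affects the identities driving your proof.
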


\section{\MST Verification}
\label{sec:mst_verification}

In this section we present our algorithm for \MST verification, and prove \Cref{thm:mst_verification}.

\begin{theorem}[\textbf{\MST Verification Algorithm}]
\label{thm:mst_verification}
Let $\delta$ be an arbitrarily small positive constant. Let $G = (V, E)$ be an edge-weighted input graph and let $T \subseteq E$ be a  tree. Suppose the diameter of $T$ be $\dt$. Then one can decide with high probability whether $T$ is an \MST of $G$ in $O(\log \dt)$ rounds on an \MPC with local memory $\lspace = O(n^\delta)$ and optimal global memory $\gspace = O(m+n)$.
\end{theorem}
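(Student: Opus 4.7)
The plan is to couple the hierarchical clustering of \Cref{sec:clusters} with the all-edges LCA machinery of \Cref{sec:all_edges_lca}, reducing MST verification to aggregations and lookups that fit in $O(m+n)$ global memory and $O(\log \dt)$ rounds. First, I apply \Cref{cor:break_up_edges} to transform $G$ into an ancestor--descendant graph with respect to $T$; by \Cref{obs:ancestor_descendant_edges_suffice} this preserves the verification answer, so by the cycle property of MSTs it suffices to test, for every non-tree edge $e = \{u,v\}$ with $u$ a strict ancestor of $v$, whether $w(e) \geq M(e)$, where $M(e)$ denotes the maximum weight of any tree edge on the $u \leadsto v$ path in $T$.

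Next, I run $K = \Theta(\log \dt)$ rounds of cluster contraction via \Cref{lem:reduce_constant_fraction}, with the constants chosen so that the final number of clusters is at most $n/\dt^{c}$ for a sufficiently large constant $c \geq 2$. Throughout, for each vertex $v$ I maintain $\mathrm{MaxUp}(v)$, the maximum weight of any tree edge on the $T$-path from $v$ up to the leader of its current cluster. Whenever a junior sub-cluster $c_j$ is absorbed into a senior sub-cluster $c_s$, each $v \in c_j$ updates
\[
    \mathrm{MaxUp}(v) \leftarrow \max\bigl(\mathrm{MaxUp}(v),\ w(\ell(c_j),p(\ell(c_j))),\ \mathrm{MaxUp}(p(\ell(c_j)))\bigr),
\]
which takes $O(1)$ rounds per step via sorting and prefix sums. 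Once clustering is complete, I apply \Cref{lem:pointer_jumping} to the cluster tree $T_C$ with each $T_C$-edge from child $c_k$ to parent $c_j$ weighted by $\max\bigl(w(\ell(c_k),p(\ell(c_k))),\ \mathrm{MaxUp}(p(\ell(c_k)))\bigr)$. Since $|T_C|\cdot\dt \leq n$, in $O(\log \dt)$ rounds and $O(m+n)$ memory this yields the maximum weight tree edge on the $T$-path between any pair of cluster leaders along $T_C$.

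For each non-tree edge $e = \{u,v\}$ with $c(u) \ne c(v)$, $M(e)$ is the maximum of $\mathrm{MaxUp}(v)$, the cluster-tree pointer-jumping value up to the child $c_*$ of $c(u)$ on the path toward $c(v)$ (identified by an LCA query on $T_C$), the boundary edge weight $w(\ell(c_*),\xi_e)$ with $\xi_e := p(\ell(c_*))$, and the within-$c(u)$ maximum on the sub-path from $\xi_e$ up to $u$. The main obstacle is this last contribution: since $u$ need not coincide with $\ell(c(u))$, we need a genuine sub-path of the $\xi_e \leadsto \ell(c(u))$ path, which is not captured by any single MaxUp value (the case $c(u) = c(v)$ is analogous, with $\xi_e := v$). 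I resolve it by unwinding the sub-cluster hierarchy of $c(u)$ one level at a time. At level $i$, an active query $(\xi,u)$ with $\xi$ a descendant of $u$ in $c_i(u)$ reduces to a level-$(i-1)$ query as follows: if $c_{i-1}(\xi) = c_{i-1}(u)$ the query descends unchanged; otherwise, since $u$ is an ancestor of $\xi$ and vertices of a junior sub-cluster are descendants of vertices of the senior sub-cluster, $c_{i-1}(u)$ must be the senior sub-cluster of $c_i(u)$ and $c_{i-1}(\xi) = c_j$ a junior, so the $\xi \leadsto u$ path splits into the stored MaxUp of $\xi$ within $c_j$, the boundary edge $(\ell(c_j),p(\ell(c_j)))$, and a residual level-$(i-1)$ query with new lower endpoint $\xi' := p(\ell(c_j)) \in c_{i-1}(u)$. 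Each step takes $O(1)$ rounds via sorting and lookups against the sub-cluster data, whose total size is $O(n)$ by \Cref{obs:hierarchical_clustering_size}; after $K$ rounds every query reaches the vertex level and $M(e)$ is fully computed. A final parallel comparison $w(e) \geq M(e)$ decides MST verification. The total round count is $O(\log \dt)$ (from the LCA setup, contraction, pointer jumping, and unwinding, each $O(\log \dt)$), the global memory is $O(m+n)$, and correctness holds with high probability, inheriting from the DFS, LCA, and clustering subroutines.
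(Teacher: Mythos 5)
Your architecture tracks the paper's closely --- the ancestor--descendant reduction via \Cref{cor:break_up_edges}, hierarchical contraction down to $n/\poly(\dt)$ clusters, a max structure over $T_C$, and a final per-edge comparison --- and your handling of the lower endpoint (via $\mathrm{MaxUp}(v)$) and of the inter-cluster portion of the path (via weighted jump pointers on $T_C$) is sound; these correspond respectively to the paper's $\out(v,u)$ labels and its $\through$ labels combined with the boundary-edge weights. The genuine difference, and the gap, is in how you compute the upper-endpoint contribution, i.e., the maximum on the sub-path from $\xi_e$ up to $u$ inside $c(u)$. You defer this to a backward unwinding of the cluster hierarchy, and each split step consumes ``the stored MaxUp of $\xi$ within $c_j$'', that is, the value $\mathrm{MaxUp}_{i-1}(\xi)$ from an \emph{intermediate} level of the forward pass. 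You never explain how these historical values are held within $\gspace = O(m+n)$. Storing $\mathrm{MaxUp}_i(v)$ for all vertices and all $O(\log \dt)$ levels costs $O(n \log \dt)$; restricting attention to vertices of the form $p(\ell(c))$ does not obviously help, because a single query can split at every one of the $O(\log \dt)$ levels (the split levels depend on the upper endpoint $u$, so distinct queries sharing a lower endpoint may need that vertex's MaxUp at many different levels), giving up to $\Theta(m \log \dt)$ requests; and you cannot recompute $\mathrm{MaxUp}_{i-1}$ on demand, because the unwinding runs top-down through the levels while the forward pass that produces these values runs bottom-up. \Cref{obs:hierarchical_clustering_size} bounds the total size of the cluster-trees, not these per-vertex-per-level tables, so the citation does not cover this step.

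The paper sidesteps the issue entirely by maintaining, \emph{forward} through the contractions, a per-non-tree-edge label $\out(u,v)$ equal to exactly the quantity your unwinding tries to recover (the maximum weight on the portion of the $u$--$v$ path lying inside the current cluster of $u$); \Cref{lem:cpl_recomputation} shows this label can be updated at each contraction step using only the current $\through$ values and the weights of the edges being contracted, with no historical state, so the final labels feed directly into the formula of \Cref{obs:weight_preserving_labelings_work}. Replacing your backward unwinding with this forward per-edge maintenance (or else supplying a genuine $O(m+n)$-memory argument for storing and serving the historical MaxUp values) is what is needed to close the proof.
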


We recall that, due to \Cref{rem:input_is_spanning_tree}, $T$ can be assumed to be a rooted spanning tree of $G$, because we can verify this in $O(\log \dt)$ rounds. From \Cref{rem:val-D-T}, we can assume that the algorithm knows the value of $D_T$. Also, recalling \Cref{rem:extend_to_forests}, our \MST verification algorithm, while stated for trees, can easily be extended to forests. We may also assume that $G$ is an ancestor-descendant graph with respect to $T$ (\Cref{def:ancestor-descendant-graph}). If this is not the case then we can use \Cref{cor:break_up_edges} to transform our input into an ancestor-descendant graph, with asymptotically the same number of edges. By \Cref{obs:ancestor_descendant_edges_suffice}, this transformation does not affect the output of the \MST verification problem.

A naive algorithm for the \MST verification problem might be to collect, for each vertex $u$, the path from $u$ to the root of $T$.
Using prefix-sum in $O(1)$ rounds, we can then compute, for each vertex $u'$ on the path from $u$ to $T$, the highest-weight edge on the path from $u$ to $u'$.
Then, for each non-tree edge $e = \{u, v\}$ in parallel (where $v$ is an ancestor of $u$ in $T$), we can use sorting and prefix-sum and the previously computed prefix-sum of the path from $u$ to the root to compute the highest weight of an edge on the path from $u$ to $v$.
Suppose this weight is $x_e$: if $x_e < w(e)$ (and this is true for all non-tree edges), then $T$ is a minimum spanning tree; otherwise it is not.

There is a problem with this algorithm, which is that to collect the path to the root for all vertices might require $\Omega(n \cdot \dt)$ global memory, which might be $\omega(m+n)$.
If we were able to somehow reduce the number of vertices in the graph to $\frac{n}{\dt}$, then this approach would work.

The challenge is therefore to preserve the information required to evaluate for each non-tree edge $\{u, v\}$, the highest weight of an edge from $u$ to $\LCA(u, v)$, while reducing the number of vertices by a factor of $D$. The  (effective) number of vertices/clusters reduces by a constant factor in each round by applying contraction. If we forget about the weights of the tree edges between contracted vertices, we may lose relevant information. 

Initially, we have a set of $n$ clusters where each cluster is a singleton vertex and over rounds these clusters are contracted. We need to preserve enough information about the weights of edges in the clusters to check whether any edge in $E \setminus T$ ought to be in a minimum spanning tree.
To do this, we define the notion of a \emph{weight preserving labeling} to keep track of two labels throughout the cluster contraction process (as formally defined in \Cref{defi:label}).
We apply the contraction process on $T$ for $O(\log D_T)$ rounds until we have a tree of $O(n/D_T)$ clusters.
Then, we can find the path from every cluster to the root cluster using $O(n)$ global space (\Cref{lem:collecting_paths}).
Finally, we argue that this path information along with the weight preserving labeling of clusters that we have maintained, is enough information to decide whether $T$ is an MST. 

\subsection{Weight-Preserving Labeling}
In this subsection we introduce the notion of a \emph{weight-preserving labeling}. As already discussed, the intuition is that we need to reduce the number of vertices by a factor of $\Omega(\dt)$ so that subsequent steps of the algorithm have $\Omega(\dt)$ global memory per vertex to work with. We do this using cluster contraction steps as outlined in \Cref{def:cluster_contraction}.

\begin{definition}[Weight-Preserving Labeling]\label{defi:label}
    Let $G = (V, E)$ be a graph with an edge-weighting function $w : E \rightarrow [W]$, $T \subset E$ be a candidate \MST, and $C$ be a partition of $V$ into a set of clusters.
    Assume that $G$ is an ancestor-descendant graph with respect to $T$.
    A \emph{weight-preserving labeling} of $(G, w, T, C)$ is a pair $(\through, \out)$ such that:
    \begin{itemize}
        \item for each edge $\{c_c, c_p\} \in T_C$ (where $c_p$ is the parent of $c_c$ in $T$), a label $\through(c_c, c_p)$ which takes a value in $[W]$; and
        \item for each edge $\{u, v\} \in E \setminus T$, labels $\out(u,v)$ and $\out(v, u)$ which take values in $[W] \cup \{-\infty\}$.
    \end{itemize}

    These labels represent the following:
    \begin{itemize}
        \item For each edge $\{c_p, c_c\} \in T_C$, let $c_p$ be the parent cluster and let $c_c$ be the child cluster.
        $\through(c_p, c_c)$ is the highest weight of an edge on the path from $\ell(c_p)$ to $p(\ell(c_c))$ in $T$.
        
        \item For each edge $\{u, v\} \in E \setminus T$, let $P$ be the path from $u$ to $v$ in $T$.
        $\out(u, v)$ is the highest weight edge on $P$  where both endpoints are in $c(u)$; $\out(v, u)$ is defined analogously.

        If there are no such edges, then $\out(u, v) = -\infty$.
        Clearly, if $v \in c(u)$, then $\out(u, v) = \out(v, u)$.
    \end{itemize}
\end{definition}

These labels, along with the weights of some edges which we do not contract, sufficiently capture all information necessary to evaluate the highest weight of an edge in $T$ which some edge $\{u, v\} \in E \setminus T$ covers:

\begin{observation}
\label{obs:weight_preserving_labelings_work}
    Consider a graph $G=(V, E)$ with a weighting function $w$, a rooted candidate \MST $T$, and a set of clusters $C$.
    Suppose that $G$ is an ancestor-descendant graph with respect to $T$, and let $(\through, \out)$ be a weight-preserving labeling of $(G, w, T, C)$.

    For any edge $\{u, v\} \in E \setminus T$, let $u$ be an ancestor of $v$ and let $P_C = (c(v)\dots c(u))$ be the path of clusters in $T_C$ which the path from $u$ to $v$ in $T$ passes through.
    Then, the highest weight of an edge on the path from $u$ to $v$ in $T$ is:
    \begin{equation*}
    \max
    \Biggl\{
        \out(u, v), \out(v, u), 
        \max_{(c_c, c_p) \in P_C | c_p \not=c(u)} 
        \left\{\through(c_p, c_c)\right\}, 
        \max_{(c_c, c_p) \in P_C} \biggl\{\ w\Bigl(\bigl\{\ell(c_c), p(\ell(c_c))\bigr\}\Bigr)\ \biggr\} 
    \Biggr\}
    \end{equation*}
\end{observation}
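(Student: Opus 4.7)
The plan is to show that the right-hand maximum equals the maximum weight of an edge on the path $P$ from $u$ to $v$ in $T$ by partitioning $P$ into sub-paths and matching each sub-path to exactly one of the four terms in the maximum.

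First I would decompose $P$ by cluster membership. Because each cluster induces a connected subtree of $T$ rooted at its leader, the only tree edge crossing from one cluster on $P_C$ to an adjacent one is the unique edge $\{\ell(c_c), p(\ell(c_c))\}$ joining them. Consequently $P$ is the concatenation of the following sub-paths: inside $c(v)$, from $v$ up to $\ell(c(v))$; for each intermediate cluster $c_k$ on $P_C$, the sub-path inside $c_k$ from $p(\ell(c_{k-1}))$ up to $\ell(c_k)$, where $c_{k-1}$ is the child of $c_k$ on $P_C$; inside $c(u)$, from $p(\ell(c'))$ up to $u$, where $c'$ is the child of $c(u)$ on $P_C$; and the inter-cluster tree edges $\{\ell(c_c), p(\ell(c_c))\}$ joining consecutive clusters on $P_C$.

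Next I would match each piece to a term in the right-hand maximum. The sub-paths inside $c(u)$ and $c(v)$ consist precisely of those edges of $P$ whose endpoints lie both in $c(u)$ and both in $c(v)$, respectively, so their maximum weights are $\out(u, v)$ and $\out(v, u)$ by \Cref{defi:label}. For an intermediate cluster $c_k$ on $P_C$, the sub-path inside $c_k$ is exactly the $T$-path from $\ell(c_k)$ to $p(\ell(c_{k-1}))$, whose maximum weight is $\through(c_k, c_{k-1})$ by definition; this term is contributed to the third maximum by the edge $(c_c, c_p) = (c_{k-1}, c_k)$ of $P_C$, which satisfies $c_p = c_k \neq c(u)$. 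Finally, every inter-cluster tree edge on $P$ has the form $\{\ell(c_c), p(\ell(c_c))\}$ for some edge $(c_c, c_p)$ of $P_C$, and is therefore counted by the fourth maximum. Since these subsets collectively cover every edge of $P$ and each label value equals the weight of some edge actually on $P$, the two maxima are equal.

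The main subtlety I expect to justify is the exclusion of the term with $c_p = c(u)$ from the third maximum. The label $\through(c(u), c')$ measures the maximum weight on the $T$-path from $\ell(c(u))$ all the way down to $p(\ell(c'))$, but $u$ may lie strictly below $\ell(c(u))$, so that path can contain edges above $u$ which are not on $P$; using $\out(u, v)$ correctly restricts attention to the portion of $P$ inside $c(u)$. The degenerate case $c(u) = c(v)$ is handled immediately: then $P_C$ is a single cluster, the third and fourth maxima are vacuous, and $\out(u, v) = \out(v, u)$ already equals the maximum weight on $P$ by the final sentence of \Cref{defi:label}. Similarly, when $c(u)$ is the parent of $c(v)$ in $T_C$ there are no intermediate clusters and the third maximum is empty, while $\out(u, v)$, $\out(v, u)$, and the single inter-cluster edge weight $w(\{\ell(c(v)), p(\ell(c(v)))\})$ together cover all of $P$.
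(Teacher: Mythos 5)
Your proof is correct and is exactly the argument the paper leaves implicit (the statement appears as an unproved Observation): decompose the $u$--$v$ path by cluster membership into the two end-segments, the intermediate within-cluster segments, and the inter-cluster edges, and match these to $\out(u,v)$, $\out(v,u)$, the $\through$ terms, and the $w(\{\ell(c_c),p(\ell(c_c))\})$ terms respectively. You also correctly identify and justify the one non-obvious point, namely why the edge of $P_C$ with $c_p=c(u)$ must be excluded from the $\through$ maximum and replaced by $\out(u,v)$.
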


We now argue that when we perform cluster contraction steps, we are able to efficiently maintain a weight-preserving labeling of the current clustering:

\begin{lemma}
\label{lem:cpl_recomputation}
    Consider a graph $G=(V, E)$ with a weighting function $w$, a rooted tree $T$, and a set of clusters $C$.
    Suppose that $G$ is an ancestor-descendant graph with respect to $T$, and let $(\through, \out)$ be a weight-preserving labeling of $(G, w, T, C)$.
    
    After a cluster contraction step is performed, returning a new set of clusters $C'$, we can compute a new weight-preserving labeling $(\through', \out')$ of $(G, w, T, C')$ using only $\through$, $\out$, and the weights of edges between vertices in different clusters in $C$.
\end{lemma}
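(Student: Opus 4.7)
The plan is to show that every new label --- $\through'(c'_p, c'_c)$ for each edge of $T_{C'}$ and $\out'(u,v), \out'(v,u)$ for each non-tree edge --- can be written as the maximum of a constant number of quantities, each of which is either an old label $\through(\cdot,\cdot)$ or $\out(\cdot,\cdot)$, or the weight of a $T$-edge whose two endpoints lie in different clusters of $C$. This immediately yields the lemma.

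For $\through'(c'_p, c'_c)$, where $c'_p$ is the parent of $c'_c$ in $T_{C'}$, let $c_p^s$ and $c_c^s$ be the senior sub-clusters of $c'_p$ and $c'_c$, so that $\ell(c'_p) = \ell(c_p^s)$ and $\ell(c'_c) = \ell(c_c^s)$. The $T$-path from $\ell(c_p^s)$ to $p(\ell(c_c^s))$ lies entirely inside $c'_p$. If $p(\ell(c_c^s)) \in c_p^s$, then $c_c^s$ was already a child of $c_p^s$ in $T_C$ and $\through'(c'_p, c'_c) = \through(c_p^s, c_c^s)$. Otherwise $p(\ell(c_c^s))$ lies in some junior $c_p^j$ of $c'_p$, and the path splits into three pieces: a sub-path in $c_p^s$ with maximum weight $\through(c_p^s, c_p^j)$, the boundary $T$-edge $\{p(\ell(c_p^j)), \ell(c_p^j)\}$, and a sub-path in $c_p^j$ with maximum weight $\through(c_p^j, c_c^s)$. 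The new label is the maximum of these three, all of the required form.

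For $\out'(u,v)$, assume without loss of generality that $v$ is a strict ancestor of $u$ (possible because $G$ is an ancestor-descendant graph with respect to $T$); the $u$-to-$v$ path is monotonically upward. I would case-split on the role of $c(u)$ inside $c'(u)$. If $c(u) = c'(u)$, or if $c(u)$ is the senior of $c'(u)$, then the path leaves $c'(u)$ precisely when it leaves $c(u)$ --- in the senior case, the other constituents of $c'(u)$ are $T_C$-children of $c(u)$, strictly below $\ell(c(u))$ --- so $\out'(u,v) = \out(u,v)$. If $c(u)$ is a junior of $c'(u)$ with senior $c_p^s$, then when $v \in c(u)$ we still have $\out'(u,v) = \out(u,v)$; otherwise the portion of the path inside $c'(u)$ decomposes as a $c(u)$-prefix contributing $\out(u,v)$, the boundary edge $\{\ell(c(u)), p(\ell(c(u)))\}$, and a $c_p^s$-suffix that contributes $\out(v,u)$ when $c(v) = c_p^s$ and contributes $\through(c_p^s, c(u))$ when $v$ is strictly above $c_p^s$ (in which case the path exits $c'(u)$ at $\ell(c_p^s)$).

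The symmetric downward analysis handles $\out'(v,u)$. If $c(v) = c'(v)$ or $c(v)$ is a junior of $c'(v)$, then $\out'(v,u) = \out(v,u)$. If $c(v)$ is the senior of $c'(v)$ and the downward path enters a junior $c^*$ of $c'(v)$, then $\out'(v,u)$ is the maximum of $\out(v,u)$, the boundary edge $\{p(\ell(c^*)), \ell(c^*)\}$, and a $c^*$-contribution equal to $\out(u,v)$ when $c(u) = c^*$ and to $\through(c^*, c^{**})$ otherwise, where $c^{**}$ is the unique $T_C$-child of $c^*$ on the $u$-to-$v$ path. The main subtlety is this last subcase, since $c^{**}$ depends on the specific non-tree edge rather than only on the contraction step; however, $c^{**}$ is fully determined by the pre-contraction cluster sequence from $c(u)$ up through $T_C$ to $c(v)$, which is part of the data already in hand. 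Since every term appearing above is either an old $\through$ or $\out$ label or the weight of a $T$-edge between different $C$-clusters, defining $\through'$ and $\out'$ by these maxima yields a valid weight-preserving labeling of $(G, w, T, C')$.
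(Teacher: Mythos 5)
Your proof is correct and follows essentially the same strategy as the paper's: express each new $\through'$ label as the max of the two old $\through$ labels along the senior--junior chain plus the inter-cluster boundary edge, and compute each new $\out'$ label by a case analysis on whether the relevant endpoint sits in the senior or a junior sub-cluster and where the path exits the new cluster. Your upward/downward decomposition of the $\out'$ cases covers exactly the paper's five sub-cases (in a slightly reorganized but equivalent form), so no changes are needed.
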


\begin{proof}
    First, note that if some cluster $c \in C$ is also a cluster in $C'$ (i.e.,~$c$ is not involved in a contraction), then for all vertices $u \in c$ and all non-tree edges $\{u, v\}$, $\out(u, v) = \out'(u, v)$.
    Also, for each child $c_i$ of $c$ in $C$, $\through(c_i, c) = \through'(c*_i,c)$, where $c*_i$ is the cluster which the vertices in $c_i$ are part of in $C'$.

    So we focus on clusters which have undergone a contraction step.
    Consider a cluster $c \in C$: suppose it has children $c_1\dots c_k$ in $T_C$, and that it forms cluster $c' \in C'$ when contracted.

    Let $c_i'$ be a child cluster of $c'$ in $T_{C'}$. 
    We handle each type of label separately and show that they can be properly computed:
    \begin{itemize}
        \item We show that we can compute $\through'(c_i', c')$.
        Recall that $\through'(c_i', c')$ is the highest-weight edge on a path in $T$ from the leader of $c'$ to the parent of the leader of $c_i'$.
        
        Since $c_i'$ is a child of $c'$ in $T_{C'}$, it must have been a child of some $c_i$ in $T_C$.
        The path from the leader of $c$ (also the leader of $c'$) to the leader of $c_i'$ must pass through $c_i$, and it suffices to take $\through'(c_i',c') = \max(\through(c_i', c_i),$ $w(\{\ell(c_i), p(\ell(c_i))\}) \through(c_i, c))$.
        
        \item We show that we can compute $\out'(u, v)$.
        Let $\{u, v\} \in E \setminus T$ and suppose $u \in c'$. Let $P$ be the path from $u$ to $v$ in $T$.
        Recall that $\out'(u, v)$ is the maximum-weight edge on $P$ where both endpoints are in the cluster in $C'$ containing $u$.
        
        Let $c_i$ be an arbitrary child of $c$ in $T_C$.
        There are $5$ sub-cases here based on the positions of $u$ and $v$ w.r.t. clusters that we are contracting:
        \begin{enumerate}
            \item $v \in c'$: In this case, either $u$ and $v$ were in the same cluster in $C$ (in which case $\out(u, v) = \out'(u, v)$); or $u$ and $v$ were in different clusters in $C$.
            In the latter case note our assumption that all non-tree edges are ancestor-descendant, so suppose w.l.o.g.\ that $u$ was in $c$ and $v$ was in a child $c_i$ of $c$.
            Then, since all but one edge had endpoints either both in the cluster containing $u$ or both in the cluster containing $v$, $\out'(u, v) = \max\{\out(u, v), w(\ell(c_i), p(\ell(c_i)), \out(v, u)\}$.
           
            \item $u \in c$, and $P$ does not go through any child of $c$ in $T_C$: In this case $P$ must leave $c$ via its leader.
            This means that the section of the path in the cluster containing $u$ contains the same edges, and so we have $\out'(u, v) = \out(u, v)$.
            
            \item $u \in c$, and $P$ goes through cluster $c_i$ in $T_C$. Since $c' \in C'$ contains all vertices in both $c$ and $c_i$, $\out'(u, v)$ must represent the highest-weight encountered when leaving $c$, taking the edge between $c$ to $c_i$, and then leaving $c_i$ via one of its child clusters (say ${c_i}_j$) in $C$. Then $\out'(u, v) = \max\{\out(u, v), w(\ell(c_i), p(\ell(c_i))), \through({c_i}_j, c_i)\}$.
            
            \item $u \in c_i$, and $P$ does not go through $c$ in $T_C$. 
            In this case $P$ must leave $c_i$ via one of its children in $C$. The contraction of $c$ with its children does not affect the edges encountered when leaving $c_i$ via one of its children, and so $\out'(u, v) = \out(u, v)$.
            
            \item $u \in c_i$, and $P$ goes through $c$ in $T_C$, and exits $c$ via its leader. 
            This case is analogous to case 3: we have $\out'(u, v) = \max \{\out(u, v), w(\ell(c_i), \ell(p(c_i))), \through(c_i, c)\}$.     \qedhere  
        \end{enumerate}
    \end{itemize}
\end{proof}

We argue that we are able to implement the above approach in sublinear \MPC, using optimal global memory:

\begin{lemma}
\label{lem:vertex_reduction_with_wpl}
    Let $G$ be a graph with a set $C$ of clusters and a weight-preserving labeling $(\through, \out)$.
    There is an \MPC algorithm which, in $O(1)$ rounds, outputs a new set of clusters $C'$ such that $|C'| \leq |C| / \alpha$ (for some constant $\alpha > 1$), along with a weight-preserving labeling $(\through', \out')$ for $C'$.
    The algorithm uses $\lspace = O(n^\delta)$ and $\gspace = O(n+m)$.
\end{lemma}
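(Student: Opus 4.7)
The plan is to combine the structural reduction of \Cref{lem:reduce_constant_fraction} with the update rules of \Cref{lem:cpl_recomputation}, and then argue that each of those update rules can be evaluated in $O(1)$ \MPC rounds with optimal global memory by means of standard sorting and prefix-sum primitives on inputs of size $O(m+n)$.

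First, I would apply \Cref{lem:reduce_constant_fraction} directly to the cluster-tree $T_C$, treating each cluster as a single vertex. This returns, in $O(1)$ rounds and using $O(n^\delta)$ local and $O(n+m)$ global memory, a set $C'$ of clusters with $|C'| \leq |C|/\alpha$ for some constant $\alpha > 1$, and it does so by performing a contraction step in the sense of \Cref{def:contraction_step}. Each new cluster $c' \in C'$ is therefore either an old cluster of $C$ that was untouched, or the union of some senior sub-cluster $c$ with a non-empty set of its junior sub-clusters $c_1, \dots, c_k$.

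Next, I would compute $(\through', \out')$ by implementing the five-case recipe of \Cref{lem:cpl_recomputation}. For the $\through'$ labels there are $O(n)$ new parent--child pairs in $T_{C'}$, and each $\through'(c_i', c')$ is a maximum of a constant number of already-known quantities (a $\through$ label inside a sub-cluster, the weight of a single tree edge crossing between two old clusters, and a $\through$ label of the senior sub-cluster); these quantities can be brought together by a constant number of sorts of $O(n)$ records keyed by $(c, c_i)$ and combined locally. For the $\out'$ labels the input is $O(m)$ non-tree edges; for each such edge $\{u,v\}$ I would attach to its record the cluster identifiers $c(u), c(v)$ in $C$ together with the identity of the new cluster each belongs to in $C'$, determine by a single comparison which of the five cases of \Cref{lem:cpl_recomputation} applies, and then fetch the $O(1)$ auxiliary values it requires ($\through$ labels of neighbouring edges in $T_C$ and the weights of the one or two tree edges that cross between old clusters along the relevant section of the path). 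Each of these fetches is a join between the edge records and the (at most $O(n)$) cluster-edge records, which again is a constant number of sorts and prefix-sums.

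The main obstacle is respecting the low-space constraint: a single cluster may be incident to many non-tree edges and to many child clusters, so no one machine can collect all of them. However, every label computation is a max over a constant number of attributes per record, and every attribute is a lookup into a relation of size $O(n+m)$ keyed by cluster or cluster pair; such lookups are exactly what sorting and prefix-sum handle in $O(1)$ rounds on an \MPC with $\lspace = O(n^\delta)$ and $\gspace = O(n+m)$. Correctness is inherited from \Cref{lem:cpl_recomputation}, which guarantees that the values recovered in each case are precisely $\through'$ and $\out'$ as required by \Cref{defi:label}. Putting these pieces together gives the claimed $O(1)$-round algorithm with optimal utilisation.
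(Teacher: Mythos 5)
Your proposal is correct and takes essentially the same route as the paper's proof: perform the contraction step via \Cref{lem:reduce_constant_fraction}, then realise the update rules of \Cref{lem:cpl_recomputation} by noting that each $\through'$ and $\out'$ label is a maximum of $O(1)$ already-known quantities, gathered to the relevant record with a constant number of sorts and prefix-sums over data of total size $O(m+n)$.
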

\begin{proof}
    Observe that the computation of $\through'(\cdot, c)$ for a cluster $c$ only depends on $\through(c_i, c)$ and $\through(\cdot, c_i)$ for all children $c_i$ of $c$; this computation can be performed locally along with the contractions.

    Similarly, the computation of $\out'(u, v)$ depends on $\out(u, v)$. It also depends on which case of \Cref{lem:cpl_recomputation} applies, and depending on the case a constant amount of additional information (\through values and the weight of edges which are being contracted), which can be made locally available using sorting.
    The applicable case can be computed locally using the interval labeling of the vertices, and the constant amount of additional information for each non-tree edge can be placed on the same machine as the edge and used to compute $\out'(u, v)$.

    Since there is one \through label for each cluster in $C_i$, and $|C_i| < n$; and there are two \out labels for each non-tree edge, of which there are $m$, the algorithm requires $O(m+n)$ total memory.
\end{proof}

We now argue that we can perform $O(\log \dt)$ cluster contraction steps, while keeping track of a weight-preserving labeling, in $O(\log \dt)$ rounds of sublinear \MPC with optimal global memory.

\begin{corollary}
\label{cor:vertex_reduction}
    Applying the algorithm from \Cref{lem:vertex_reduction_with_wpl} on the input instance $O(\log \dt)$ times gives a graph with a weight-preserving labeling and $\frac{n}{\poly(\dt)}$ clusters.
\end{corollary}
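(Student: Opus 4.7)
The plan is a straightforward induction on the number of contraction steps, stitching together $\Theta(\log \dt)$ applications of \Cref{lem:vertex_reduction_with_wpl} and exploiting the constant-factor shrinkage guaranteed by each application.

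First, I would initialize the clustering $C_0 = \{\{v\} : v \in V\}$ of singleton clusters and set up the trivial weight-preserving labeling $(\through_0, \out_0)$: for each singleton cluster $c = \{v\}$ with parent cluster $c_p$ in $T_{C_0}$, we have $\ell(c_p) = c_p$ and $p(\ell(c)) = p(v) = c_p$, so the path from $\ell(c_p)$ to $p(\ell(c))$ is empty and $\through_0$ is set to the identity boundary value for $\max$ (e.g.\ $-\infty$); similarly, each cluster contains no internal tree edges, so $\out_0(u, v) = -\infty$ for every $\{u, v\} \in E \setminus T$. This initialization takes $O(1)$ rounds and $O(m+n)$ global memory and trivially satisfies \Cref{defi:label}.

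Next, I would iteratively invoke \Cref{lem:vertex_reduction_with_wpl}. Let $(C_i, \through_i, \out_i)$ denote the output after $i$ applications, starting from $(C_0, \through_0, \out_0)$. By the lemma, $|C_i| \le |C_{i-1}|/\alpha$ for some fixed constant $\alpha > 1$, so by induction $|C_i| \le n/\alpha^i$, and $(\through_i, \out_i)$ is a weight-preserving labeling of $(G, w, T, C_i)$. Choosing $k := \lceil c \log_\alpha \dt \rceil$ for a sufficiently large constant $c$ yields $|C_k| \le n/\dt^{c} = n/\poly(\dt)$, as required. Each iteration takes $O(1)$ rounds and uses $O(m+n)$ global memory by the lemma, so the entire procedure takes $O(\log \dt)$ rounds.

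The only subtlety, and the main (rather minor) obstacle, is verifying that the memory state carried between iterations remains $O(m+n)$ so that the hypothesis of \Cref{lem:vertex_reduction_with_wpl} is satisfied at the start of each iteration: each $\through_i$ has at most $|C_i| - 1 \le n$ entries (one per edge of $T_{C_i}$), and $\out_i$ has exactly $2|E \setminus T| \le 2m$ entries, so the invariant is preserved inductively and the $O(m+n)$ global memory budget is never violated either between iterations or (by the lemma itself) during any one of them.
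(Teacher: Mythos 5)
Your proposal is correct and matches the paper's (implicit) argument: the paper states this as an immediate corollary without proof, relying exactly on the geometric decay $|C_i|\le n/\alpha^i$ from iterating \Cref{lem:vertex_reduction_with_wpl} for $\Theta(\log\dt)$ steps. Your explicit initialization of the trivial labeling on singleton clusters and the check that the carried state stays within $O(m+n)$ are sensible details the paper leaves unstated.
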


\subsection{The Rest of the Algorithm}

We now show that we can use the weight-preserving labeling from the previous section, along with the approach sketched earlier, to solve the problem of \MST verification.

First, we argue that we can collect, for each vertex $u$ in a tree, the path from $u$ to the root of the tree, in $O(\log \dt)$ rounds with an extra factor of $\dt$ global memory:

\begin{lemma}
\label{lem:collecting_paths}
    Let $T$ be a tree with $n$ vertices and diameter $\dt$.
    In $O(\log \dt)$ rounds, we can, in parallel for each vertex $v$, collect the path from $v$ to the root $r$ of $T$. This uses $\gspace = O(n \cdot \dt)$ global memory.
\end{lemma}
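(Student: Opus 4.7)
The plan is to extend the pointer-jumping idea of \Cref{lem:pointer_jumping} so that each vertex carries along its \emph{entire} current path rather than only the far endpoint. The invariant I would maintain at step $i$ is that for every vertex $v$ of depth at least $2^i$ in $T$, we have stored the sequence of tuples $(v, j, p_j(v))$ for $j = 1, 2, \ldots, 2^i$, distributed across machines as needed; vertices of smaller depth already store their complete path to the root and are simply frozen and carried along unchanged.

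To double from $2^i$ to $2^{i+1}$, I would use the identity $p_{2^i + j}(v) = p_j(p_{2^i}(v))$: each vertex $v$ needs a copy of the length-$2^i$ path currently owned by $u := p_{2^i}(v)$. I would implement this in $O(1)$ MPC rounds by taking the union of (a) all existing path tuples keyed by their ``owner'' $u$ and (b) all ``requests'' $(v, u)$ from descendants whose current farthest ancestor is $u$, sorting lexicographically so that each $u$'s path fragments and incoming requests land on consecutive machines, then emitting via prefix-sum / broadcast, for each requester $v$ and each $j \in [1, 2^i]$, a fresh tuple $(v,\, 2^i + j,\, p_j(u))$. After $\lceil \log \dt \rceil$ such doubling steps we have $2^i \geq \dt$, so every vertex stores its complete root path.

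The main technical obstacle I expect is controlling memory under the broadcast, since an ancestor $u$ may be shared by many descendants and therefore its path gets duplicated many times. Globally this is fine: at step $i$ each vertex owns exactly $\min(2^i, \mathrm{depth}(v))$ tuples, so the total is at most $n \cdot 2^i$, summing to $O(n \cdot \dt)$ at the final step, which matches the stated budget. Locally, a single path may exceed $n^\delta$ words and so has to be striped across machines; the standard constant-round MPC sort and prefix-sum primitives used throughout the paper handle broadcasting of many copies of a multi-machine key uniformly, so the $O(1)$-round doubling step goes through. Correctness follows by induction on $i$ from the identity above, and the round and memory bounds are immediate: $O(\log \dt)$ rounds and $\gspace = O(n \cdot \dt)$ global memory.
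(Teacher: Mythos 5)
Your proposal is correct and follows essentially the same route as the paper: the paper's proof is a one-line remark that one runs the doubling procedure of \Cref{lem:pointer_jumping} without discarding the shorter-distance edges, which is precisely the segment-copying doubling step you spell out via $p_{2^i+j}(v)=p_j(p_{2^i}(v))$. Your version just makes explicit the invariant, the sort/prefix-sum broadcast of each ancestor's path fragment to its distance-$2^i$ descendants, and the $O(n\cdot 2^i)$ memory accounting that the paper leaves implicit.
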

\begin{proof}
A similar procedure can be used as in \Cref{lem:pointer_jumping}, except after each iteration the edges with distance $i$ are not discarded. This still takes $O(\log \dt)$ rounds, but now $\dt$ global memory per vertex is required, since the entire path to the root is being stored.
\end{proof}

Finally, we conclude by proving the main result of the section:

\begin{proof}[Proof of~\Cref{thm:mst_verification}]
    We start by computing a clustering using \Cref{cor:vertex_reduction}, along with a weight-preserving labeling $(\through, \out)$ of $(G, w, T, C)$.

    Note that the diameter of $T_C$ is at most $\dt$.
    We can execute \Cref{lem:collecting_paths} on $T_C$ (taking $O(\log\dt)$ rounds and using $O(|C| \cdot D_T) = O(n)$ global memory), and then for each cluster $c \in C$, the path from $c$ to the root of $T_C$ is stored on consecutive machines.
    We can see that, for each cluster $c$ in the path from $c(u)$ to $r$, we can compute the maximum $\through$ value on the path from $c(u)$ to $c$: this can be done with prefix-sum in $O(1)$ rounds.
    In a similar way, we can compute the maximum inter-cluster edge on the path from $c(u)$ to $c$ (for each of the ancestors of $u$ in $T_C$, $c$), in $O(1)$ rounds.

    For each non-tree edge $\{u, v\} \in E \setminus T$, it suffices to collect $\out(u, v)$, $\out(v, u)$, $w(\{u, v\})$, one of these maximum $\through$ values, and one of these maximum inter-cluster edge values.
    Then, it can be determined locally whether $\{u, v\}$ is of larger weight than all edges on the path from $u$ to $v$ in $T$.
    This information can be collected onto a single machine using $O(1)$ rounds of sorting, after which the necessary computation can be performed locally.
    In total the information required for each non-tree edge $\{u, v\}$ is of constant size.

    Finally, we note that all steps are deterministic, except for the computation of the DFS interval labeling (\Cref{lem:dfs_interval_labeling}): therefore, the algorithm as a whole succeeds with high probability.
\end{proof}

\section{MST Sensitivity}
\label{sec:sensitivity}

In this section we present our \MST sensitivity algorithm.
The conclusion of this section is the following: 

\begin{theorem}[\textbf{\MST Sensitivity Algorithm}]
\label{thm:mst_sensitivity}Let $\delta$ be an arbitrarily small positive constant. Let $G = (V, E)$ be an edge-weighted input graph and let $T \subseteq E$ be an \MST of $G$ with diameter $\dt$.

The problem of \MST sensitivity can be solved in $O(\log \dt)$ rounds (with high probability) on an \MPC with local memory  $\lspace = O(n^\delta)$ and optimal global memory $\gspace = O(m+n)$.
\end{theorem}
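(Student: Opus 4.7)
My plan is to reduce \MST sensitivity, after transforming $G$ into an ancestor-descendant graph via \Cref{cor:break_up_edges} (justified by \Cref{obs:ancestor_descendant_edges_suffice}), to computing two quantities. For each non-tree edge $e = \{u,v\}$ I need the maximum weight $M(e)$ of a tree edge on the $u$--$v$ path in $T$, so that $\sens(e) = w(e) - M(e)$. For each tree edge $t$ I need the minimum weight $m(t)$ of a non-tree edge covering $t$, so that $\sens(t) = m(t) - w(t)$. The first quantity is obtained by instrumenting the \MST verification algorithm of \Cref{thm:mst_verification} to output the actual maximum value computed by the weight-preserving labeling, instead of the yes/no comparison; correctness is immediate from \Cref{obs:weight_preserving_labelings_work}, and the round and global-memory bounds are unchanged.

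For the second quantity I would extend the hierarchical clustering machinery of \Cref{sec:mst_verification} with a \emph{dual} set of labels, maintained alongside the contractions. For each cluster $c$ at every level of the clustering I maintain (i) for each tree edge $t$ already absorbed inside $c$, the minimum weight of a non-tree edge, both of whose endpoints are inside $c$, that covers $t$; and (ii) for each leaf $v$ of $c$ in the sense of \Cref{def:clusters}, the minimum weight of a non-tree edge whose ancestor endpoint lies at or above $\ell(c)$ and whose descendant endpoint lies at or below $v$. A case analysis analogous to \Cref{lem:cpl_recomputation} shows that after a contraction step these labels can be recomputed in $O(1)$ rounds from the previous labels together with the weights of the contracted edges, and by \Cref{obs:hierarchical_clustering_size} the total storage across all levels remains $O(n+m)$. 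After the $O(\log \dt)$ contraction steps of \Cref{cor:vertex_reduction} the cluster tree has $n/\poly(\dt)$ nodes, so \Cref{lem:collecting_paths} lets me collect the root path of every surviving cluster within $O(n)$ global memory and compute $m(\cdot)$ for every surviving inter-cluster tree edge by a prefix-min along these paths. An unwinding pass in the spirit of \Cref{alg:undo_contractions} then opens the clusters back up, and at each split uses the stored labels together with the already-computed $m(\cdot)$ value of the cluster's outgoing tree edge to determine $m(\cdot)$ for each newly exposed intra-cluster tree edge.

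The main obstacle I expect is the bookkeeping during contraction: a single non-tree edge can cover $\Omega(\dt)$ tree edges, so its contribution cannot be represented explicitly. The key insight is that only min-aggregates are needed, and each non-tree edge contributes to at most one label of type (ii) per cluster it passes through; combined with the amortization of \Cref{obs:hierarchical_clustering_size}, this keeps the total label storage within $O(n+m)$. Verifying that the aggregated labels carry enough information to recover $m(t)$ for every tree edge at every level of the unwinding — in particular, that splitting a label of type (ii) back into labels of type (i) for the newly exposed internal edges is always well-defined and consistent with the prefix-min values on the reduced cluster tree — will be the technical heart of the argument.
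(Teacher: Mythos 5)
Your high-level architecture matches the paper's: reduce to an ancestor-descendant graph, get non-tree sensitivities from the verification machinery, and reduce tree-edge sensitivity to computing, for each tree edge $t$, the minimum weight $\mc(t)$ of a covering non-tree edge via hierarchical clustering, a solve on the reduced cluster tree, and an unwinding pass. However, the two places you flag as "the technical heart" are exactly where the proposal has genuine gaps. First, the label maintenance during contraction is not within budget as stated. Your type-(ii) label is defined for \emph{every} cluster a non-tree edge's tree-path passes through, and you claim each edge "contributes to at most one label of type (ii) per cluster it passes through." But a single non-tree edge can pass through $\Omega(\dt)$ clusters, so the number of (edge, cluster) contribution pairs can be $\Omega(m\cdot \dt)$; \Cref{obs:hierarchical_clustering_size} bounds the number of clusters, not the number of such incidences, so it does not give the amortization you need. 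The paper's resolution is different in a crucial way: it \emph{truncates} each non-tree edge at every contraction step so that the invariant "no non-tree edge covers any edge inside the cluster containing its endpoint" is maintained, and each truncation leaves behind only $O(1)$ \emph{root-to-leaf notes} $(r,l,i,w)$, deduplicated by keeping the minimum weight per $(r,l,i)$ triple; a charging argument then bounds the total number of notes by $O(n)$. Coverage of the clusters in the \emph{middle} of an edge's path is never tracked during contraction at all --- it is deferred entirely to the final reduced tree.

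Second, your middle phase is aggregating in the wrong direction. Collecting root paths and taking a prefix-min (as in \Cref{lem:collecting_paths} plus the verification argument) computes, for each \emph{non-tree} edge, an aggregate over the tree edges on its path. For $\mc$ of an inter-cluster tree edge $\{c, p(c)\}$ you need the reverse: a minimum over all non-tree edges whose lower endpoint lies in the subtree below $c$ and whose upper endpoint lies strictly above $c$. A prefix-min along a root path only sees edges emanating from $c$ itself, not from $c$'s entire subtree. The paper handles this with per-cluster arrays $A_c$ of length $\dt$ indexed by the depth reached by outgoing non-tree edges, followed by $O(\dt)$ parallel subtree-minimum computations (\Cref{alg:sens:pettie}); this fits in memory only because the reduced tree has $n/\poly(\dt)$ clusters. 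That same phase also produces the root-to-leaf notes for the paths between adjacent clusters, which is what makes the subsequent unwinding (splitting each note into at most two sub-cluster notes plus one $\mc$ update per level) well-defined and linear in total size. Without the truncation/note mechanism and the depth-indexed subtree aggregation, your labels neither stay within $O(m+n)$ global memory nor suffice to recover $\mc(t)$ for the intra-cluster edges during unwinding.
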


We recall that, given \Cref{rem:input_is_spanning_tree} we can assume that $T$ is a rooted spanning tree. From \Cref{rem:val-D-T}, we can assume that the algorithm knows the value of $D_T$. Given \Cref{rem:extend_to_forests}, while our algorithm is stated for trees it can easily be extended to forests; and that, given \Cref{obs:ancestor_descendant_edges_suffice} and \Cref{cor:break_up_edges}, we can assume that $G$ is an ancestor-descendant graph (\Cref{def:ancestor-descendant-graph}) with respect to $T$.
We also recall that we can
compute a DFS interval label for all vertices in $O(\log \dt)$ rounds (\Cref{lem:dfs_interval_labeling}).

First, we note that our \MST verification algorithm works by computing, for each non-tree edge $e$, the highest weight of an edge which $e$ covers in $T$.
Since this is the information which is relevant for computing the sensitivity of non-tree edges, this almost gives us an \MST sensitivity algorithm for non-tree edges immediately.
It remains to compute, for each pair of edges $\{u, \LCA(u, v)\}$, $\{v, \LCA(u, v)\}$, which has the lowest sensitivity.
This is easily accomplished using sorting and prefix-sum in $O(1)$ rounds, and so we have the following:

\begin{observation}
\label{obs:sensitivity_of_non_tree_edges}
The \MST verification algorithm outlined in \Cref{sec:mst_verification} can be extended to compute the sensitivity of edges in $E \setminus T$.
\end{observation}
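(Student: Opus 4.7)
The plan is to observe that the \MST verification algorithm of \Cref{sec:mst_verification} already computes, for each non-tree edge $e = \{u,v\}$ in the ancestor-descendant input, the quantity $M(u,v) := \max_{t \in P(u,v)} w(t)$, where $P(u,v)$ is the $u$--$v$ path in $T$. This is exactly what \Cref{obs:weight_preserving_labelings_work} assembles out of $\out(u,v)$, $\out(v,u)$, the maximum $\through$ value along the cluster-path, and the maximum inter-cluster edge weight along that path. For a non-tree edge, \Cref{def:mst_sensitivity} asks by how much $w(e)$ must decrease for $e$ to enter some \MST; the standard cycle characterization gives $\sens(e) = w(e) - M(u,v)$. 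So rather than discarding $M(u,v)$ after using it for the boolean test $w(e) \geq M(u,v)$, the algorithm simply emits the difference on the same machine in the same round.

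First I would run the entire \MST verification pipeline unchanged: reduce to an ancestor-descendant graph via \Cref{cor:break_up_edges}, apply the hierarchical clustering of \Cref{cor:vertex_reduction} while maintaining $(\through, \out)$ through \Cref{lem:cpl_recomputation}, and collect paths-to-root in the contracted cluster tree via \Cref{lem:collecting_paths}. After the final $O(1)$-round gather step of the verification algorithm, each ancestor-descendant non-tree edge $e$ has a machine holding a constant-size packet that already contains $M(u,v)$ and $w(e)$; the machine locally writes out $\sens(e) = w(e) - M(u,v)$ as well.

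Finally, I handle the edge-splitting introduced by \Cref{cor:break_up_edges}: every original edge $\{u,v\} \in E \setminus T$ was replaced by the pair $e_1 = \{u, \LCA(u,v)\}$ and $e_2 = \{v, \LCA(u,v)\}$. By \Cref{obs:ancestor_descendant_edges_suffice}, the sensitivity of the original edge is $\min(\sens(e_1), \sens(e_2))$. I tag each replacement edge in \Cref{cor:break_up_edges} with an identifier of the original edge it came from; then one sort keyed on that identifier brings the two replacement edges onto consecutive machines, and a single segmented prefix-sum/aggregation produces the minimum, which is reported as the sensitivity of the original non-tree edge. This postprocessing is $O(1)$ rounds and $O(m+n)$ global memory.

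No new obstacle arises beyond what is already overcome in \Cref{sec:mst_verification}: the non-trivial work—reducing the number of clusters by a $\poly(\dt)$ factor while preserving enough weight information in $(\through, \out)$ to reconstruct $M(u,v)$—is exactly the content of \Cref{lem:cpl_recomputation,cor:vertex_reduction,lem:collecting_paths}. The only additions are a local arithmetic step and a single sort-plus-aggregation, both of which respect the local-memory bound $\lspace = O(n^\delta)$ and preserve optimal global memory $\gspace = O(m+n)$. The overall round complexity is $O(\log \dt)$, and the algorithm inherits the high-probability guarantee of the verification algorithm (which itself comes from \Cref{lem:dfs_interval_labeling}).
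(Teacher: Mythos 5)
Your proposal is correct and takes the same route as the paper: the verification algorithm already assembles the max tree-edge weight $M(u,v)$ on each $u$--$v$ path (via \Cref{obs:weight_preserving_labelings_work}), so one just outputs $w(e)-M(u,v)$ instead of a boolean, and then uses \Cref{obs:ancestor_descendant_edges_suffice} together with an $O(1)$-round sort/prefix-sum to merge the two split edges back to the original edge by taking the minimum. The paper states this only as a short sketch preceding the observation; your write-up fills in the same steps in a bit more detail, with no substantive deviation.
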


Finally, we observe that the problem of \MST sensitivity is (for tree edges) the same computing the minimum weight of a covering edge:

\begin{observation}
\label{obs:sensitivity_is_min_covering}
    To compute the sensitivity of a tree edge $e$ ($\sens(e)$, see \Cref{def:mst_sensitivity}), it suffices to compute the minimum weight of a non-tree edge which covers $e$ (see \Cref{def:covering}).
    Let $\mc(e)$ be the minimum weight of a non-tree edge which covers $e$.
    Then $\sens(e) = \mc(e) - w(e)$.
\end{observation}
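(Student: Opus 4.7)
The plan is to prove the identity $\sens(e) = \mc(e) - w(e)$ for $e \in T$ by characterizing, for each perturbation $\Delta \geq 0$, whether $e$ still lies in some MST after replacing $w(e)$ with $w'(e) = w(e) + \Delta$. I will use two classical MST facts: (i) a spanning tree $T$ is an MST of $(G,w)$ iff for every non-tree edge $g$, every tree edge on the fundamental cycle of $g$ in $T$ has weight at most $w(g)$; and (ii) an edge is in some MST iff it is not the unique strict maximum-weight edge on any cycle of $G$. Applying (i) to $T$ under the original weights immediately yields $w(e) \leq w(f)$ for every covering edge $f$, so $\mc(e) \geq w(e)$ and the quantity $\mc(e) - w(e)$ is non-negative.

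In the regime $\Delta \leq \mc(e) - w(e)$, I would argue that $T$ is still an MST under $w'$, hence $e$ is still in some MST. Fact (i) needs rechecking only for those non-tree edges $g$ whose fundamental cycle in $T$ contains $e$, i.e.\ the edges $g$ that cover $e$; for all other $g$, the fundamental cycle and all its weights are unchanged. For such a covering $g$, the other tree-edge weights on the fundamental cycle were already $\leq w(g)$, and the new value $w'(e) \leq \mc(e) \leq w(g)$, so the MST condition is preserved.

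In the complementary regime $\Delta > \mc(e) - w(e)$, I would exhibit a cycle witnessing via fact (ii) that $e$ lies in no MST. Let $f$ be a covering edge with $w(f) = \mc(e)$ and consider its fundamental cycle $C_f$ in $T$: every tree edge on $C_f$ is also covered by $f$, hence has weight at most $w(f)$ by (i) applied to the original $T$, while $w'(e) > \mc(e) = w(f)$. Thus $e$ is the strict maximum-weight edge of $C_f$ under $w'$, and by (ii) $e$ belongs to no MST. Combining the two regimes yields $\sens(e) = \mc(e) - w(e)$.

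The only subtle point, rather than a genuine obstacle, is the value exactly at $\Delta = \mc(e) - w(e)$: here $w'(e) = w(f)$ and $T$ remains an MST (together with the non-unique alternative $T - e + f$), so $e$ is still in some MST. This matches the natural reading of \Cref{def:mst_sensitivity}---the threshold up to which $e$ survives in an MST equals the slack past which it no longer does---and is consistent with the convention used in \cite{Pettie15, DRT92}, so no additional argument is required.
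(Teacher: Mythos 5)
Your proof is correct and complete. The paper states this as a bare observation with no proof, treating it as a classical consequence of the cycle optimality condition for spanning trees and the red rule (cycle property), which are exactly the two facts you invoke; your case split on $\Delta$ and the tie-handling at $\Delta = \mc(e) - w(e)$ are exactly the standard argument. One corner you leave implicit but which is harmless: if $e$ is a bridge and has no covering edge, then $\mc(e) = \infty$ and both the statement and your argument degenerate correctly (the second regime is vacuous), consistent with the paper's initialization of $\mc(e) = \infty$ in \Cref{alg:sens}.
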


For ease of notation, we will work with \mc rather than \sens values in this section.
We can set aside a series for machines for keeping track of and updating \mc values: this requires an additive $O(n)$ global memory (and so does not affect the total asymptotically).
These machines can process $O(n)$ updates to the \mc values of vertices in $O(1)$ rounds, using sorting and prefix sum.

We will use an approach to solve the problem of \MST sensitivity which employs a similar method to our \MST verification algorithm: reduce the number of vertices by a factor of $\Omega(\poly(\dt))$ using a hierarchical clustering; use an algorithm which requires an additional $\Omega(\poly(\dt))$ memory per vertex; and finally use this information to compute the sensitivity values for the original edges of the tree.

However, this approach is much more complex than it was for the \MST verification algorithm.
In particular, for verification it sufficed to \emph{summarize} information about the clusters.
For sensitivity we have to be careful to preserve enough information about the clusters so that, when we undo the clustering, we can eventually compute the sensitivity of the edges inside them. Informally speaking, we need to be able to retrieve the information related to clusters from any intermediate step. Since there are $O(\log D_T)$ many intermediate steps before obtaining final clusters, $O(\log D_T)$ rounds are enough to retrieve the information. 

In \Cref{sec:contraction}, we discuss our contraction process for \MST sensitivity. Note that, at the end of the contraction process (as in \Cref{alg:sens:cluster_contractions}), we have a clustering $C$, a tree $T_C$ of $O(n/\poly(\dt))$ clusters, and a graph $G'$ (obtained by replacing some of the original edges).
In \Cref{subsec:simulating_pettie}, we discuss how to find the sensitivity of the edges between adjacent clusters in $C$ (See in \Cref{alg:sens:pettie}).
This is made possible by the extra $\poly(\dt)$ global space we have per cluster.
Finally, in \Cref{sec:unwind}, we discuss how we ``unwind'' the relevant information w.r.t. intermediate clusters in $O(\log D_T)$ rounds to determine the sensitivity of the edges in $T$ which we contracted into clusters during the contraction phase (see \Cref{alg:sens:undo_contractions}).

Our overall algorithm is as follows:

\begin{algorithm}
\caption{$\mathsf{TreeEdgeSensitivity}(G, T)$}
\label{alg:sens}
\begin{algorithmic}[1]
    \State For each edge $e \in T$, set $\mc(e) = \infty$. 
    \State Compute a DFS interval label $I$ for all vertices.
    \State Replace each edge $\{u, v\} \in E \setminus T$ with edges $\{u, \LCA(u, v)\}$ and $\{v, \LCA(u, v)\}$ (\Cref{cor:break_up_edges}).
    \State $C_\tau, N, E' \gets \mathsf{ClusterContractions}(G, T)$ (\Cref{alg:sens:cluster_contractions})
    \State $N' \gets \mathsf{ClusterSensitivity}(G, E', T_{C_\tau}, C_\tau)$ (\Cref{alg:sens:pettie})
    \State $\mathsf{UndoContractions}(G, T, N \cup N', C_\tau)$ (\Cref{alg:sens:undo_contractions})
\end{algorithmic}
\end{algorithm}

\subsection{Contraction into Clusters}\label{sec:contraction}
To run our sensitivity algorithm in \Cref{subsec:simulating_pettie}, we need to have $\poly(\dt)$ global memory per vertex. To accomplish this, we first perform a hierarchical clustering, producing a set of clusters $C$. We then run our sensitivity algorithm on $T_C$.

We are going to want to maintain the invariant that, at each step of the hierarchical clustering, if an endpoint of a non-tree edge is in a cluster $c$, it does not cover any of the edges inside $c$.

When we perform a contraction, we maintain this invariant by ``cutting off'' any ends of a non-tree edge which would violate the invariant. Suppose there is a non-tree edge $e = \{u, v\}$, and after a contraction step $u$ is inside a cluster $c$. Let $u'$ be the last vertex on the path from $u$ to $v$ which is also in $c$. Suppose $u \not= u'$: in this case the invariant would be violated. We keep the part of the edge which is outside of the cluster ($\{u', v\}$) and carry it forward to future contraction steps. For the part of the edge inside the newly-formed cluster $c$, we first observe that (because the invariant was preserved from the previous contraction step) this part forms a path, from the root of one sub-cluster to one of its leaves.

We create what we call a \emph{root-to-leaf note}: a reminder of the part of the edge which we do not carry forward, to be dealt with later in the algorithm. We formally define this note as follows:

\begin{definition}[Root-to-Leaf Note]
\label{def:root_to_leaf_note}
    A \emph{root-to-leaf note} is a tuple $(r, l, i, w)$, where: 
    \begin{itemize}
        \item $r$ is the root of a cluster $c$;
        \item $l$ is one of the leaves of $c$;
        \item $i$ is the index of the contraction step in which the cluster $c$ was formed;
        \item $w$ is a weight.
    \end{itemize}
\end{definition}

Note that if we simultaneously create multiple root-to-leaf notes where $r$, $l$, and $i$ are the same, it suffices to keep only the root-to-leaf note with the minimum $w$.
The intuition behind root-to-leaf notes is that some non-tree edge covers a root-to-leaf path within some (now contracted) cluster. When expanding out the contracted out clusters (see \Cref{alg:sens:undo_contractions}) we ensure that we take the root-to-leaf notes into account when computing the \mc value for all edges which they cover.

\begin{figure}
\label{fig:sensitivity_contraction}
    \begin{subfigure}[t]{0.2\linewidth}
        \centering
        \includegraphics[width=\linewidth,page=1]{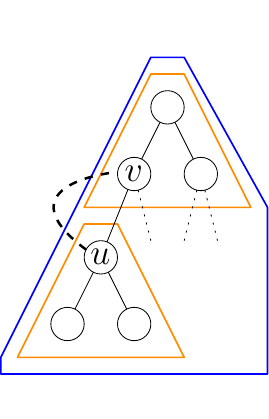}
        \caption{Case 1}
    \end{subfigure}
    \hspace{7mm} 
    \begin{subfigure}[t]{0.2\linewidth}
        \centering
        \includegraphics[width=\linewidth,page=2]{figures/sensitivity_contraction}
        \caption{Case 2}
    \end{subfigure} 
    \hspace{7mm} 
    \begin{subfigure}[t]{0.2\linewidth}
        \centering
        \includegraphics[width=\linewidth,page=4]{figures/sensitivity_contraction}
        \caption{Case 4}
    \end{subfigure}
    \hspace{7mm} 
    \begin{subfigure}[t]{0.2\linewidth}
        \centering
        \includegraphics[width=\linewidth,page=5]{figures/sensitivity_contraction}
        \caption{Case 5}
    \end{subfigure}
    \caption{A depiction of cases 1, 2, 4, and 5 of the sensitivity contraction process (\Cref{def:sensitivity_contraction_process}).
    (Case 3 omitted for brevity.)
    The \textcolor{orange}{orange} clusters $c$ (senior) and $c_1$ (junior) are being contracted to form the \textcolor{blue}{blue} cluster $c'$.
    The dashed edge is the relevant non-tree edge.
    In cases~4 and~5, we replace the non-tree edge with a root-to-leaf note (the \textcolor{OliveGreen}{dashed, green} line) covering part of the non-tree edge inside $c'$, and shorten the remaining edge (the \textcolor{OliveGreen}{solid, green} line).}
\end{figure}

The process of performing cluster contraction steps while maintaining the invariant above is depicted in \Cref{fig:sensitivity_contraction} formally described as follows:

\begin{definition}[Sensitivity Contraction Process]
\label{def:sensitivity_contraction_process}
Assume that we are at contraction step $i$.
When we contract child clusters $c_1 \dots c_k$ into cluster $c$ to form $c'$, for each non-tree edge $e = \{u, v\}$ there are five possibilities:
\begin{enumerate}
    \item $u$ is in one of the junior sub-clusters, and $v$ is in the senior sub-cluster.
    Since $\{u, v\}$ does not cover any edges in clusters by assumption, $u$ must be the root of the junior sub-cluster and $v$ must be one of the leaves of the senior sub-cluster.
    It suffices to set $\mc(\{u, v\}) \gets \min(\mc(\{u, v\}), w(e))$.
    
    \item $u$ is in the senior sub-cluster, and $v$ is an ancestor of $u$.
    By the invariant, $u$ must be the root of the senior sub-cluster.
    In this case the invariant is automatically maintained and no action is necessary.
    
    \item $u$ is in a junior sub-cluster, and $v$ is a descendant of $u$.
    By the invariant, $v$ must be a leaf of a junior sub-cluster.
    In this case the invariant is automatically maintained and no action is necessary.

    \item $u$ is in a junior sub-cluster, and $v$ is an ancestor of $c'$.
    By the invariant, $u$ must be the leader of its junior sub-cluster.
    In this case we replace $e$ with an edge $\{\ell(c'), v\}$ (recall: $\ell(c')$ is the leader of $c'$).
    Since we have shortened the edge, we create a \emph{root-to-leaf note} $(\ell(c'), p(u), i, w(\{u, v\}))$, where $i$ is the index of the contraction step in which $c$ was formed.
    We also set $\mc(\{u, p(u)\}) \gets \min(\mc(\{u, p(u)\}, w(e))$.

    \item $u$ is in a senior sub-cluster, and $v$ is an descendant of one of the junior sub-clusters of $c'$, $c_j$.
    By the invariant, $u$ must be a leaf of its junior sub-cluster.
    In this case we replace $e$ with an edge $\{l, v\}$, where $l$ is the leaf of $c'$ on the path from $u$ to $v$ in $T$.
    Since we have shortened the edge, we create a \emph{root-to-leaf note} $(\ell(c_j), l, i, w(\{u, v\}))$, where $i$ is the index of the contraction step in which $c_j$ was formed.
    We also set $\mc(\{\ell(c_j), u\}) \gets \min(\mc(\{\ell(c_j), u\}), w(e)$.
\end{enumerate}

We note that if both endpoints require the creation of root-to-leaf notes, then we create both.
\end{definition}

We now argue that this only results in $O(n)$ root-to-leaf notes being produced, and so our global memory constraints are not violated:

\begin{lemma}
    Performing the above operation for all non-tree edges in parallel after each cluster contraction step (\Cref{lem:reduce_constant_fraction}) results in a total of $O(n)$ root-to-leaf notes.
\end{lemma}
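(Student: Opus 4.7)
The plan is to prove the bound via a charging argument. After deduplication within each contraction step, every surviving note is uniquely identified by its triple $(r, l, i)$, which pins down a specific cluster $c^\star$ (the one formed at step $i$ whose leader is $r$) together with a specific leaf $l$ of $c^\star$. Inspecting Cases~4 and~5 of \Cref{def:sensitivity_contraction_process}, I observe that a note with triple $(r, l, i)$ can only be generated during the single step in which $c^\star$ is itself removed from the active clustering: either $c^\star$ serves as the senior sub-cluster of the new cluster (producing a Case~4 note with $r = \ell(c^\star)$ and $l = p(\ell(c_k))$ for some junior sibling $c_k$), or $c^\star$ is absorbed as a junior sub-cluster (producing a Case~5 note with $r = \ell(c^\star)$ and $l$ a leaf of $c^\star$). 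Consequently each triple contributes at most one surviving note over the lifetime of the algorithm, and it suffices to bound the number of such triples.

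I would then split the count by case. For Case~4, the surviving notes in a given step are in bijection with the junior sub-clusters being absorbed, so summing across all steps gives at most the total number of junior-absorption events, which is at most $n$ by \Cref{obs:hierarchical_clustering_size} (each cluster is absorbed as a junior at most once, and the total number of clusters across all levels is $O(n)$). For Case~5, it suffices to bound $\sum_{c^\star} |\mathsf{leaves}(c^\star)|$ over clusters $c^\star$ that are ever junior. I would partition each leaf $l$ of $c^\star$ into two types: either $l$ is a true leaf of $T$, or $l$ is the parent in $T$ of the root of a different cluster at the moment $c^\star$ is absorbed. Leaves of the second type are in bijection with the children of $c^\star$ in the cluster-tree $T_{C_{t-1}}$ just before absorption, so summed across all junior events this contribution is $\sum_t (|C_{t-1}| - 1) = O(n)$, using the geometric shrinking of cluster counts guaranteed by \Cref{lem:reduce_constant_fraction}.

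The main obstacle is the true-leaves contribution: a naive count charges each true leaf of $T$ once for every junior absorption along its cluster chain, which in principle could accumulate to $\Theta(n \log \dt)$. I expect to overcome this via a telescoping amortization along each leaf's chain of clusters, using the height-$2$ structure of the hierarchical clustering (noted after \Cref{lem:reduce_constant_fraction}) together with the fact that cluster counts shrink geometrically, so that the total true-leaf charges remain bounded by the total vertex mass absorbed, which telescopes to $O(n)$. Combining the two contributions then yields the claimed bound of $O(n)$ root-to-leaf notes produced across all contraction steps; this telescoping step is the subtlest part of the argument, as it depends on how the specific guarantees of \Cref{lem:reduce_constant_fraction} interact with the leaf structure inside each cluster.
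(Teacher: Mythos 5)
Your Case~4 count and the ``parent of a cluster root'' part of your Case~5 count match the paper's argument, but your proposal has a genuine gap in the remaining part of Case~5. You correctly partition the leaves of a cluster into true leaves of $T$ and parents of roots of other clusters, but you then treat the true-leaf contribution as a real obstacle and defer it to an unspecified ``telescoping amortization.'' That step is the load-bearing one in your Case~5 bound, and as written it is not an argument: you yourself note that the naive charge is $\Theta(n\log\dt)$, and if true leaves of $T$ really could appear as the leaf of a Case~5 note once per junior absorption along their cluster chain, there is no obvious cancellation that would rescue the bound --- the ``total vertex mass absorbed'' across all $O(\log\dt)$ levels is itself $\Theta(n\log\dt)$, not $O(n)$, since every vertex belongs to a cluster at every level.

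The missing observation, which is what the paper uses, is that the true-leaf case is vacuous. In Case~5 of \Cref{def:sensitivity_contraction_process}, the lower endpoint $v$ of the non-tree edge lies strictly below the newly formed cluster $c'$, so the path from $u$ to $v$ in $T$ must exit $c'$ through a vertex $l$ whose child on that path is the root of a child cluster of $c'$. Hence the leaf recorded in a Case~5 note always has a child in $T$ which is a cluster leader, and can never be a leaf of $T$. The number of distinct such $l$ per cluster is therefore at most the number of its child clusters in the cluster-tree, and summing over all contraction steps gives $O(n)$ by \Cref{obs:hierarchical_clustering_size} --- exactly mirroring your Case~4 count. With that observation your argument closes; without it, the proof is incomplete.
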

\begin{proof}
    Follows from \Cref{obs:hierarchical_clustering_size}.
    Fix a cluster $c$ formed in step $i$. Then:
    \begin{itemize}
        \item In case 4, the number of distinct root-to-leaf notes created is at most the number of junior sub-clusters of $c$.
        The total number of junior sub-clusters is at most the number of vertices in $T_{C_i}$: summing this over all contraction steps gives at most $O(n)$ such root-to-leaf notes.
        \item In case 5, the number of distinct root-to-leaf notes created is at most the number of leaves of $c$ which have a child which is a cluster leader.
        Similarly, the total number of these is the number of vertices in $T_{C_i}$: summing over all contraction steps gives at most $O(n)$ such root-to-leaf notes. \qedhere
    \end{itemize} 
\end{proof}

We now argue that we can actually perform this computation while we do a cluster contraction step:

\begin{lemma}
\label{lem:sens:computing_cluster_contractions}
    \Cref{alg:sens:cluster_contractions} can be implemented deterministically on an \MPC with $\lspace = O(n^\delta)$ and $\gspace = O(n+m)$ in $O(\log \dt)$ rounds.
\end{lemma}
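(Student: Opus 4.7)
The plan is to show that each of the $O(\log \dt)$ iterations of the contraction loop can be implemented in $O(1)$ deterministic \MPC rounds with $\lspace = O(n^\delta)$ and $\gspace = O(n+m)$, so that the whole procedure runs in $O(\log \dt)$ rounds. The contraction step itself is handled by \Cref{lem:reduce_constant_fraction}, which yields a new cluster set $C'$ (along with, for each new cluster $c'$, the identity of its senior and junior sub-clusters) in $O(1)$ deterministic rounds with optimal memory. The main content is therefore to show that the five cases of \Cref{def:sensitivity_contraction_process} can all be detected and processed in parallel, across all non-tree edges, in a constant number of sort/prefix-sum rounds.

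First I would, after the contraction step, propagate to each vertex $u$ the identifier of its new cluster $c'(u) \in C'$, the identifier of the sub-cluster $c(u) \in C$ containing it, and the DFS interval $I(\ell(c'(u)))$ of the leader of its new cluster; this takes $O(1)$ rounds of sorting. With this information, for each non-tree edge $\{u,v\}$ the machine storing the edge can decide locally, using the DFS intervals of $c(u), c(v), c'(u), c'(v)$, whether $u$ and $v$ are in the same new cluster, whether one is an ancestor of the other, and which of the five cases applies. Together with the invariant inherited from the previous step (each endpoint was either a leader or a leaf of its old cluster), this is enough to identify the relevant vertices named in each case. For Case~5, identifying the correct leaf $l$ of $c'$ on the path from $u$ to $v$ amounts to finding the root of the junior sub-cluster of $c'$ which is the ancestor of $v$; this is a local comparison of $I(v)$ against the intervals of the junior sub-clusters' leaders, which can be made available by sorting child-cluster intervals next to the edge records. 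The creation of root-to-leaf notes and the replacement of each shortened edge $\{u,v\}$ by $\{\ell(c'),v\}$ (Case~4) or $\{l,v\}$ (Case~5) is then a purely local rewrite of the edge tuple.

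The updates to the $\mc$ values in Cases~1, 4, and~5 are handled by pushing the (tree-edge, weight) proposals produced by all non-tree edges to the dedicated set of machines maintaining $\mc$, then sorting by tree-edge identifier and taking the per-edge minimum via prefix-sum; this costs $O(1)$ rounds. Duplicate root-to-leaf notes with the same $(r,l,i)$ are collapsed to their minimum $w$ the same way. Global memory is preserved because the number of non-tree edges never grows (each edge is only shortened, not split), the total number of clusters across all levels is $O(n)$ by \Cref{obs:hierarchical_clustering_size}, and the total number of root-to-leaf notes ever produced is $O(n)$ by the preceding lemma; all other auxiliary arrays (DFS intervals, cluster identifiers, child-cluster intervals) are proportional to the current cluster count, so the $O(n+m)$ budget is respected at every step.

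The main obstacle I expect is a bookkeeping one rather than an algorithmic one: checking that the invariant ``no non-tree endpoint covers an edge strictly inside its cluster'' is truly preserved by the rewrite rules, and that in every case the endpoint named by \Cref{def:sensitivity_contraction_process} (root of a junior sub-cluster, leaf of the senior sub-cluster, etc.) is the one we actually obtain after sorting. This is what makes the ``local decision from DFS intervals'' step work, because it ensures that the required neighbour is always either the cluster leader or a designated leaf, whose coordinates are already available alongside the edge. Once this invariant is verified by induction on the contraction step, each iteration is a constant number of sorts, prefix-sums, and local updates, giving the claimed $O(\log \dt)$-round, optimal-memory, deterministic implementation.
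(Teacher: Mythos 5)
Your proposal is correct and follows essentially the same approach as the paper: both implement each contraction step in $O(1)$ rounds by using the DFS interval labels together with a constant number of sorts and prefix-sums to detect which case of \Cref{def:sensitivity_contraction_process} applies to each non-tree edge (with the extra sort over the cluster's leaf intervals to locate the leaf $l$ in Case~5), performing the edge truncation and root-to-leaf note creation locally, and routing the $\mc$ updates to dedicated machines. The invariant preservation you flag as the main obstacle is handled in the paper at the level of \Cref{def:sensitivity_contraction_process} itself rather than in this implementation lemma, so your argument is complete as stated.
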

\begin{proof}
    It suffices to show that one iteration of \Cref{alg:sens:cluster_contractions} can be implemented in $O(1)$ rounds within the specified memory constraints.
    The edges which are to be contracted in each iteration can be identified in $O(1)$ rounds (by \Cref{lem:reduce_constant_fraction}).
    It remains to show that we can identify for each non-tree edge $e \in E'$ whether case~1, 4, or~5 of \Cref{def:sensitivity_contraction_process} applies.

    Fix some non-tree edge $e \in E'$, and suppose that $e = \{u, v\}$, and that $u$ is an ancestor of $v$.

    \begin{itemize}
    \item It is easily checked whether case $1$ of \Cref{def:sensitivity_contraction_process} applies to $e$, since this is the case if and only if $e$ is the same as one of the edges which are being contracted.

        \item  For case~4, we need to check whether some edge $\{u, u'\}$ is being contracted, where $I(u) \subset I(u')$. If this is the case then case~4 applies.
        We need to set $\mc(\{u, u'\}) = \min\{\mc(\{u, u'\}), \\ w(e)\} $; create a root-to-leaf note $(\ell(u'), u, i, w)$; and truncate $e$ so that it goes between $\ell(u')$ and $v$.
        
        This can be accomplished by sorting the edges in $E'$ along with the edges to be contracted, and using prefix-sum to communicate the edge $\{u, u'\}$ to all edges in $E'$ with $u$ as an endpoint (if any).
        Then, machines can perform the necessary changes to the graph locally (sending the \mc update to the machines dedicated to storing and updating those values).

        \item For case~5, we need to check whether some edge $\{v, v'\}$ with $I(v) \supset I(v')$ is being contracted. If so, then case~5 applies.
        
        There is one added complication here: we need to find the correct leaf for the root-to-leaf note (that is, the leaf of the cluster containing $v'$ through which the $u-v$ path in $T$ passes).
        Since for any cluster $c$ we know what its children in $T_C$ are, we know what the leaves of $c$ are (they are the parents of the roots of the children of $c$ in $T_C$; they cannot be leaves of $T$ since the $u-v$ path could not pass through them).
        So by sorting the DFS interval labels of the leaves of the cluster along with the edges in $E'$ (and using prefix-sum), we can find this leaf: let us call it $l$.

        We need to set $\mc(\{v, v'\}) = \min\{\mc(\{v, v'\}), w(e)\} $; create a root-to-leaf note $(\ell(v), l, i, w)$; and truncate $e$ so that it goes between $l$ and $u$.
    \end{itemize}

    Since all cases can be identified and the correct action taken within $O(1)$ rounds and linear global memory (using constantly many applications of sorting and prefix sum), the lemma result follows.
\end{proof}

Finally, we present the algorithm for this first step as a whole:

\begin{algorithm}
\caption{$\mathsf{ClusterContractions}(G, T)$: Performs $O(\log \dt)$ cluster contraction steps}
\label{alg:sens:cluster_contractions}
\begin{algorithmic}[1]
    \State Let $C_0 \gets V$; $E' \gets E \setminus T$

    \For{$i \in (1, 2, \dots, \tau = O(\log \dt))$}
        \State Perform an iteration of \Cref{def:sensitivity_contraction_process} on $T_{C_{i-1}}$, using \Cref{lem:reduce_constant_fraction} to choose the edges to contract.
        
        \State $C_i \gets$ The set of clusters after this iteration.

        \State $N_i \gets$ The \emph{root-to-leaf} notes created during this iteration.

        \State $E' \gets$ The truncated non-tree edges (from $E \setminus T$)
    \EndFor

\State\Return $C_\tau$: The clusters after $\tau = O(\log \dt)$ cluster contraction steps

\State\Return $N = N_1 \cup N_2 \cup \dots \cup N_\tau$: The root-to-leaf notes created by the cluster contraction steps

\State\Return $E'$: The truncated non-tree edges (from $E \setminus T$)
\end{algorithmic}
\end{algorithm}

\subsection{\MST Sensitivity with $n/\poly(D)$ Vertices}
\label{subsec:simulating_pettie}

After executing \Cref{alg:sens:cluster_contractions}, we have a clustering of the vertices in $V$ into a set of $n / \poly(D)$ clusters $C_\tau$.
We note that the diameter of $T_{C_\tau}$ is at most $\dt$: for simplicity here we will assume that it \emph{is} $\dt$.

We will now present an algorithm for \MST sensitivity, which works in $O(\log \dt)$ rounds but with $\poly(D)$ space per node.
We will run this algorithm on the cluster tree $T_{C_\tau}$, and use the result to compute the sensitivity of edges in $T$.
Our algorithm works in a similar way to the algorithm of Pettie \cite{Pettie15} in the centralized setting.

In this section, we denote the depth of a cluster $v$ in $T_C$ by $\lev(v)$: it is defined as the distance of $v$ from the root cluster $r$.

We first define a useful array for our algorithm:

\begin{definition}\label{def:array}
    For each non-root cluster $c \in C_\tau$, $A_c$ is an array of length $\dt$ such that for each $i \in [\dt]$
\[   A_c[i]=\left\{
\begin{array}{l l}
 \min_{\{\{u, v\} \in E'': c(u) = c \wedge c(v) = x\}}   \{w(\{u,v\}):\lev(x)<i 
 \}  &1 \leq i \leq \lev(c) \\
    \infty & i > \lev(c) \\
\end{array} 
\right. \]
\end{definition}

Note that we compute this array relative to a set of edges $E''$. These edges are slightly truncated versions of the edges in $E'$, and we will compute this set during the algorithm.

\begin{algorithm}[H]
\caption{$\mathsf{ClusterSensitivity}(G, E', T, C_\tau)$: Simulates MST sensitivity algorithm on $C_\tau$}
\label{alg:sens:pettie}
\begin{algorithmic}[1]
\State $E'' \gets \emptyset$
    \For{each non-tree edge $e = \{u, v\} \in E'$ (where $v$ is an ancestor of $u$)}
        \State Let $v'$ be the first node on a path from $v$ to $u$ in $T$.

        \State $\mc(\{v', v\} \gets \min \{ \mc(\{v', v\}), w(e)\}$

        \State Add an edge $\{u, v'\}$, with weight $w(e)$ to $E''$.
    \EndFor

    \For{each non-root cluster $c \in C_\tau$}, 
    \For{for each $i\in [D_T]$}
    
    determine the value $A_c[i]$
    
    \EndFor
\EndFor
    
\For {each non-root cluster $c \in C_\tau$ (i.e., edge $\{c,p(c)\}$ in $T_C$)}

    \State $\mathsf{minA}(c) \gets \min\{A_x[\lev(c)]: 
\mbox{$x$ is $c$ or a descendant of $c$}\}$

    \State $N_c \gets (\ell(p(c)), p(\ell(c)), \tau, \mathsf{minA}(c))$

    \State $\mc(\{\ell(c),p(\ell(c))\})$ $\gets \min\left\{\mc (\{\ell(c),p(\ell(c))\}), \mathsf{minA}(c)\right\}$
\EndFor

\State\Return $N' = \{N_c : c \in C_\tau\}$: a set of root-to-leaf notes, one for each cluster in $C_\tau$
\end{algorithmic}
\end{algorithm}

\begin{lemma}
\label{lem:pettie_is_correct}
    After executing \Cref{alg:sens:pettie} we have correctly computed:
    \begin{enumerate}
        \item[(i)] $\mc$ values for each edge in $T_{C_\tau}$; and
        \item[(ii)] for each pair of clusters $c_c, c_p \in C_\tau$ such that $c_c$ is a child of $c_p$, the minimum weight of an edge in $E'$ which covers the path from $\ell(c_p)$ to $p(\ell(c_c))$.
    \end{enumerate}
\end{lemma}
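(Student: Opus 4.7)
The plan is to trace which non-tree edges of $E'$ are accounted for at which step, using the invariant preserved by \Cref{alg:sens:cluster_contractions}: for every $e = \{u, v\} \in E'$ with $v$ the ancestor endpoint, $u = \ell(c(u))$ and $v$ is a leaf of $c(v)$. Fix a non-root cluster $c_c \in C_\tau$ with parent $c_p$ in $T_{C_\tau}$, and let $B_{c_c} = \{\ell(c_c), p(\ell(c_c))\}$ denote the corresponding bridge tree edge. An edge $e \in E'$ covers $B_{c_c}$ iff $c(u)$ lies in the subtree of $T_{C_\tau}$ rooted at $c_c$ and $c(v)$ is an ancestor of $c_p$ (not necessarily strict).

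First I would analyse the truncation loop. Since $v$ is a leaf of $c(v)$, the child $v'$ of $v$ along the $v$-to-$u$ path in $T$ must be the leader $\ell(c_{v'})$ of some child cluster $c_{v'}$ of $c(v)$ in $T_{C_\tau}$, and $\{v, v'\}$ is exactly the bridge edge between $c(v)$ and $c_{v'}$. In the subcase $c(v) = c_p$, the invariant forbids the path from traversing any tree edge inside $c_p$, so the path must enter $c_p$ precisely at $p(\ell(c_c))$, forcing $v = p(\ell(c_c))$ and $v' = \ell(c_c)$. The $\mc$-update in the truncation step therefore already accounts for $w(e)$ on $B_{c_c}$ for exactly those $e \in E'$ whose ancestor endpoint lies in $c_p$.

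Next I would show that the array computation handles the complementary case, where $c(v)$ is a strict ancestor of $c_p$. For such $e$, the truncated edge $\{u, v'\} \in E''$ has $c(v') = c_{v'}$, the child of $c(v)$ on the path toward $c(u)$; since $c(v)$ is strictly above $c_p$, $c_{v'}$ sits on that path at level at most $\lev(c_p) = \lev(c_c) - 1$. Hence $\{u, v'\}$ contributes to $A_{c(u)}[\lev(c_c)]$ by \Cref{def:array}, and conversely every $E''$-edge contributing to $A_x[\lev(c_c)]$ for some $x$ in the subtree of $c_c$ comes from such an $e$. Therefore $\mathsf{minA}(c_c)$ equals the minimum of $w(e)$ over all $e \in E'$ with $c(u)$ in the subtree of $c_c$ and $c(v)$ a strict ancestor of $c_p$. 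Combining this with the truncation-step updates of the previous paragraph, the $\mc$-update that uses $\mathsf{minA}(c_c)$ yields the correct value of $\mc(B_{c_c})$, establishing (i).

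For (ii), any edge $e \in E'$ covering the entire (non-trivial) path from $\ell(c_p)$ to $p(\ell(c_c))$ inside $c_p$ must have its ancestor endpoint $v$ an ancestor of $\ell(c_p)$ in $T$. Because this path is non-trivial, $\ell(c_p)$ has a child in $c_p$ and so is not a leaf of $c_p$; the invariant then forces $v \ne \ell(c_p)$, i.e., $c(v)$ is a strict ancestor of $c_p$. Coupled with the requirement that $c(u)$ lies in the subtree of $c_c$, this is exactly the family whose minimum equals $\mathsf{minA}(c_c)$, so the weight stored in $N_c$ is correct; the degenerate case $\ell(c_p) = p(\ell(c_c))$ imposes no constraint since the path then has no tree edges to cover. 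The main technical obstacle throughout is the case split on the position of $v$ inside $c(v)$: the invariant from \Cref{alg:sens:cluster_contractions} is precisely what pins $v$ down enough to cleanly separate the bridge contributions (captured during truncation) from the deeper contributions (aggregated through the $A_c[\cdot]$ arrays).
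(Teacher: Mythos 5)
Your proof is correct and follows essentially the same route as the paper's: both rely on the invariant from the contraction phase to pin down the endpoints of each $E'$-edge, split off the topmost arc handled by the truncation loop (lines 1--5) from the contributions aggregated through the arrays $A_c[\cdot]$ and $\mathsf{minA}$, and establish (ii) by identifying the edges covering the bridge edge $\{\ell(c_c), p(\ell(c_c))\}$ with those covering the root-to-leaf path in $c_p$. Your partition by whether $c(v) = c_p$ or $c(v)$ is a strict ancestor of $c_p$ is just a slightly different bookkeeping of the same min-formula the paper writes out explicitly.
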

\begin{proof}
For \textbf{part~(i)}, it is enough to show that \Cref{alg:sens:pettie} correctly computes the \mc values for each edge in $T_C$.
Recall that the \mc value of a tree edge $e$ is the minimum weight of any non-tree edge which covers $e$.
Note that lines 1--5 of \Cref{alg:sens:pettie} split each edge into two parts: the topmost ``arc'' $\{v, v'\}$ of the edge $\{u, v\}$, and the rest of the edge ($\{u, v'\}$).
By line~4, $\mc(v, v')$ is at most the weight of the topmost arc, and so it remains to consider whether lines 6--11 correctly compute the \mc values for each tree edge relative to the edges in $E''$ after line 5.

As we assume as a precondition that the vertices of any non-tree edges are in an ancestor-descendant relation, we have, for each edge $\{u, p(u)\} \in T_C$:
\begin{align*}
 \mc & \left(\{u, p(u)\}\right) \\ 
     =& \min(\mc(u, p(u)), \min\{w(\{v,v'\}): 
       \mbox{$v$ is $u$ or a descendant of $u$, and $v'$ is an ancestor of $u$}\})\\
     =& \min(\mc(u, p(u)), \min\limits_{\mbox{$v$ is $u$ or a descendant of $u$}} 
     \min \{w(\{v,v'\}): \mbox{$v'$ is an ancestor of $u$}\})
\end{align*}

Applying the definition of $A_u[i]$, 
\begin{equation*}
       \mc \left(\{u,p(u)\}\right)=\min\bigl(\mc(u, p(u)), \min\limits_{\mbox{$v$ is either $u$ or a descendant of $u$}} A_v[\lev(u)]\bigr).
\end{equation*}

For \textbf{part~(ii)}, we argue that the minimum weight non-tree edge covering the path from $\ell(c_p)$ to $p(\ell(c_c))$ is exactly $\mathsf{minA}(c_c)$.
First, note that we have already shown that $\mathsf{minA}(c_c)$ is the minimum weight of a non-tree edge in $E''$ which covers the tree edge $\{\ell(c_c), \\ p(\ell(c_c)\}$.
We prove part~(ii) of the lemma by showing that the set of edges in $E''$ which cover the tree edge $\{\ell(c_c), p(\ell(c_c))\}$ is the same as the set of edges in $E''$ which cover the path from $\ell(c_p)$ to $p(\ell(c_c))$.

Fix some $e = \{u, v\} \in E'$, and suppose that $v$ is an ancestor of $u$.
Let $c_1 = c_\tau(u)$, $c_k = c_\tau(v)$, and let $c_2\dots c_{k-1}$ be the other clusters on a path from $c_1$ to $c_k$ in $T_{C_\tau}$.

Then, note that, since in \Cref{alg:sens:cluster_contractions} we maintain the invariant that all edges in $E'$ cover no edges in the clusters containing their endpoints, $u$ must be the root of $c_1$ and $p(\ell(c_{k-1})) = v$.
It it easy to see that $e$ covers the edges between $c_1$ and $c_2$, $c_2$ and $c_2$ \dots $c_{k-1}$ and $c_k$.
It also covers the path from $p(\ell(c_1))$ to $\ell(c_2)$, from $p(\ell(c_2))$ to $\ell(c_3)$\dots, and from $p(\ell(c_{k-2}))$ to $\ell(c_{k-1})$. 

Edges in $E''$ are the same as edges in $E'$, except that the topmost arc (in the example above, the edge between $c_{k-1}$ and $c_k$) have been removed.
It is easy to check that once this is done, iff an edge in $E''$ covers the edge between $c_i$ and $c_{i+1}$, it also covers the path from $p(\ell(c_i))$ to $\ell(c_{i+1})$, and so we are done.
\end{proof}

Next, we argue that we can implement \Cref{alg:sens:pettie} in $O(\log \dt)$ rounds of sublinear \MPC, with optimal global space.

\begin{lemma}
\label{lem:sens:computing_pettie}
    \Cref{alg:sens:pettie} can be implemented deterministically on an \MPC with $\lspace = O(n^\delta)$ and $\gspace = O(n + m)$ in $O(\log \dt)$ rounds.
\end{lemma}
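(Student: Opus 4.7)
The plan is to analyze the three natural phases of \Cref{alg:sens:pettie} (lines 1--5, lines 6--8, lines 9--13) separately and to verify that each fits within the stated round and memory budget; the main technical obstacle lives in the third phase.

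For phase~1 (lines 1--5), given a non-tree edge $\{u,v\}$ with $v$ an ancestor of $u$, the vertex $v'$ is by definition the child of $v$ in $T$ on the $v$-to-$u$ path. By the invariant maintained throughout \Cref{alg:sens:cluster_contractions}, $v$ must be a leaf of its cluster $c_\tau(v)$, so $v'$ is exactly the leader of the unique child cluster of $c_\tau(v)$ in $T_{C_\tau}$ whose DFS interval (computed in \Cref{lem:dfs_interval_labeling}) contains the DFS number of $u$. Finding $v'$ therefore reduces to sorting the non-tree edges together with the cluster-tree edges and performing a constant number of prefix-sum passes, after which the update to $\mc(\{v',v\})$ and the insertion of $\{u,v'\}$ into $E''$ is purely local. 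This takes $O(1)$ rounds and $O(m+n)$ global memory.

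For phase~2 (lines 6--8), \Cref{cor:vertex_reduction} ensures $|C_\tau|=n/\poly(\dt)$, so the total length of all arrays $A_c$ is $O(|C_\tau|\cdot \dt)=O(n)$. To fill them, we sort the edges of $E''$ first by their source cluster and then by the level of their ancestor endpoint. Within each source-cluster block we compute a prefix minimum of the edge weights; the entry $A_c[i]$ is then the prefix minimum over those edges in $c$'s block whose ancestor-level is strictly less than $i$, which is obtained by a single additional sort of the block against the indices $1,\dots,\dt$. This is a constant number of MPC sorts and prefix-sums, hence $O(1)$ rounds with $O(m+n)$ memory.

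Phase~3 (lines 9--13) is the main obstacle because $\mathsf{minA}(c)=\min_{x\in \mathrm{subtree}(c)} A_x[\lev(c)]$ is not a standard subtree aggregation: the index $\lev(c)$ depends on the cluster at which we evaluate. The key reinterpretation is that $\mathsf{minA}(c)$ equals the minimum weight of an edge in $E''$ that covers the tree edge $(c,p(c))$ in $T_{C_\tau}$; indeed an edge $\{x,y\}\in E''$ with $y$ an ancestor of $x$ contributes to $\mathsf{minA}(c)$ precisely when $x$ is in the subtree of $c$ and $y$ is a strict ancestor of $c$, i.e. exactly when the $x$-to-$y$ path in $T_{C_\tau}$ crosses $(c,p(c))$. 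This is the classical per-tree-edge minimum-covering-edge problem on the cluster-tree $T_{C_\tau}$, whose diameter is $O(\dt)$, and can be solved in $O(\log \dt)$ rounds with $O(m+n)$ global memory using the dynamic-programming-on-trees framework of \cite{dp_trees_mpc} (the same framework already invoked in the proof of \Cref{lem:dfs_interval_labeling}). Once every $\mathsf{minA}(c)$ is known, lines 11--13 are purely local: each cluster produces one note $N_c$ and issues one update to $\mc(\{\ell(c),p(\ell(c))\})$, handled by a constant number of additional sorts. Combining the three phases gives $O(\log \dt)$ rounds, $\lspace=O(n^\delta)$, and $\gspace=O(m+n)$, as claimed.
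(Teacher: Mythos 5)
Your treatment of lines 1--5 and of filling the arrays $A_c$ matches the paper's implementation in substance (the paper gathers the edges onto dedicated machines $M_{c,i}$ rather than using a prefix-minimum over a sorted block, but both are constant-round sort-and-prefix-sum arguments with the same $O(|C_\tau|\cdot \dt)=O(n)$ space accounting). The problem is your phase~3. You correctly identify the crux --- $\mathsf{minA}(c)=\min_{x\in\mathrm{subtree}(c)}A_x[\lev(c)]$ is not a subtree aggregation of a fixed per-node label because the index being queried varies with $c$ --- and your reinterpretation of $\mathsf{minA}(c)$ as the minimum weight of an edge of $E''$ covering the cluster-tree edge $(c,p(c))$ is accurate. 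But reinterpreting the quantity does not remove the difficulty, and you then dispose of it by asserting that ``the classical per-tree-edge minimum-covering-edge problem'' is solved by the framework of \cite{dp_trees_mpc}. That framework (as used elsewhere in this paper) supports aggregations such as the minimum of \emph{input labels} over each subtree; min-covering-edge is not of that form, precisely for the reason you yourself flagged: the natural recursion $\mathrm{MC}(c)=\min_{c'\text{ child of }c}\mathrm{MC}(c')$ fails because edges terminating exactly at $c$ must be excluded at node $c$ but not at its descendants. So the one step that carries the whole weight of the lemma is left unjustified.

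The paper's resolution is the piece you are missing, and it is exactly where the hierarchical clustering pays off: because $|C_\tau| = n/\poly(\dt)$, one can afford $O(\dt)$ words per cluster and therefore run $O(\dt)$ \emph{independent, standard} subtree-minimum aggregations in parallel --- one for each depth index $i\in[\dt]$, the $i$-th instance computing $\min_{x\in\mathrm{subtree}(c)}A_x[i]$ for every $c$ --- each of which \emph{is} a minimum-of-input-labels computation handled by \cite{dp_trees_mpc} in $O(\log\dt)$ rounds. Afterwards each cluster $c$ simply reads off the answer of instance $\lev(c)$. The total space is $O(\dt)\cdot O(n/\poly(\dt)) = O(n)$, so the global memory bound survives. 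To repair your proof, replace the appeal to a min-covering-edge primitive with this depth-indexed family of parallel subtree aggregations (or give an explicit argument that min-covering-edge lies in the class of tree DPs the framework supports, which you have not done).
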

\begin{proof}
We can compute the depth $\lev(c)$ in $T_{C_\tau}$ of each cluster $c \in C_\tau$, and we have already computed the DFS interval labeling of $\ell(c)$.
These steps can be implemented in $O(\log D_T)$ rounds.
Recall that, with the DFS interval labeling of two clusters $c$ and $c'$, we can decide whether $c$ and $c'$ are in an ancestor-descendant relationship.

For lines 1--6 of \Cref{alg:sens:pettie}, we can sort the edges in $T_C$ along with the non-tree edges in $E'$ by the topmost endpoint.
Let $\{u, v\}$ be a non-tree edge in $E'$ where $v$ is an ancestor of $u$.
By the invariant maintained in \Cref{alg:sens:cluster_contractions}, the child of $v$ on the path from $v$ to $u$ in $T$ is the root of a cluster.
Therefore, using prefix-sum, we can then identify which child $v'$ of $v$ in $T$ is on the path from $v$ to $u$ in $T$, for all such edges in parallel.
Lines~4 and~5 then follow easily.

For lines 7--10 of \Cref{alg:sens:pettie}, let us dedicate a set of consecutive machines $M_{c,i}$ for each non-root cluster $c \in C_\tau$ and each depth $i \in [D_T]$.
For each $c$ and $i \in [D_T]$, all the edges in $E''$ from a node in $c$ to an ancestor can be gathered onto $M_{c,i}$ using sorting in $O(1)$ rounds.
Then, for each $c$ and $i$, we can find the minimum weight of an edge on $M_{c, i}$.

For lines 11--15 of \Cref{alg:sens:pettie}, we can compute $\mathsf{minA}(c)$ for all clusters $c \in C$ using $O(\dt)$ parallel applications of the algorithm in \cite{dp_trees_mpc} for finding in parallel the minimum label in all subtrees in a tree.
This requires an additional $O(\dt)$ global space for each node in $T_C$, but since there are $O(n/\dt)$ nodes in $T_C$, this is a total global space of $O(m+n)$, as required.

Note that the total space used in our \MPC implementation is $O(nD_T)$ since we are storing an array of size $D_T$ for each non-root vertex.
\end{proof}

\subsection{Unwinding Cluster Contraction and Computing Sensitivity}\label{sec:unwind}

For all edges in $T_{C_\tau}$, we now have computed the $\mc$ (and thus \sens) values.
It remains to compute the $\mc$ values of the vertices within the vertices of $T_{C_\tau}$, which correspond to the clusters of $C_\tau$.

As part of our \MST sensitivity algorithm in the previous section, for each cluster $c \in C_\tau$, for each child $c_1, c_2\dots c_k$ of $c$ in $T_{C_\tau}$, we computed a root-to-leaf note for the paths from $\ell(c)$ to $p(\ell(c_1)), p(\ell(c_2)) \dots, p(\ell(c_k))$.
In this section we combine these with the root-to-leaf notes which were set aside during the cluster contraction part of the algorithm (\Cref{alg:sens:cluster_contractions}), and ensure that the \mc value for each intra-cluster edge is at most the minimum weight of a root-to-leaf note which covers that edge, when we undo the cluster contraction steps.

Our algorithm is as follows.

\begin{algorithm}
\caption{$\mathsf{UndoContractions}(G, T, N, C_\tau)$: Undo the cluster contractions and compute $\mc$ values for all edges}
\label{alg:sens:undo_contractions}
\begin{algorithmic}[1]
    \For{$i \in (\tau = O(\log \dt), \dots, 2, 1)$}
        \For{Each cluster $c$ formed in iteration $i$ of \Cref{alg:sens:cluster_contractions}, in parallel}
            \State Let $c_p$ be the senior sub-cluster of $c$; let $c_1\dots c_k$ be the junior sub-clusters of $c$.

            \For{Each root-to-leaf note $r = (\ell(c), v, i, w)$, in parallel}
                \State $c_i \gets$ The subcluster of $c$ containing $v$.

                \State $\mc(\{\ell(c_i), p(\ell(c_i))\}) \gets \min\{\mc(\{\ell(c_i), p(\ell(c_i))\}), w\}$.

                \State Remove root-to-leaf note $r$ from $N$.

                \State Add root-to-leaf note $(\ell(c), p(\ell(c_i)), I_{c_p}, w)$ to $N$, where $I_{c_p}$ is the iteration in which $c_i$ was formed.
                
                \State Add root-to-leaf note $(\ell(c_i), v, I_{c_i}, w)$ to $N$, where $I_{c_i}$ is the iteration in which $c_i$ was formed.
            \EndFor
        \EndFor

    \State Remove duplicate root-to-leaf notes (those whose first three values are the same), keeping the one with minimum weight.
    \EndFor
\end{algorithmic}
\end{algorithm}

We first argue that the \mc values which are computed during \Cref{alg:sens:undo_contractions} are correct:

\begin{lemma}
\label{lem:sens:correct_mc_values}
    The \mc values computed at the end of \Cref{alg:sens:undo_contractions} are correct: i.e.,~for all $v \in V$, $\mc(v)$ is the cost of the minimum edge which non-tree edge which covers (\Cref{def:covering}) $v$.
\end{lemma}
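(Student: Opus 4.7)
The plan is to prove the lemma in two stages: first, establish an invariant about what the root-to-leaf notes collectively represent at the start of \Cref{alg:sens:undo_contractions}; second, show that the unwinding loop correctly dismantles each note and, in doing so, pushes its weight onto every tree edge it covers.

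For the first stage, I would prove the following claim: for every non-tree edge $e = \{u,v\} \in E\setminus T$ and every tree edge $t$ on the path from $u$ to $v$ in $T$, either (a) $\mc(t)$ is directly updated with a value $\le w(e)$ somewhere in \Cref{alg:sens:cluster_contractions} or \Cref{alg:sens:pettie}, or (b) there is a root-to-leaf note $(r, l, i, w')$ in $N \cup N'$ with $w' \le w(e)$ such that $t$ lies on the path from $r$ to $l$ in the cluster $c$ formed at contraction step $i$. The argument is a case analysis following \Cref{def:sensitivity_contraction_process}: in cases 1--3 the edge is already handled by a direct $\mc$ update or is still present in $E'$; in cases 4 and~5, the portion of $e$ that is ``cut off'' is recorded exactly as a root-to-leaf note whose path equals that portion, while the truncated remainder is carried forward in $E'$ for the next contraction step. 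Edges of $E'$ that survive until contraction step $\tau$ are then handled by \Cref{alg:sens:pettie}: \Cref{lem:pettie_is_correct}(i) gives the correct $\mc$ for every edge of $T_{C_\tau}$, and part~(ii) together with the $N_c$ notes created on line 10 records, for each inter-cluster tree edge, a root-to-leaf note covering the corresponding intra-cluster path from $\ell(p(c))$ down to the leaf $p(\ell(c))$ of the senior sub-cluster. Iterating this argument shows that no covering relationship is lost.

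For the second stage, I would prove by induction on $\tau - i$ that after the outer loop of \Cref{alg:sens:undo_contractions} has processed iteration $i$, the following holds: for every surviving root-to-leaf note $(r, l, j, w)$ in $N$ we have $j < i$, and the set of tree edges covered by the collection of currently surviving notes together with the edges already updated through $\mc$ equals the set of covering relationships from the original input. The inductive step is the content of lines~4--9: given a note $(\ell(c), v, i, w)$, the algorithm finds the unique sub-cluster $c_i$ of $c$ that contains $v$ (which is well-defined by the invariant from the contraction process that $v$ was a leaf of that sub-cluster at the time $c$ was formed), directly updates $\mc(\{\ell(c_i), p(\ell(c_i))\})$ (the single tree edge crossing between $c_p$ and $c_i$ inside $c$), and produces two new notes that exactly cover the two remaining sub-paths, each now labeled with the earlier iteration in which $c_p$ or $c_i$ was formed. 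Deduplication on line~11 only keeps the minimum-weight note per $(r,l,j)$ triple, which is safe since the $\mc$ operation is a $\min$. When the loop terminates, every remaining note has $j = 0$, i.e., its path is a single tree edge, and its weight has been absorbed into $\mc$.

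The main obstacle is the first stage: one must verify that the notes created in cases~4 and~5 of \Cref{def:sensitivity_contraction_process} really do describe root-to-leaf paths inside the correct cluster (this uses the invariant that no non-tree edge of $E'$ covers edges inside its current cluster, established inductively along with the contraction process), and that the inter-cluster contribution captured by \Cref{alg:sens:pettie} combines correctly with the notes already accumulated. Once this invariant is nailed down, the unwinding step is essentially bookkeeping: each sub-path of a note maps to a sub-path of the next level's notes, plus exactly one newly-exposed tree edge whose $\mc$ is updated on line~6.
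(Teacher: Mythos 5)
Your proposal is correct and follows essentially the same route as the paper's proof: both decompose the argument into (i) showing every covering relationship of a non-tree edge is captured either by a direct $\mc$ update, by the cluster-level computation of \Cref{lem:pettie_is_correct}, or by a root-to-leaf note of weight at most $w(e)$, and (ii) showing the unwinding loop splits each note into one newly-exposed crossing edge plus two sub-notes so that every covered edge eventually receives the note's weight. Your version is somewhat more explicit about the inductive invariants than the paper's write-up (and your remark that terminal notes cover ``a single tree edge'' should really say their paths become trivial once clusters are singletons, the crossing edges having already been updated at line~6), but this is the same argument.
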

\begin{proof}
    Firstly, note that for all edges in $T_{C_\tau}$ (those edges which aren't ``inside'' clusters), \mc values have been computed correctly (\Cref{lem:pettie_is_correct}~(i)).

    We now argue that for all edges in $T \setminus T_{C_\tau}$, \mc values have been correctly computed.
    To do this, we look from the other perspective and show that each edge in $E \setminus T$ is at least as large as the \mc value for each edge in $T \setminus T_{C_\tau}$ which it covers.

    Let $e = \{u, v\} \in E \setminus T$ be a non-tree edge and let $P$ be the path in $T$ between $u$ and $v$.
    Recall that the endpoints of all non-tree edges are between ancestors and descendants.
    Note that $P$ comprises edges in $T_{C_\tau}$ (which we have already accounted for) and edges inside clusters in $C_\tau$.
    Let $c \in C_{\tau}$ be a cluster.
    There are two cases:
    \begin{itemize}
        \item Suppose $u$ is an ancestor of $\ell(c)$ and $v$ is in a descendant cluster of $c$.
        Then the section of the path which passes through $c$ must go from the root of $c$ to a leaf of $c$.
        By \Cref{lem:pettie_is_correct}~(ii), the minimum weight of an edge in $E'$ which covers each root-to-leaf path in $c$ has been correctly computed, and $e$ must weigh at least this value.
        
        \item Suppose that $u$ is inside $c$ and $P$ covers at least one edge in $c$.
        Then, during the cluster contraction process (\Cref{alg:sens:cluster_contractions}), (possibly several) root-to-leaf notes were created to account for this end-segment of $P$ which is inside a cluster.
        The only place in our subroutines where root-to-leaf notes are deleted is when another note covering the same path but with a lower weight is also present; therefore $e$ must weight at least as much as the root-to-leaf note covering this segment of $P$.
    \end{itemize}

    Therefore root-to-leaf notes of at most weight $w(e)$ are created for every edge in $P \setminus T_{C_\tau}$ (since edges in $T_{C_\tau}$ are accounted for).

    It remains to show that, for each edge $e$ on the path covered by a root-to-leaf note $(\cdot, \cdot, \cdot, w)$, that $\mc(e) \leq w$.
    This is ensured by lines 6--9 of \Cref{alg:sens:undo_contractions}.
    Root-to-leaf notes are split into three parts: a root-to-leaf note in the senior sub-cluster; a root-to-leaf note in a junior sub-cluster; and the edge between the two has its \mc value updated, if applicable.
    Eventually all edges on the path of a root-to-leaf notes will have their \mc value updated to the weight of that root-to-leaf note (if applicable), and will finish with at most that weight.
\end{proof}

Finally, we show that \Cref{alg:sens:undo_contractions} can be implemented in sublinear \MPC in $O(\log \dt)$ rounds and with optimal global memory:

\begin{lemma}
    \label{lem:sens:computing_mc_values}
    \Cref{alg:sens:undo_contractions} can be implemented deterministically on an \MPC with $\lspace = O(n^\delta)$ and $\gspace = O(n+m)$ in $O(\log \dt)$ rounds.
\end{lemma}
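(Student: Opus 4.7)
The plan is to show that each pass of the outer for-loop (lines~1--12) can be implemented in $O(1)$ \MPC rounds using standard primitives (sorting and prefix-sum), while maintaining the invariant that at any point the total number of live root-to-leaf notes is $O(n)$. Combined with the $\tau = O(\log \dt)$ iterations, this gives the claimed round count, and the memory invariant gives $\gspace = O(n+m)$.

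First I would handle the memory bound. At the start, by the preceding analysis in \Cref{sec:contraction} together with the $|C_\tau|$-many notes added by \Cref{alg:sens:pettie}, the number of notes is $O(n)$. Within a single iteration, each note $(\ell(c), v, i, w)$ is replaced by at most two new notes, so a naive count would grow. However, the deduplication on line~11 collapses notes with the same first three coordinates into a single note (keeping the minimum $w$). Any note generated in iteration $i$ has its first coordinate equal to the leader of some cluster formed in an earlier iteration and its second coordinate a leaf of that cluster (i.e., either a leaf of $T$ or a vertex whose child is a cluster root at the relevant level). By \Cref{obs:hierarchical_clustering_size}, the total number of such (leader, leaf, level) triples across the whole hierarchy is $O(n)$, so the deduplicated note set has size $O(n)$ at all times.

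Next, the per-iteration implementation. For the notes keyed to iteration $i$, I would first use sorting and prefix-sum to route each note $(\ell(c), v, i, w)$ to a machine that also holds the sub-cluster structure of $c$. Each sub-cluster $c_i$ of $c$ is identified by the DFS interval label of its leader, and by \Cref{lem:dfs_interval_labeling} the label of $v$ is known; thus a single round of sorting notes together with the interval endpoints of the sub-cluster leaders, followed by a prefix-sum, suffices to attach to each note the unique $c_i$ whose interval contains $v$ (the senior sub-cluster being handled as the default case). Lines~6--9 are then local: each note emits its update to $\mc(\{\ell(c_i), p(\ell(c_i))\})$ (dispatched in $O(1)$ rounds to the dedicated \mc-machines as discussed at the start of this section), and writes out the two replacement notes. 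Line~11 is one more sort on the triple (root, leaf, iteration) plus a prefix-minimum to eliminate duplicates. Each of these steps uses $O(1)$ rounds of sorting or prefix-sum on $O(n+m)$ words.

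The main obstacle is precisely the memory invariant: without the deduplication step, a single note could fan out into $2^{\Theta(\log \dt)} = \poly(\dt)$ descendants, blowing the global memory. The key structural observation making deduplication effective is that any note produced in iteration $i$ is pinned, by construction, to a (leader, leaf) pair of a cluster of level $\le i$; the number of such pairs is bounded by the total size of the cluster-trees, which is $O(n)$ by \Cref{obs:hierarchical_clustering_size}. With this invariant in hand, everything else is a routine composition of sorting, prefix-sum, and local computation, so summing $O(1)$ rounds over the $O(\log \dt)$ iterations of the outer loop yields the claim.
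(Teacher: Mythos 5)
Your implementation details (routing notes to sub-cluster structure via sorting on DFS interval labels, dedicated machines for \mc updates, one sort plus prefix-minimum for deduplication, $O(1)$ rounds per iteration) match the paper's and are fine. You also correctly identify the crux: without deduplication a note fans out into $\poly(\dt)$ descendants, so everything hinges on bounding the number of live notes by $O(n)$.

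However, your argument for that bound has a gap. You claim that the total number of (leader, leaf, level) triples across the whole hierarchy is $O(n)$ by \Cref{obs:hierarchical_clustering_size}. That observation bounds the number of clusters over all levels and the total size of the cluster-trees, which does handle leaves that are parents of cluster roots (each such leaf can be charged to a child cluster at that level, so summing over levels gives $O(n)$). But a vertex $l$ that is a \emph{leaf of $T$} is a leaf of its containing cluster at \emph{every} level of the hierarchy, so the number of (leader, $l$, level) triples with $l$ a leaf of $T$ can be $\Theta(n\log\dt)$, not $O(n)$ — and $O(n\log\dt)$ can exceed the $O(n+m)$ global memory when the graph is sparse and $\dt$ is large. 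Deduplication on identical triples does not save you here, because these are distinct triples. The paper closes this case with a \emph{dynamic} argument rather than a static count: at any point in time there is at most one live root-to-leaf note whose leaf is a given leaf $l$ of $T$, because (a) \Cref{alg:sens:cluster_contractions} creates at most one such note over its whole run (once the covering edge is truncated, cases~4 and~5 of \Cref{def:sensitivity_contraction_process} can never again apply at $l$), and (b) each time a note ending at $l$ is processed in \Cref{alg:sens:undo_contractions}, exactly one replacement note ending at $l$ is produced. You need some argument of this form to complete the memory bound.
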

\begin{proof}
    We start by claiming that, at all times, $N$ contains at most $O(n)$ root-to-leaf notes.

    \begin{claim}
    \label{clm:lem:rtl_notes_always_linear}
        Throughout the execution of \Cref{alg:sens:undo_contractions}, $\size{N} = O(n)$.
    \end{claim}

    As a simple consequence of \Cref{obs:hierarchical_clustering_size}, we can, for each cluster, maintain a record of its subclusters and which iteration they were formed in.
    This information can be stored along with each cluster in $C_i$.
    When evaluating which \mc values to update and how and which root-to-leaf notes to create, machines have this information locally.
    
    We can set aside a set of machines dedicated to keeping track of and updating \mc values for all tree edges.
    Since there are at any time $O(n)$ root-to-leaf notes being dealt with in the for-loop from lines 4--10 (\Cref{clm:lem:rtl_notes_always_linear}), there are at most $O(n)$ updates to these values per loop (line 6).
    These machines can use sorting and prefix-sum to update their stored \mc values in $O(1)$ rounds and $O(n)$ memory.

    \Cref{clm:lem:rtl_notes_always_linear} implies that when deduplicating the root-to-leaf notes (which can be done in $O(1)$ rounds using sorting and prefix-sum), the global memory bounds are not exceeded.

    It remains to prove \Cref{clm:lem:rtl_notes_always_linear}:

    \begin{proof}[Proof~of~\Cref{clm:lem:rtl_notes_always_linear}]
        It suffices to bound the total possible number of non-duplicate root-to-leaf notes at any point, since when we potentially create additional duplicate notes (lines~8 and 9), this is only a constant-factor increase before we remove duplicates again in line 12.

        We will start by bounding the total number of root-to-leaf notes where the ``leaf'' has at least one child in $T$.
        Fix an iteration $i$ and a corresponding cluster decomposition $C_i$, and consider a root-to-leaf note $(r, l, i, w)$.
        All children of $l$ in $T$ (at least one exists by assumption) must be leaders of a cluster in $C_i$.
        Further, note that for each choice of $i$ and $l$ there is a unique $r$ (the root of the cluster containing $l$ in iteration $i$).
        Therefore, we can charge each choice of $i$ and $l$ (where $l$ has a child in $T$) to a vertex in $T_{C_i}$.
        Using \Cref{obs:hierarchical_clustering_size}, there at most $O(n)$ clusters over all levels of the hierarchical clustering.

        It remains to bound the number of root-to-leaf notes where the leaf is also a leaf of $T$.
        Consider one such leaf $l$.
        We claim, that, at any time, that there is at most one root-to-leaf note with $l$ as a leaf.
        This follows from two facts:
        \begin{itemize}
            \item Only one root-to-leaf note containing $l$ is created during \Cref{alg:sens:cluster_contractions}. Once a root-to-leaf note containing $l$ is created (and the edge in question is truncated), case~4 or 5 of \Cref{def:sensitivity_contraction_process} cannot apply to $l$ anymore ($l$ will always be in a junior sub-cluster but will never be the root of it).

            \item When a root-to-leaf note is processed in \Cref{alg:sens:undo_contractions}, exactly one root-to-leaf note containing $l$ is created.
        \end{itemize}
        This completes the proof of the claim. 
    \end{proof}
\end{proof}


Finally, we conclude the section as a whole with the proof of \Cref{thm:mst_sensitivity}:
\begin{proof}[Proof~of~\Cref{thm:mst_sensitivity}]
    By Observations \ref{obs:sensitivity_of_non_tree_edges} and \ref{obs:sensitivity_is_min_covering}, it suffices to compute the values of \mc for each tree-edge.
    It is easy to see that given the \mc values for all non-tree edges, we can compute the \sens values for all non-tree edges in $O(1)$ rounds.

    The correctness of \Cref{alg:sens} follows from \Cref{lem:pettie_is_correct,lem:sens:correct_mc_values}.
    Running time and memory constraints of \Cref{alg:sens} follow from \Cref{lem:sens:computing_cluster_contractions,lem:sens:computing_pettie,lem:sens:computing_mc_values}.

    All steps are deterministic except for the computation of the interval labeling (\Cref{lem:dfs_interval_labeling}), which means that the algorithm as a whole succeeds with high probability.
\end{proof}


\section{Lower Bound for \MST Verification}

In this section we give a simple lower bound result which shows that, conditioned on a popular 1-vs-2-cycle conjecture, our algorithm for \MST verification is optimal.

First, we recall the 1-vs-2-cycle conjecture.

\begin{conjecture}[\textbf{1-vs-2-cycle Conjecture}]
\label{1-vs-2-cycle Conjecture}
Let $\delta < 1$ be an arbitrary positive constant. The problem of distinguishing one cycle of length $n$ from two cycles of length $n/2$ requires $\Omega(\log n)$ rounds on an \MPC with local memory $\lspace = O(n^\delta)$.
\end{conjecture}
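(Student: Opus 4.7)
The plan is to attempt to establish this as an unconditional lower bound via a topological ``ball-growing'' simulation argument: after $R$ rounds of an \MPC with local memory $\lspace = O(n^\delta)$, the contents of any single machine should depend only on a ``combinatorial ball'' of bounded radius around some seed edges, and for the two input distributions (one $n$-cycle vs.\ two $n/2$-cycles) every such ball of radius $o(\log n)$ is identically distributed. First I would fix a worst-case assignment of edges to machines and define, inductively on the round number $r$, a set $S_r(M)$ of input edges on which machine $M$'s state can possibly depend after $r$ rounds. Since each machine sends/receives at most $\lspace = O(n^\delta)$ words per round, the ``influence graph'' on the $N = \Theta(n^{1-\delta})$ machines has out-degree at most $n^{O(\delta)}$, so $|S_r(M)|$ grows at most as $(n^\delta)^{O(r)}$. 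Second, I would use this to argue that for $r = o(\log n)$ we have $|S_r(M)| = n^{o(1)}$, and then take a union bound or probabilistic argument to locate, with positive probability over a random input, a pair of inputs (one cycle, two cycles) that agree on $S_r(M)$ for every machine $M$, forcing the output to be the same on both.

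The key obstacle, and the reason this is stated as a conjecture rather than a theorem, is that the step from ``each machine sees only a small ball'' to ``the small balls are indistinguishable'' breaks down in general: an adversary can use the global memory $\gspace$ and the flexibility of the routing to perform arbitrary (non-local) computation, so $S_r(M)$ need not correspond to any graph-theoretic neighborhood in the input graph. Concretely, the machines can compute arbitrary hash functions of their local edges and exchange the resulting short fingerprints, and a priori $n^{o(1)}$-sized fingerprint information per machine could in principle already encode enough global structure to separate the two inputs. Overcoming this is, by the connection of Roughgarden--Vassilvitskii--Wang, equivalent to proving new circuit complexity lower bounds (of the flavor $\mathsf{NC}^1 \neq \mathsf{P}$), which is beyond current techniques.

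Given this barrier, the realistic version of the proposal is to prove the conjecture only in restricted models: either (i)~for \emph{component-stable} or \emph{oblivious} algorithms, where the ball-growing argument goes through cleanly because the adversary can relabel vertices inside each ball without the algorithm noticing; or (ii)~under a reduction from an already-believed hardness assumption, such as a complexity-theoretic assumption on one-way communication or a conditional hardness of connectivity in low-space \MPC. In both of these restricted settings, indistinguishability of $o(\log n)$-radius neighborhoods in the two input graphs is the heart of the argument, and the $\Omega(\log n)$ bound follows by choosing $r$ just below the threshold where $S_r(M)$ could first contain an entire connected component.

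Thus the honest summary is that I would not expect to resolve the conjecture unconditionally; the best realistic outcome is a proof in a restricted model of \MPC computation, providing evidence that the conjecture holds, while the general statement remains open precisely because it sits behind the well-known barrier that unconditional super-constant \MPC round lower bounds would imply new circuit complexity separations.
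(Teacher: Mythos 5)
The statement you were asked to ``prove'' is a \emph{conjecture}, and the paper does not prove it either: it is the standard 1-vs-2-cycle hardness assumption, stated and then used as a conditional lower-bound hypothesis (in \Cref{thm:verification_lb} and \Cref{clm:mst_hardness}), exactly as it is throughout the low-space \MPC literature. You correctly recognized this, and your proposal is not really a proof attempt so much as an accurate diagnosis of why no unconditional proof is known. Your sketch of the ``ball-growing'' intuition, the identification of the precise point where it breaks down (machines can exchange arbitrary fingerprints, so $S_r(M)$ is not a graph neighborhood), the reference to the Roughgarden--Vassilvitskii--Wang barrier connecting super-constant \MPC round lower bounds to $\mathsf{NC}^1$ vs.~$\mathsf{P}$-style circuit separations, and the fallback to restricted models (component-stable or oblivious algorithms) are all correct and reflect the actual state of the art. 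In short: you were right not to ``prove'' it, the paper also leaves it unproven by design, and your explanation of the obstruction is accurate and would be a reasonable piece of surrounding exposition if the paper chose to include one.
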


\begin{theorem}[\textbf{\MST Verification Lower Bound}]
\label{thm:verification_lb}
Let $\delta < 1$ be an arbitrary positive constant. Let $G = (V, E)$ be an edge-weighted graph. 
%
Let $T \subseteq E$ be a set of $n-1$ edges of $G$ and let $\dt$ be the diameter of $T$.

Deciding whether $T$ is a minimum spanning forest of $G$ requires $\Omega(\log \dt)$ rounds%
, on an \MPC with with local memory $\lspace = O(n^\delta)$, unless the 1-vs-2-cycle conjecture (\Cref{1-vs-2-cycle Conjecture}) is false.
\end{theorem}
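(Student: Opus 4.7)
The plan is to reduce from the 1-vs-2-cycle distinguishing problem of \Cref{1-vs-2-cycle Conjecture}. Given an instance $G$ on $n$ vertices --- either a single cycle of length $n$ or two vertex-disjoint cycles of length $n/2$ --- I would construct an \MST verification instance in $O(1)$ \MPC rounds as follows. Using a single sorting step I pick a canonical edge $e^\star \in E(G)$, say the one whose (ordered) endpoint pair is lexicographically smallest; this takes $O(1)$ rounds using only $\lspace = O(n^\delta)$ local memory. Then I assign weight $1$ to every edge of $G$ and declare the candidate tree to be $T := E(G) \setminus \{e^\star\}$, a set of exactly $n-1$ edges. The reduction uses $O(1)$ rounds and the same local memory bound, so any speed-up for \MST verification carries back to 1-vs-2-cycle.

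Next I would analyse what a correct \MST (equivalently, minimum spanning forest) verifier must output in each case. If $G$ is a single $n$-cycle, deleting $e^\star$ produces a Hamiltonian path on $V$, so $T$ is a spanning tree of $G$; since every edge has weight $1$, every spanning tree of $G$ has total weight $n-1$, so $T$ is (trivially) a minimum spanning tree and the verifier must answer ``\textsf{YES}''. If $G$ is two vertex-disjoint $(n/2)$-cycles, then $e^\star$ belongs to exactly one of them; deleting it turns that cycle into a path on $n/2$ vertices while the other $(n/2)$-cycle remains intact as a subgraph of $T$. In particular $T$ contains a cycle and is therefore not a spanning forest at all, so the verifier must answer ``\textsf{NO}''. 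Thus any correct \MST verification algorithm on $(G, T)$ decides the underlying 1-vs-2-cycle instance.

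Finally I would bound $\dt$. In the one-cycle case $T$ is a Hamiltonian path, so $\dt = n-1 = \Theta(n)$. In the two-cycle case $T$ has two components --- a path on $n/2$ vertices of diameter $n/2-1$, and a cycle on $n/2$ vertices of diameter $\lfloor n/4 \rfloor$ --- so under the standard convention that the diameter of a disconnected graph is the maximum diameter of any component, $\dt = \Theta(n)$. Hence $\log \dt = \Theta(\log n)$ in both cases. If some \MPC algorithm solved \MST verification in $o(\log \dt)$ rounds on instances of this form, then composing the $O(1)$-round reduction with that algorithm would decide 1-vs-2-cycle in $o(\log n)$ rounds on an \MPC with $\lspace = O(n^\delta)$, contradicting \Cref{1-vs-2-cycle Conjecture}.

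There is no deep obstacle in this argument; the only mildly delicate point is how to interpret $\dt$ when $T$ is disconnected in the two-cycle case, and I sidestep this by observing that each component of $T$ already has diameter $\Theta(n)$, so the bound $\dt = \Theta(n)$ holds under either natural convention (whether one defines the diameter of a disconnected graph as the maximum diameter across components, or one considers only the largest component).
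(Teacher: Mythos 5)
Your proof is correct, and it takes a somewhat different route from the paper's. The paper plants $T$ inside a \emph{complete} host graph with unit weights, observes that verification then collapses to deciding connectivity of an $(n-1)$-edge subgraph, and invokes the existing conditional $\Omega(\log \dt)$ lower bound for connectivity from \cite{CoyC23}; you instead build the verification instance directly on top of the 1-vs-2-cycle instance itself, taking $G$ to be the cycle(s) and $T := E(G)\setminus\{e^\star\}$, so that the YES/NO answer (spanning tree vs.\ contains a cycle) is exactly the one-vs-two-cycle distinction. Your argument is more self-contained (no intermediate connectivity lower bound needed) and all the steps check out: $|T| = n-1$, the YES case is a Hamiltonian path and hence trivially an \MST, the NO case is not even a forest, and the reduction costs $O(1)$ rounds. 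The trade-off is in the family of hard instances produced. In the paper's construction $G$ is complete, so $D = \dmst = 1$ while $\dt = \Theta(n)$; this makes explicit that the hardness is governed by $\dt$ alone and not by the diameter of $G$ or of an \MST. In your construction $D_G = \dt = \Theta(n)$, so on your instances the bound could equally be read as $\Omega(\log D_G)$. Relatedly, both proofs as written only exhibit hard instances with $\dt = \Theta(n)$, which suffices to rule out an $o(\log\dt)$-round algorithm (the literal claim); but the paper's route, by piggybacking on the connectivity lower bound, more directly inherits hard instances across the whole range of $\dt$ (via the standard padding with disjoint short cycles), which you would have to add by hand if you wanted instance families for every value of $\dt$.
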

\begin{proof}

    Let $G$ be a complete graph on some vertex set $V$, and suppose all edges in $G$ have weight $1$.
    Let $T \subset G$ be a spanning subgraph of $G$ with $n-1$ edges.
    Note that $D = \dmst = 1$, but $\dt$ might be much larger.

    Since all graphs with $n-1$ edges have the same weight, the problem of determining whether $T$ is a minimum spanning tree of $G$ is reduced to determining whether $T$ is connected.
    Suppose there is a $o(\log \dt)$ round algorithm for deciding whether $T$ is an \MST of $G$.
    This would imply that we can decide whether $T$ is connected in $O(1)$ rounds.
    However, by \cite{CoyC23}, this requires $\Omega(\log \dt)$ rounds, conditioned on the 1-vs-2-cycle conjecture.
\end{proof}

\bibliographystyle{alpha}
\bibliography{references}

\appendix

\section{Conditional MST Lower Bound}
\label{app:mst_lower_bound}

For the sake of completeness of our arguments, we sketch a simple proof of a claim that, conditioned on the 1-vs-2-cycles conjecture, there is no $o(\log n)$-rounds \MPC algorithm with $\lspace = O(n^{\delta})$ local memory and polynomial global memory $\gspace = \poly(n)$ for the minimum spanning tree problem.

\begin{claim}
\label{clm:mst_hardness}
Conditioned on the 1-vs-2 cycles \Cref{1-vs-2-cycle Conjecture}, there is no $o(\log \dmst)$-rounds \MPC algorithm for finding a minimum spanning tree, on sublinear \MPC with polynomial global memory.
\end{claim}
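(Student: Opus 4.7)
The plan is to reduce 1-vs-2-cycle to MST via a single-vertex augmentation gadget, in the same spirit as the verification lower bound in \Cref{thm:verification_lb} but with weights that force MST computation to reveal the number of components. Given a 1-vs-2-cycle input $G$ on $n$ vertices (either an $n$-cycle or two vertex-disjoint $(n/2)$-cycles), I would build an auxiliary instance $G'$ by introducing a single fresh vertex $r$ together with a ``heavy'' edge of weight $n+1$ from $r$ to every vertex of $G$; the original cycle edges keep unit weight. This construction can be performed in $O(1)$ \MPC rounds and blows up the input by only an additive $O(n)$ in both local and global memory, so the reduction stays within the sublinear-local, polynomial-global regime.

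Next I would argue that the MST of $G'$ encodes the 1-vs-2-cycle answer. Because cycle edges have weight $1$ and $r$-incident edges weight $n+1$, any MST uses exactly $k$ heavy edges, where $k$ is the number of connected components of $G$: in the one-cycle case $\deg_{\mathrm{MST}}(r)=1$, while in the two-cycle case $\deg_{\mathrm{MST}}(r)=2$. Computing $\deg_{\mathrm{MST}}(r)$ (equivalently, counting heavy edges) can be done in $O(1)$ extra \MPC rounds by a single round of sorting/prefix-sum on the edges output by the MST algorithm, so the reduction adds only $O(1)$ rounds on top of any MST subroutine.

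The critical parameter is $\dmst$, which I claim is $\Theta(n)$ in both cases: in the one-cycle case the MST is a Hamiltonian path of length $n-1$ (one cycle edge dropped) with $r$ attached as a leaf, giving diameter $\Theta(n)$; in the two-cycle case it is two paths of length $n/2-1$ joined through $r$, again giving diameter $\Theta(n)$. Consequently $\log \dmst = \Theta(\log n)$. If an \MPC algorithm in the stated model computed MST in $o(\log \dmst)$ rounds, composing it with the $O(1)$-round reduction above would solve 1-vs-2-cycle in $o(\log n)$ rounds, contradicting \Cref{1-vs-2-cycle Conjecture}.

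I do not foresee a real technical obstacle: the reduction is essentially one gadget, and the only point worth double-checking is that $\dmst = \Theta(n)$ rather than something smaller that would make the $\Omega(\log \dmst)$ bound too weak to conclude. This follows immediately from the explicit MST structure above, so the conditional lower bound drops out of the reduction.
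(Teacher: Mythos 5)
Your proposal is correct and matches the paper's argument: both attach a single apex vertex to every vertex of the 1-vs-2-cycle instance via heavier edges, observe that the resulting $\dmst$ is $\Theta(n)$, and read off the number of cycles from the computed MST in $O(1)$ extra rounds. The only cosmetic difference is that the paper sets the apex-edge weight to $2$ and distinguishes the cases by total MST weight ($n+1$ vs.\ $n+2$), whereas you use weight $n+1$ and count the apex's MST degree; these are interchangeable.
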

\begin{proof}
Let $G = (V,E)$ be an input graph with the promise that $G$ is either an $n$-vertex cycle or it consists of two disjoint $\frac{n}{2}$-vertex cycles. Define a new weighted graph $G^*$ with vertex set $V^* = V \cup \{v^*\}$ with $v^* \notin V$ and edge set $E^* = E \cup \{\{v^*,u\}: u \in V\}$, and with edge weight $w: E^* \rightarrow \mathcal{R}$ defined as follows:
\begin{align*}
    w(e) & =
    \begin{cases}
        1 & \text{ if } e \in E                      \\
        2 & \text{ if } e \in \{\{v^*,u\}: u \in V\} \\
    \end{cases}
\end{align*}
Notice that $G^*$ is connected, has $n+1$ vertices, $2n$ edges, and diameter $2$. The diameter of a minimum spanning tree is $\Theta(n)$ (whether $G$ is one or two cycles).

Next, observe that if $G$ is an $n$-vertex cycle then a minimum spanning tree algorithm, when run on $G^*$, returns a spanning tree of weight $n+1$, but if $G$ consists of two disjoint $\frac{n}{2}$-vertex cycles, then we get a spanning tree of weight $n+2$.

This implies that if there is a minimum spanning tree algorithm which runs in $o(\log n)$ rounds (even for constant diameter graphs) on sublinear \MPC with polynomial global memory then the 1-vs-2-cycles conjecture fails.

This does not preclude the existence of an $O(\log \dmst)$ round algorithm for the problem (in the proof, $O(\log \dmst) = \Omega(\log n)$).
In fact, such an algorithm exists with larger global memory \cite{ASSWZ18,CoyC23}.
The proof does rule out the existence of an $o(\log \dmst)$ algorithm however, since $o(\log \dmst) = o(\log n)$ for all values of \dmst. 
\end{proof}

\end{document}